%% file: main.tex
\documentclass{article}

\PassOptionsToPackage{numbers}{natbib}
\usepackage[preprint]{neurips_2026}
\makeatletter
\renewcommand{\@noticestring}{Preprint. September 27, 2026.}
\makeatother

\usepackage[utf8]{inputenc} 
\usepackage[T1]{fontenc}    
\usepackage{hyperref}       
\usepackage{url}            
\usepackage{booktabs}       
\usepackage{amsfonts}       
\usepackage{microtype}     
\usepackage{xcolor}        
\usepackage{amsthm}
\usepackage{algorithm}
\usepackage{algorithmic}
\usepackage[most]{tcolorbox}
\usepackage{amssymb}
\usepackage{wrapfig}
\usepackage{graphicx}
\usepackage{subcaption}
\usepackage{etoc}
\usepackage{amsmath}

\definecolor{algblue}{RGB}{35,95,155}
\newcommand{\algcomment}[1]{\hfill{\color{gray}\(\triangleright\) #1}}

\newtcolorbox{remarkbox}{
  colback=blue!6,
  colframe=gray!40,
  boxrule=0.6pt,
  arc=2pt,
  left=6pt,right=6pt,top=4pt,bottom=4pt
}

\title{Geometry-Adaptive Mechanisms \\ for Private Synthetic Data}

\author{
  \parbox{\textwidth}{\centering
    \mbox{Raoof Zare Moayedi$^{1}$}\enspace
    \mbox{Amir R.~Asadi$^{2}$}\thanks{Work carried out while at the University of Cambridge.}\enspace
    \mbox{Mohammad Hossein Yassaee$^{1}$}\enspace
    \mbox{Gholamali Aminian$^{3}$}\\[0.4em]
    $^{1}$Sharif University of Technology \quad
    $^{2}$University of Birmingham \quad
    $^{3}$The Alan Turing Institute\\
    \texttt{raoofmoayedi2000@gmail.com, a.r.asadi@bham.ac.uk,  yassaee@sharif.edu, gaminian@turing.ac.uk
            }
  }
}
\newtheorem{theorem}{Theorem}
\newtheorem{lemma}{Lemma}
\newtheorem{proposition}{Proposition}
\newtheorem{corollary}{Corollary}
\newtheorem{assumption}{Assumption}

\newtheorem{remark}{Remark}

\usepackage[colorinlistoftodos]{todonotes}

\begin{document}

\maketitle

\begin{abstract}
    Generating differentially private synthetic data with meaningful Wasserstein utility guarantees is challenging in high dimensions. For datasets of size \(n\) on $[0,1]^d$ with $d\ge2$, existing pure \(\varepsilon\)-differentially private mechanisms achieve expected $1$-Wasserstein error of order $(\varepsilon n)^{-1/d}$, reflecting the curse of dimensionality. While this rate is optimal in the worst case, it can be overly pessimistic when the data are supported on a lower-dimensional set. We formalize this through a multiscale packing-growth dimension $k$, which captures the geometric complexity of the support via the growth of packing numbers across scales. We propose \emph{Adaptive Pruned-PMM}, a pure $\varepsilon$-differentially private mechanism that combines private depth selection with our pruned variant of the Private Measure Mechanism (PMM) of He et al.\ (2023). The mechanism supports deeper, geometry-adapted hierarchies with expected running time $O\!\left(d(n+d)\log(\varepsilon n)\right)$, which is near-linear in $n$ for fixed dimension and privacy budget. Under an external multiscale packing-growth condition with dimension $k$, we show that, for fixed positive privacy budgets and fixed geometry, the expected $1$-Wasserstein error is of order $(\varepsilon n)^{-1/k}$ for $k>1$ as $n$ grows. We also prove a lower bound under a corresponding internal packing-growth condition, showing that the exponent $1/k$ is sharp within this framework.
\end{abstract}

\section{Introduction}
Differential privacy provides a rigorous framework for releasing information
about a dataset while limiting the influence of any single data point
\citep{DworkMcSherryNissimSmith2006Calibrating,DworkRoth2014Algorithmic}.

A central goal within this framework is the private release of a synthetic dataset: an
artificial sample that can be shared, reused, and analyzed in place of the
original data. This has motivated a broad literature on
private synthetic data generation
\citep{
BoedihardjoStrohmerVershynin2022PrivateSampling,
BoedihardjoStrohmerVershynin2024PrivateMeasures,
He2023Algorithmically,Donhauser2024Certified,
HeStrohmerVershyninZhu2025LowDimensional}.

Much of the differential privacy literature focuses on specific learning,
statistical, or query-release tasks, including empirical risk minimization,
clustering, parameter estimation, stochastic gradient descent, deep learning,
counting queries, range queries, and marginals
\citep{BassilySmithThakurta2014PrivateERM,ChaudhuriMonteleoniSarwate2011DPERM,
SuCaoLiBertinoJin2016KMeans,DuchiJordanWainwright2018Local,
SongChaudhuriSarwate2013SGD,AbadiEtAl2016DeepLearning,
BlumDworkMcSherryNissim2005SULQ,HardtTalwar2010Geometry,
HardtLigettMcSherry2012DataRelease,BlumLigettRoth2013Learning,
UllmanVadhan2011Hardness,DworkNikolovTalwar2015Marginals,
Vietri2022PrivateSynthetic}. These approaches are powerful, but their
guarantees are typically tied to a prescribed task or query. 

Synthetic data offers a complementary route: once a differentially private
synthetic dataset is released, downstream analyses can be performed without
additional privacy loss
\citep{WassermanZhou2010Framework,BellovinDuttaReitinger2019Synthetic}.
Practical private synthetic-data methods include Bayesian-network approaches,
marginal-based systems, and generative modeling approaches
\citep{Zhang2017PrivBayes,McKennaMiklauSheldon2021Winning,
JordonYoonVanderSchaar2019PATEGAN}. However, many such methods measure utility
through histograms, marginals, or task-specific discrepancies, which may not
fully exploit the geometry of the data domain.

In this paper, we study private synthetic data under Wasserstein utility. Given
original data $X=(x_1,\dots,x_n)$ and synthetic data $Y=(y_1,\dots,y_m)$, let
$
\mu_X=\frac1n\sum_{i=1}^n\delta_{x_i},
\,\,
\mu_Y=\frac1m\sum_{i=1}^m\delta_{y_i}
$
be their empirical measures. We measure utility by the
\(1\)-Wasserstein distance \(W_1(\mu_X,\mu_Y)\). By Kantorovich--Rubinstein duality, controlling \(W_1(\mu_X,\mu_Y)\) gives
uniform control over all \(1\)-Lipschitz statistics, rather than a fixed query class. This is useful because many statistical and machine
learning procedures have Lipschitz or stability properties in metric spaces
\citep{VonLuxburgBousquet2004Lipschitz,BubeckSellke2021Robustness,
MeunierDelattreAraujoAllauzen2022LipschitzNN}.

A natural way to obtain Wasserstein-accurate synthetic data is to describe the
dataset at multiple geometric scales. Hierarchical and multiscale partitions
also appear in non-private approximation of distributions under the Wasserstein distance
\citep{BaNguyenNguyenRubinfeld2011EMD,WeedBach2019Wasserstein}. 

In the private
setting, the Private Measure Mechanism (PMM) of \citet{He2023Algorithmically}
builds a binary hierarchy of cells, perturbs the hierarchical counts, enforces
consistency across the tree, and then samples synthetic points from the leaves.

On the ambient cube $[0,1]^d$, the PMM of \citet{He2023Algorithmically} is $\varepsilon$-differentially private and achieves expected $1$-Wasserstein error of order $(\varepsilon n)^{-1/d}$ for $d\ge2$, reflecting the curse of dimensionality.

However, many data sets do not fill the ambient cube. They may concentrate near
a curve, a surface, a union of low-dimensional pieces, or another structured
support. In such cases, most cells in a fine partition are empty. The
standard PMM analysis is governed by the full hierarchy, and the usual
implementation traverses that hierarchy, even when the input dataset occupies only a small portion of it. This suggests that for such data, the relevant complexity should
be the multiscale growth of occupied cells, not the ambient dimension alone.

\subsection{Our Approach} 
We formalize this idea through a \textit{multiscale packing-growth} condition.
Informally, at scale \(2^{-j/d}\), the number of partition cells near the data grows at most like \(2^{jk/d}\). We call \(k\) the packing-growth dimension. The dimension \(k\in\{1,\dots,d\}\) summarizes the multiscale geometric
complexity of the support: when \(k\ll d\), the support occupies
far fewer cells across scales than the ambient cube.

Our starting point is a pruned variant of PMM, which we call
\emph{Pruned-PMM}. Like PMM, it perturbs hierarchical counts, enforces
consistency, and generates synthetic data from leaf cells. The tree is explored recursively: each noisy sibling pair is
corrected immediately, and only children with positive
corrected counts are expanded. Thus the mechanism 
follows the active part of
the hierarchy rather than materializing the full 
tree.

\begin{wrapfigure}{r}{0.48\textwidth}
\vspace{-1.2em}
\centering
\includegraphics[width=0.47\textwidth]{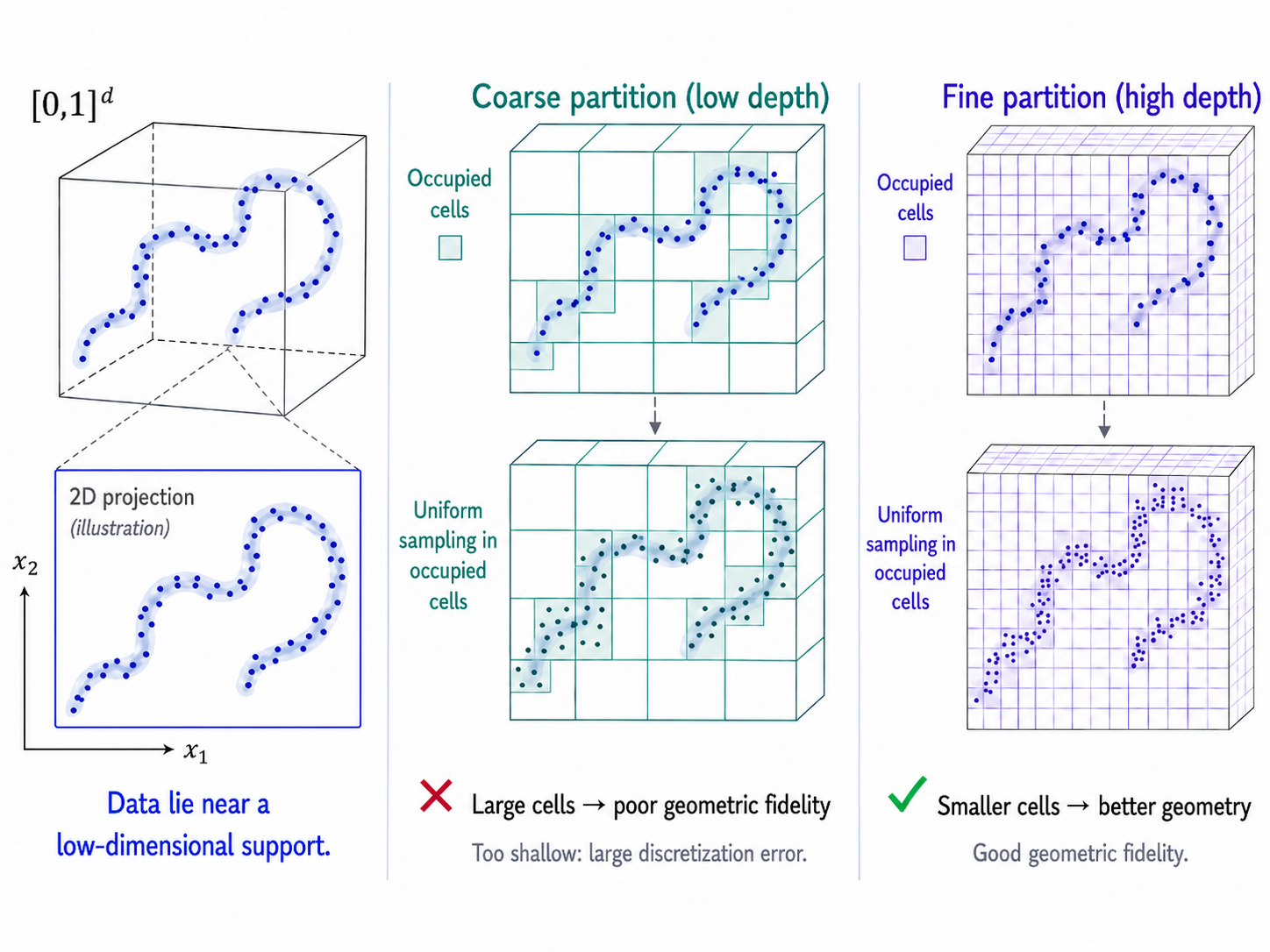}
\vspace{-0.8em}
\caption{
Coarse partitions lead to poor
geometric fidelity, finer partitions better track the support but introduce more noisy counts.
}
\label{fig:intro-depth-tradeoff}
\vspace{-1.2em}
\end{wrapfigure}
The depth of the hierarchy remains crucial, but the optimal depth depends on
the unknown multiscale geometry of the data. This depth sets the
resolution--noise tradeoff. Choosing it directly from the
occupied-cell profile would itself reveal information about the data. We therefore introduce \emph{Adaptive Pruned-PMM}, which combines Pruned-PMM with
a private depth-selection step. The mechanism fixes a finite public set \(\mathcal A\) of depth--noise schedules,
uses the exponential mechanism to select one element of \(\mathcal A\) according
to an occupancy-sensitive score, and then runs Pruned-PMM once with the selected
schedule. With this method, the algorithm adapts to the
multiscale geometry of the empirical support without requiring the packing-growth
dimension to be known in advance.

\paragraph{Main results (informal).}
Suppose the occupied hierarchy has packing-growth dimension $k$. Then, with a total privacy budget $\varepsilon$, we show that
Adaptive Pruned-PMM is $\varepsilon$-differentially private. For fixed positive privacy budgets and fixed geometry, it achieves, without knowing $k$,

$
\mathbb E W_1(\mu_X,\mu_Y)
\lesssim
(\varepsilon n)^{-1/k}
$ for $k>1$ as $n$ grows.
When $k=d$, this recovers the
ambient-dimensional PMM exponent, while when $k\ll d$, the rate improves drastically by
replacing the ambient dimension with the packing-growth dimension. The upper bound assumes that, at each tree level \(j\), the number of occupied
cells grows no faster than \(2^{jk/d}\). Conversely, our lower bound assumes that the support contains packings of size at least \(t^{-k}\) at small scales. Under this matching lower growth condition, we show that every $\varepsilon$-differentially private synthetic-data mechanism must incur Wasserstein error of order at least \((\varepsilon n)^{-1/k}\), up to the natural diameter cutoff. Thus, the exponent \(1/k\) is not an artifact of Pruned-PMM; it is forced by the geometry of the support.

\paragraph{Our contributions.}
\begin{enumerate}
    \item We introduce \emph{Pruned-PMM}, a zero-respecting sparse-tree variant of PMM that explores only active branches of the hierarchy, reducing runtime and memory relative to the full-tree mechanism. 

    \item We prove geometry-adaptive upper bounds for Pruned-PMM under an external multiscale packing-growth condition, showing that the ambient-dimensional dependence can be replaced by a bound driven by the packing-growth dimension \(k\).

    \item We introduce \emph{Adaptive Pruned-PMM}, which privately selects a depth--noise schedule from a public candidate set via the exponential mechanism, and we prove that its error is bounded by the best fixed-schedule guarantee in the candidate set plus an explicit private-selection penalty. Including private selection and synthesis, its expected running time
is $O\!\left(d(n+d)\log(\varepsilon n)\right)$.

    \item We prove a lower bound under a corresponding internal packing-growth condition, showing that the exponent \(1/k\) is sharp within this framework. For finite unions of \(k\)-dimensional coordinate cubes with \(k\ge2\), we establish that Pruned-PMM attains the minimax-optimal rate \((\varepsilon n)^{-1/k}\).
\end{enumerate}

\subsection{Related Work}
Several recent works study differentially private synthetic data with
geometric or Wasserstein-type utility, including private sampling, private
measures, and metric-geometry formulations of the
privacy--utility tradeoff
\citep{BoedihardjoStrohmerVershynin2023PrivacyStatistical,
BoedihardjoStrohmerVershynin2022PrivateSampling,
BoedihardjoStrohmerVershynin2024PrivateMeasures,
BoedihardjoStrohmerVershynin2024CovarianceLoss,
BoedihardjoStrohmerVershynin2024MetricGeometry}. Related works study private
release for smooth queries, Gibbs or exponential mechanism approaches, private
density estimation, and certified release for sparse Lipschitz functions
\citep{WangJinFanZhangHuangZhongWang2016SmoothQueries,
AsadiLoh2023Gibbs,BojkovicLoh2024PrivateDensity,Donhauser2024Certified}.

Recent works also exploit structure beyond the ambient-dimensional worst
case. One approach is to first reduce dimension privately, for example by
private PCA, and then apply a lower-dimensional synthetic-data mechanism
\citep{HeStrohmerVershyninZhu2025LowDimensional}.
\citet[Appendix~D, Theorem~3]{Donhauser2024Certified} also adapt
to unknown support structure under full \(W_1\) loss, obtaining
a leading sample-size exponent \(1/(k+1)\) under a discretized-support
growth condition of exponent \(k\).
Our multiscale analysis controls transport error through the
occupied-cell counts at each level and yields the leading exponent
\(1/k\) for Adaptive Pruned-PMM when \(k>1\), under our packing-growth
condition and for fixed \(d\), geometric constants, and positive
privacy budgets.

\textbf{Runtime and refinement depth.}
For PMM, \citet{He2023Algorithmically}
report $O(\varepsilon dn)$ runtime at the ambient-depth choice
$r\approx\log_2(\varepsilon n)$, while their full-tree implementation
materializes $\Theta(2^r)$ cells.
The support-adaptive construction of
\citet[Appendix~D]{Donhauser2024Certified} uses discretized
histograms and optimization, without establishing a comparable
near-linear runtime guarantee.
Our implementation has expected synthesis time $O(nr+dn)$ at
depth $r$, and Adaptive Pruned-PMM has expected total running time
$O\!\left(d(n+d)\log(\varepsilon n)\right)$, including private
selection.
These bounds allow deeper, geometry-adapted hierarchies while
maintaining near-linear expected runtime for fixed dimension
and positive privacy budgets.
\subsection{Organization}
The rest of the paper is organized as follows. Section~\ref{sec:preliminary}
introduces the preliminaries and notation. Section~\ref{sec:pmm} reviews the
PMM. Section~\ref{sec:adaptive} presents Pruned-PMM and
Adaptive Pruned-PMM. Section~\ref{sec:guarantees} states the upper bound guarantees,
and Section~\ref{sec:lower} gives the lower bound. Section~\ref{sec:experiments}
reports empirical results, and Section~\ref{sec:conclusion} concludes.

\section{Preliminaries}
\label{sec:preliminary}
\paragraph{Notation and setup.}
For an integer $n$, write $[n]=\{1,\dots,n\}$. For any real number $x$, let $(x)_+:=\max\{x,0\}$. For a finite multiset $A$, $|A|$
denotes its cardinality with multiplicity. For a binary string $\theta$,
$|\theta|$ denotes its length; the empty string is denoted by $\varnothing$.
For a set \(S\subseteq\Omega\), write
\(\operatorname{dist}_\infty(z,S):=\inf_{s\in S}\|z-s\|_\infty\).
We work on $\Omega=[0,1]^d$ equipped with
$\rho(x,y)=\|x-y\|_\infty$, so $\operatorname{diam}(\Omega)=1$. For $n\ge1$, a
dataset is an ordered tuple $X=(x_1,\dots,x_n)\in\Omega^n$. We write \(S_X=\{x_1,\dots,x_n\}\) for the empirical support, ignoring
multiplicity when \(S_X\) is used inside distances or packing numbers. We write
$\mu_X:=\frac1n\sum_{i=1}^n\delta_{x_i}$ for its empirical measure. A
synthetic-data mechanism may output $Y=(y_1,\dots,y_m)\in\Omega^m$ of random size \(m\ge1\), with empirical measure
$\mu_Y:=\frac1m\sum_{i=1}^m\delta_{y_i}$. Utility is measured by
$W_1(\mu_X,\mu_Y)$.

\paragraph{Wasserstein utility.}
For probability measures $\mu,\nu$ on $\Omega$, we use the dual form
\[
W_1(\mu,\nu)
:=
\sup_{f \rm{~is~} 1-\rm{Lipschitz}}
\left|
\int f\,d\mu-\int f\,d\nu
\right|.
\]
Therefore, controlling $W_1(\mu_X,\mu_Y)$ uniformly controls the error of every
1-Lipschitz statistic on the original and synthetic empirical distributions.

\paragraph{Differential privacy.}
Two datasets $X,X'\in\Omega^n$ are adjacent,
written $X\sim X'$, if they differ in exactly one coordinate. A randomized
mechanism $\mathcal M$ is $\varepsilon$-differentially private if, for every
adjacent pair $X\sim X'$ and every measurable event $E$ in the output space,
\[
\Pr(\mathcal M(X)\in E)\le e^\varepsilon \Pr(\mathcal M(X')\in E)
.\]
We use the standard facts that post-processing preserves differential privacy
and privacy losses add under sequential composition.

\paragraph{Packing numbers.}
For a set $A\subseteq\Omega$ and a scale $t>0$, let
$N_{\mathrm{pack}}(A,t)$ denote the largest cardinality of a subset
$T\subseteq A$ such that $\|u-v\|_\infty\ge t$ for all distinct $u,v\in T$.

We also use the following definition of external packing number.
For a finite multiset $S\subseteq\Omega$ and parameters $\eta,\xi>0$, define
\[
N_{\mathrm{pack}}^{\mathrm{ext}}(S;\eta,\xi)
:=
\max\Bigl\{
|T|:\ T\subseteq\Omega,\ 
\operatorname{dist}_\infty(z,S)\le \eta\ ,\,\, \forall z,z'\in T, z\neq z',\ 
\|z-z'\|_\infty\ge \xi\ 
\Bigr\}.
\]

Equivalently, this is the packing number of the $\eta$-neighborhood
$S^{(+\eta)}:=\{z\in\Omega:\operatorname{dist}_\infty(z,S)\le \eta\}$ at
separation $\xi$. The upper-bound results impose this external packing condition on the realized
empirical support \(S_X\), not on any underlying population distribution.

\paragraph{Packing-growth dimension.}
The empirical support \(S_X\) has packing-growth dimension \(k\in[d]\) if
$
N_{\rm pack}^{\rm ext}(S_X;\eta,\eta/2)\lesssim \eta^{-k},
\,\, 0<\eta\le1.
$
The implicit constant is a geometric regularity constant, independent of
\(n\), \(\varepsilon\), and the algorithmic depth. In the main text, we suppress
this constant and refer to \(k\) as the packing-growth dimension. The appendix
states the corresponding bounds with the explicit constant \(C_{\rm pack}\).
\paragraph{Discrete Laplace noise.}
For $\sigma>0$, let $\mathrm{LapZ}(\sigma)$ denote the integer-valued discrete
Laplace distribution with probability mass function $\Pr(\lambda=z)
=
\frac{1-e^{-1/\sigma}}{1+e^{-1/\sigma}}e^{-|z|/\sigma},
 z\in\mathbb Z.
$
We use independent discrete Laplace variables to perturb integer counts.

\section{Background on the Private Measure Mechanism}
\label{sec:pmm}
The Private Measure Mechanism (PMM) of \citet{He2023Algorithmically} is our
full-tree baseline. PMM generates synthetic data by privatizing hierarchical
counts on a complete depth-\(r\) binary tree: it builds a binary partition,
adds independent noise to all tree counts, enforces consistency across the
tree, and samples synthetic points from the final leaves. Pruned-PMM keeps this
count-perturbation, but changes consistency structure and how the tree is
traversed.
\paragraph{Binary hierarchy.}
Fix a depth $r\ge 1$. PMM builds cells
$\{\Omega_\theta\}_{\theta\in\{0,1\}^{\le r}}$ with root
$\Omega_\varnothing=\Omega$. Each internal cell is split into two children by
bisecting one coordinate, cycling through coordinates with depth. Given
$X=(x_1,\dots,x_n)$, define
$
n_\theta:=\#\{i\in[n]:x_i\in\Omega_\theta\}
$
for $|\theta|\le r$. These counts are consistent:
$n_\theta=n_{\theta0}+n_{\theta1}$. We write
$
\tau_j:=\{\theta\in\{0,1\}^j:n_\theta>0\}
$
and \(O_j:=|\tau_j|\) for the occupied cells and their count at level \(j\).
\paragraph{Noisy counts.}
PMM perturbs every tree count independently. If $|\theta|=j$, it samples
$\lambda_\theta\sim\mathrm{LapZ}(\sigma_j)$ and sets 
$
n'_\theta=(n_\theta+\lambda_\theta)_+.
$
The truncation makes counts nonnegative, but the noisy family is generally not
consistent: \(n'_\theta\neq n'_{\theta0}+n'_{\theta1}\).
\paragraph{Consistency correction.}
PMM restores consistency by a top-down pass. Starting from
\(m_\varnothing=n'_\varnothing\), each internal node \(\theta\) assigns
nonnegative child counts \(m_{\theta0},m_{\theta1}\) such that
$
m_{\theta0}+m_{\theta1}=m_\theta.
$
The chosen child vector is required to be comparable to the noisy child vector
\((n'_{\theta0},n'_{\theta1})\): either
\(m_{\theta b}\le n'_{\theta b}\), or
\(m_{\theta b}\ge n'_{\theta b}\) for both \(b\in\{0,1\}\). This produces a consistent nonnegative count family
\((m_\theta)\).
\paragraph{Synthetic data.}
Finally, for each leaf $\Omega_\theta$, $|\theta|=r$, PMM places $m_\theta$
synthetic points inside $\Omega_\theta$ using a fixed rule independent of the
original data, such as the cell center. The union of these points is the synthetic
dataset $Y$.

\section{Algorithms}
\label{sec:adaptive}
In this section, we first present the 
fixed-depth Pruned-PMM algorithm, and then describe a private
selection rule for choosing the depth and the levelwise noise schedule. Pruned-PMM
keeps the same multiscale accuracy mechanism as PMM, but avoids materializing
branches that are pruned, improving runtime and memory usage. The depth controls
 the usual resolution--noise tradeoff: increasing \(r\) makes the leaf cells
smaller and reduces discretization error, but it also introduces noisy counts at
more levels. Under packing-growth dimension \(k\), the number of occupied cells at depth \(j\) is controlled by
\(2^{jk/d}\), rather than by the full tree size \(2^j\). This suggests the
geometry-adapted depth \(r\asymp d\log_2(\varepsilon n)\) when \(k=1\), and
$
r\asymp
\frac{d}{k}
\log_2\!\left(
1+\varepsilon n\left(\frac{k-1}{d}\right)^2
\right)
$
when \(k>1\).

The packing-growth dimension \(k\), and hence the best depth, is generally
unknown. Directly choosing the depth from the true occupied-cell profile would be
data-dependent and cannot be done for free under differential privacy. Our adaptive mechanism handles this by fixing a finite public set \(\mathcal A\)
of optimized depth--schedule pairs and using the exponential mechanism with a
data-dependent, sensitivity-controlled occupancy score to select one pair
privately. It then
runs fixed-depth Pruned-PMM once with the selected schedule.

\subsection{Fixed-Depth Pruned-PMM}

Assume first that the packing-growth dimension is known, so that the
depth and noise scales are chosen in advance. Pruned-PMM generates the
two noisy child counts at each active node and immediately applies
the zero-respecting correction in Algorithm~\ref{alg:pruned_consistency}.
Only children with positive corrected counts remain active.
Thus no subtree with zero corrected mass is constructed.
The multiscale accuracy guarantee is preserved.
Algorithm~\ref{alg:pruned-pmm} gives the fixed-depth subroutine.

\begin{algorithm}
\caption{Fixed-Depth Pruned-PMM}
\label{alg:pruned-pmm}
\begin{algorithmic}[1]
\REQUIRE Data $X\in\Omega^n$, depth $r$, noise scales $\sigma_0,\dots,\sigma_r>0$
\ENSURE Synthetic data $Y$

\STATE $A_0\gets\{\varnothing\}$ and
$n'_\varnothing\gets(n+\lambda_\varnothing)_+$, where
$\lambda_\varnothing\sim\mathrm{LapZ}(\sigma_0)$.
\algcomment{\(\varnothing\) is the root node}
\STATE $m_\varnothing\gets\max\{1,n'_\varnothing\}$.
\FOR{$j=0,\dots,r-1$}
    \STATE $A_{j+1}\gets\emptyset$.
    \algcomment{initialize the next active level}
    \FOR{each $\theta\in A_j$}
        \STATE For $b\in\{0,1\}$, set
        $n'_{\theta b}\gets(n_{\theta b}+\lambda_{\theta b})_+$, where
        $n_{\theta b}:=\#\{i:x_i\in\Omega_{\theta b}\}$ and
        $\lambda_{\theta b}\sim\mathrm{LapZ}(\sigma_{j+1})$ independently.
        \algcomment{only expand active nodes}
        \STATE $(m_{\theta0},m_{\theta1})\gets
\textsc{Pruned Consistency}(m_\theta,n'_{\theta0},n'_{\theta1})$.
\STATE $\mathrm{Child}(\theta)\gets
\{\theta b:b\in\{0,1\},\ m_{\theta b}>0\}$.
        \algcomment{prune zero corrected counts}
        \STATE $A_{j+1}\gets A_{j+1}\cup\mathrm{Child}(\theta)$.
    \ENDFOR
\ENDFOR

\STATE For each active leaf $\theta\in A_r$, place $m_\theta$ synthetic points in
$\Omega_\theta$; output their union as $Y$.
\end{algorithmic}
\end{algorithm}

Algorithm~\ref{alg:pruned_consistency} is a local correction applied
during tree construction. Its fixed, constant-time rule uses only
the parent mass, the two noisy child counts, and auxiliary randomness
independent of the data. Every inactive node and its descendants
have corrected count zero.

\begin{algorithm}
\caption{Pruned Consistency}
\label{alg:pruned_consistency}
\begin{algorithmic}[1]
\REQUIRE Parent mass $M\in\mathbb Z_{>0}$ and noisy child counts
$(a_0,a_1)\in\mathbb Z_+^2$
\ENSURE Corrected counts $(u_0,u_1)\in\mathbb Z_+^2$
with $u_0+u_1=M$
\IF{exactly one of $a_0,a_1$ is positive}
    \STATE Let $b$ satisfy $a_b>0$; set $u_b\gets M$ and $u_{1-b}\gets0$.
\ELSIF{$a_0=a_1=0$}
    \STATE Choose $b$ uniformly from $\{0,1\}$; set
    $u_b\gets M$ and $u_{1-b}\gets0$.
\ELSE
    \STATE Use the fixed local rule to choose $(u_0,u_1)$
    comparable to $(a_0,a_1)$ with $u_0+u_1=M$.
\ENDIF
\STATE Output $(u_0,u_1)$.
\end{algorithmic}
\end{algorithm}

\subsection{Adaptive Pruned-PMM}
In general, the packing-growth dimension, and hence the appropriate depth, is
not known. The best depth depends on the multiscale occupied-cell counts, which
are data-dependent. Adaptive Pruned-PMM therefore splits the privacy budget as
\(\varepsilon=\varepsilon_{\rm sel}+\varepsilon_{\rm main}\). The first part is
used to select a depth--schedule pair by the exponential mechanism;
the second part is used to run Pruned-PMM once with the selected schedule. Algorithm~\ref{alg:depth-select} gives the private selection step; the full
adaptive mechanism runs Algorithm~\ref{alg:depth-select} and then calls
Algorithm~\ref{alg:pruned-pmm} with the selected depth and schedule, root noise scale
\(\sigma_0=\varepsilon_{\rm main}^{-1}\), and fresh independent randomness.

For the \textit{standard candidate set}, assume \(\varepsilon_{\rm main}n\ge2\);
\(s\in[d]\) represents the candidate packing-growth dimension. Let
\(L:=\lfloor d\log_2(\varepsilon_{\rm main}n)\rfloor\), set \(r_1:=L\), and
for \(s\ge2\) set
$
r_s:=\max\!\left\{1,\left\lfloor
\frac{d}{s}\log_2\!\left(1+\varepsilon_{\rm main}n\left(\frac{s-1}{d}\right)^2\right)
\right\rfloor\right\}.
$
With \(P_j^{(s)}:=\min\{n,2^{js/d}\}\) and
\(A_{s,r_s}:=2\sum_{j<r_s}\sqrt{P_j^{(s)}}2^{-j/(2d)}\), define
$
\sigma_{j+1}^{(s,r_s)}
=
\frac{A_{s,r_s}}
{\varepsilon_{\rm main}\sqrt{P_j^{(s)}}2^{-j/(2d)}} .
$
Then \(2\sum_{i=1}^{r_s}1/\sigma_i^{(s,r_s)}=\varepsilon_{\rm main}\), so each
candidate \((s,r_s,\sigma^{(s,r_s)})\) is a fixed public depth--schedule pair
using the main privacy budget. The standard candidate set is defined as
\[
\mathcal A:=\{(s,r_s,\sigma^{(s,r_s)}):s\in[d]\}.
\]
For \(a\in\mathcal A\), define the score
\begin{equation}
\label{main:B_a_def}
B_a(X)
:=
\frac{C_{\rm abs}}{\varepsilon_{\rm main}n}
+
\frac{4C_{\rm abs}}{n}
\sum_{j=0}^{r_a-1}\sigma^a_{j+1}O_j(X)2^{-j/d}
+
2\cdot2^{-r_a/d}.
\end{equation}
This is the fixed-schedule upper bound evaluated at the observed occupied-cell
profile. Each \(O_j\) with
\(j\ge1\) changes by at most one, the sensitivity of this score is bounded by
$
\Delta_a
:=
\frac{4C_{\rm abs}}{n}
\sum_{j=1}^{r_a-1}\sigma^a_{j+1}2^{-j/d},
\,\,
\Delta_{\mathcal A}:=\max_{a\in\mathcal A}\Delta_a .
$
If \(\Delta_{\mathcal A}>0\), the exponential mechanism selects \(\widehat a\in\mathcal A\) with probability
proportional to
$
\exp\left(
-\frac{\varepsilon_{\rm sel}B_a(X)}{2\Delta_{\mathcal A}}
\right)
$
. If \(\Delta_{\mathcal A}=0\), select a minimizer of \(B_a(X)\) using a fixed public tie-breaking rule. Conditional on
\(\widehat a\), the final synthesis stage is an ordinary fixed-depth
Pruned-PMM run with budget \(\varepsilon_{\rm main}\). Hence, the two stages are
\(\varepsilon\)-differentially private by sequential composition.

\begin{algorithm}
\caption{Adaptive Depth Selection}
\label{alg:depth-select}
\begin{algorithmic}[1]
\REQUIRE Data \(X\in\Omega^n\), budgets
\(\varepsilon_{\rm sel},\varepsilon_{\rm main}>0\), candidate set \(\mathcal A\)
\ENSURE Candidate \(\widehat a=(\widehat r,\widehat\sigma_1,\dots,\widehat\sigma_{\widehat r})\)

\STATE Compute \(O_j\) for every \(0\le j<\max_{a\in\mathcal A}r_a\).
\algcomment{occupied cells}
\STATE For each \(a\in\mathcal A\), compute \(B_a(X)\) in \eqref{main:B_a_def}.
\algcomment{data-dependent score}
\STATE Let \(\Delta_{\mathcal A}:=\max_{a\in\mathcal A}\Delta_a\).
\algcomment{global sensitivity bound}

\IF{$\Delta_{\mathcal A}=0$}
    \STATE Choose \(\widehat a\in\arg\min_{a\in\mathcal A}B_a(X)\) with fixed public tie-breaking.
\ELSE
\STATE Sample \(\widehat a\) with probability proportional to
$
\exp\left(
-\frac{\varepsilon_{\rm sel}B_a(X)}{2\Delta_{\mathcal A}}
\right).
$
\algcomment{exponential mechanism}
\ENDIF

\STATE Output \(\widehat a\).
\end{algorithmic}
\end{algorithm}

\section{Theoretical Guarantees}
\label{sec:guarantees}

We state the guarantees in the same order as the mechanisms. We first give the
fixed-depth rate under external packing-growth, which is the structural
assumption used in the upper bound.

\subsection{Fixed-Depth Pruned-PMM}

We first consider the case where the packing-growth dimension is known, so the
depth and noise schedule are fixed before running the mechanism.

\begin{theorem}[Fixed-depth rate under external packing-growth]
\label{thm:fixed-depth-rate-main}
Assume \(\varepsilon_{\rm main}n\ge2\). Suppose the empirical support \(S_X\)
has packing-growth dimension \(k\in[d]\).
Then there is a choice of depth and noise scales such that Pruned-PMM is
\(\varepsilon_{\rm main}\)-differentially private and
\[
\mathbb E W_1(\mu_X,\mu_Y)
\le
\begin{cases}
C\dfrac{d^2\log^2(\varepsilon_{\rm main}n)}{\varepsilon_{\rm main}n},
& k=1,\\[2mm]
C\left(\dfrac{d}{k-1}\right)^{2/k}(\varepsilon_{\rm main}n)^{-1/k},
& 2 \le k\le d, 
\end{cases}
\]
where \(C\) depends only on the packing-growth constants and universal constants.
\end{theorem}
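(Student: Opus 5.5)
The plan has three parts: a fixed-schedule accuracy bound, a bound on occupied cells from packing growth, and an optimization of the depth and noise schedule.

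\emph{Privacy.} Adding or removing one point changes the root count and exactly one count per level $j=1,\dots,r$, each by at most one. The Laplace noise on the root count (scale $\sigma_0=\varepsilon_{\rm main}^{-1}$) and on the sibling pairs along the path is therefore charged through basic composition. The release of $Y$ is post-processing of these noisy counts. The key point is that the pruned traversal draws noise only at active nodes, and those are determined by earlier noisy outputs, not by the data. Conditionally on the history, the privacy loss at each level is at most $2/\sigma_{j}$. Requiring $1/\sigma_0+2\sum_{i=1}^r 1/\sigma_i\le \varepsilon_{\rm main}$, with the budget split between the root and the tree, gives $\varepsilon_{\rm main}$-DP.

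\emph{Fixed-schedule accuracy.} Next I prove the multiscale bound that the score $B_a$ encodes:
\[
\mathbb E W_1(\mu_X,\mu_Y)\le \frac{C_{\rm abs}}{\varepsilon_{\rm main}n}+\frac{4C_{\rm abs}}{n}\sum_{j<r}\sigma_{j+1}O_j2^{-j/d}+2\cdot 2^{-r/d}.
\]
The route is the standard hierarchical transport bound: $W_1$ is at most the leaf diameter plus $\sum_j \operatorname{diam}(\text{level-}j\text{ cell})\sum_{|\theta|=j}|m_{\theta}-n_\theta|/\text{mass}$, with level-$j$ cells of diameter about $2^{-j/d}$. The consistency step must then satisfy $|m_{\theta0}-n_{\theta0}|\lesssim |m_\theta-n_\theta|+|\lambda_{\theta0}|+|\lambda_{\theta1}|$. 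This holds for the comparable rule, and also for the two zero-respecting branches, since they send mass to a child only when the other child's noisy count is nonpositive.

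The main obstacle is that errors accumulate down the tree, and that noise appears at children of \emph{unoccupied} active cells. I would handle the unoccupied case by noting that an empty cell becomes active only through positive noise. Its expected spurious mass, and the expected number of such active descendants, should be charged geometrically to its occupied parent, contributing $O(\sigma_{j+1})$ per occupied level-$j$ node. Cells at distance comparable to their diameter from $S_X$ are exactly those counted by the external packing condition, which is why the bound uses the external rather than the internal condition. The mass normalization $1/n$ versus $1/m$, and the root noise, give the $1/(\varepsilon_{\rm main}n)$ term.

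\emph{Occupancy bound and optimization.} Cells at level $j$ have side length about $2^{-j/d}$. Picking one point of $S_X$ in each occupied cell and thinning by a factor $2^d$ yields an external packing at $\eta\asymp 2^{-j/d}$, giving $O_j\le C_{\rm pack}2^{jk/d}$. Combined with $O_j\le n$, this gives $O_j\lesssim P_j^{(k)}$. I then plug in the schedule $\sigma_{j+1}=A/(\varepsilon_{\rm main}\sqrt{P_j}2^{-j/(2d)})$, which minimizes $\sum_j\sigma_{j+1}P_j2^{-j/d}$ under the budget $2\sum_j 1/\sigma_j=\varepsilon_{\rm main}$ by Cauchy--Schwarz. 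The middle term becomes $\lesssim A^2/(\varepsilon_{\rm main}n)$ with $A\asymp\sum_{j<r}2^{j(k-1)/(2d)}$.

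For $k=1$ each summand is constant, so $A\asymp r$; taking $r=d\log_2(\varepsilon_{\rm main}n)$ gives $d^2\log^2(\varepsilon_{\rm main}n)/(\varepsilon_{\rm main}n)$ and makes $2^{-r/d}$ negligible. For $k\ge2$ the sum is geometric, $A\lesssim \frac{d}{k-1}2^{r(k-1)/(2d)}$, so the error is
\[
\lesssim \Big(\tfrac{d}{k-1}\Big)^2\frac{2^{r(k-1)/d}}{\varepsilon_{\rm main}n}+2^{-r/d}.
\]
Balancing the two terms at $r=r_k$ gives $2^{-r/d}\asymp\big((\tfrac{d}{k-1})^2/(\varepsilon_{\rm main}n)\big)^{1/k}$, which is the claimed $(d/(k-1))^{2/k}(\varepsilon_{\rm main}n)^{-1/k}$. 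The floors and the $\max\{1,\cdot\}$ in $r_k$ cost only constants, because the balance condition $\varepsilon_{\rm main}n\ge2$ keeps $2^{-r/d}$ bounded.
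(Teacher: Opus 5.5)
\textbf{Gap: the accuracy step.} You bound $W_1$ by level-wise discrepancies $\sum_{|\theta|=j}|m_\theta-n_\theta|$, weighted by cell diameters, and propagate them by $|m_{\theta0}-n_{\theta0}|\lesssim|m_\theta-n_\theta|+|\lambda_{\theta0}|+|\lambda_{\theta1}|$.

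\emph{Doubling.} As written, this per-child inequality, summed over the two children, doubles the parent error at every level. Over $r\asymp d\log(\varepsilon_{\rm main}n)$ levels that is fatal.

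\emph{Accumulation.} The sharp pair form $\sum_b|m_{\theta b}-n_{\theta b}|\le|m_\theta-n_\theta|+2(|\lambda_{\theta0}|+|\lambda_{\theta1}|)$ still does not give the bound you claim. A discrepancy created at level $i$ persists at every deeper level; for example, mass routed into an empty child stays unmatched throughout its subtree. It is therefore charged $\sum_{j>i}2^{-j/d}\asymp d\,2^{-i/d}$ rather than $2^{-i/d}$. What this route yields is $\frac{Cd}{n}\sum_j\sigma_{j+1}O_j2^{-j/d}$, which puts an extra factor $d^{1/k}$ (or $d$ when $k=1$) into the final rates. That contradicts the stated $d$-free constant $C$. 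You name this accumulation as the main obstacle but never resolve it.

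\emph{Empty active cells.} Your handling does not work as described. A spurious cell of mass $m_\theta$ can have up to $m_\theta$ active descendants per level, each drawing fresh noise. Paying their noise and charging it to the occupied ancestor gives products of noise scales, not $O(\sigma_{j+1})$. The correct observation is that at an empty node no noise needs to be paid at all.

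\textbf{Missing idea: the flux decomposition.} Let $t_\theta:=\mathrm{flux}\bigl((n_{\theta0},n_{\theta1}),(m_{\theta0},m_{\theta1})\bigr)$ measure the incomparability of the true and corrected child pairs.
\begin{itemize}
\item A bottom-up matching (Lemma~\ref{lem:hier}) turns $X$ into a multiset with the corrected leaf counts using $|m-n|$ additions or removals plus $t_\theta$ transfers inside each $\Omega_\theta$.
\item Theorem~\ref{thm:acc-det} then gives $W_1\le |m-n|/n+\frac1n\sum_\theta t_\theta\operatorname{diam}(\Omega_\theta)+\delta$. Each moved point is charged once, at the scale where it crosses a split, and is sent directly to its final leaf, so nothing accumulates.
\item The corrected pair is comparable to the noisy pair, including in the zero-respecting branches. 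Hence $t_\theta\le\max_b|\lambda_{\theta b}|$ independently of the parent's error (Lemma~\ref{lem:fluxnoise}).
\item Also $t_\theta=0$ whenever $n_\theta=0$ (Lemma~\ref{lem:empty}).
\end{itemize}
Together these give Corollary~\ref{cor:acc-exp}, with only occupied nodes contributing. That is exactly the input your optimization step needs, and that step is otherwise fine.

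\textbf{Two smaller points.}
\begin{itemize}
\item Under the paper's replacement adjacency the root count is always $n$, so root noise is free. Your constraint $1/\sigma_0+2\sum_i 1/\sigma_i\le\varepsilon_{\rm main}$ with $\sigma_0=\varepsilon_{\rm main}^{-1}$ leaves no budget for the tree. This only costs constants once the split is fixed, but as written it is inconsistent.
\item Thinning one data point per occupied cell by a factor $2^d$ puts an extra $2^d$ into the occupancy bound, and this propagates to the rate. The paper avoids it by using occupied-cell centers: they lie on the level-$j$ grid, are automatically $\eta_j/2$-separated, and are within $\eta_j$ of $S_X$.
\end{itemize}
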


\begin{proof}[Proof sketch]
By Corollary~\ref{cor:acc-exp}, the fixed-depth error is the sum of a mass term,
a noise-transport term, and a leaf-resolution term. For fixed \(r\), optimizing
the noise scales under the privacy constraint
\(2\sum_i1/\sigma_i\le\varepsilon_{\rm main}\) gives the profile-dependent bound
of Proposition~\ref{prop:opts}. The packing condition gives
\(O_j\lesssim C_{\rm pack}2^{jk/d}\), reducing the bound to a tradeoff between
\((\varepsilon_{\rm main}n)^{-1}(\sum_{j<r}2^{j(k-1)/(2d)})^2\) and
\(2^{-r/d}\). Choosing \(r\asymp d\log(\varepsilon_{\rm main}n)\) for \(k=1\),
and \(r\asymp \frac{d}{k}\log(1+\varepsilon_{\rm main}n((k-1)/d)^2)\) for
\(k>1\), gives the rates. See Corollary~\ref{cor:rates}.
\end{proof}

\subsection{Adaptive Pruned-PMM}
We now state the guarantee for the adaptive selector above. The key point is the best-candidate guarantee in
Theorem~\ref{thm:adaptive-library}: the adaptive mechanism performs as well as the best public
candidate, up to the exponential-mechanism selection penalty.

\begin{theorem}[Adaptive Pruned-PMM under packing-growth]
\label{thm:adaptive-rate-main}
Assume \(d\ge2\), \(\varepsilon_{\rm main}n\ge2\), and
\(\varepsilon=\varepsilon_{\rm sel}+\varepsilon_{\rm main}\). Run Adaptive
Pruned-PMM with the standard candidate set indexed by \(s\in[d]\). Then the
mechanism is \(\varepsilon\)-differentially private. If the empirical support
\(S_X\) has packing-growth dimension \(k\), then
\[
\mathbb E W_1(\mu_X,\mu_Y)
\le
\begin{cases}
C\dfrac{d^2 \log^2(\varepsilon_{\rm main}n)}
{\varepsilon_{\rm main}n}+
C_{\rm sel}
\frac{d^2 \log d}{\varepsilon_{\rm sel}}
(\varepsilon_{\rm main}n)^{-(d+1)/(2d)}, & k=1,\\[3mm]
C\left(\dfrac{d}{k-1}\right)^{2/k}
(\varepsilon_{\rm main}n)^{-1/k}+
C_{\rm sel}
\frac{d^2 \log d}{\varepsilon_{\rm sel}}
(\varepsilon_{\rm main}n)^{-(d+1)/(2d)},
& 2\le k\le d.
\end{cases}
\]
Here, \(C\) depends on $C_{\rm pack}$ and \(C_{\rm sel}\).
\end{theorem}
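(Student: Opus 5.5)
Privacy follows from sequential composition. The selection stage is an instance of the exponential mechanism with score $-B_a(X)$. We first check that $\Delta_{\mathcal A}$ bounds the sensitivity of $B_a$. Changing one data point leaves $n$ fixed, so $O_0=1$ is unchanged. For $j\ge1$, moving one point can vacate at most one level-$j$ cell and occupy at most one new one, so $|O_j(X)-O_j(X')|\le1$. Hence $|B_a(X)-B_a(X')|\le\Delta_a\le\Delta_{\mathcal A}$, and the stage is $\varepsilon_{\rm sel}$-DP; the degenerate case $\Delta_{\mathcal A}=0$ makes $B_a$ data-independent. Conditional on $\widehat a$, the synthesis stage is fixed-depth Pruned-PMM with a schedule satisfying $2\sum_i 1/\sigma_i=\varepsilon_{\rm main}$ and root scale $\varepsilon_{\rm main}^{-1}$. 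Its $\varepsilon_{\rm main}$-DP property is the privacy part of Theorem~\ref{thm:fixed-depth-rate-main}, modulo the root budget accounting of Corollary~\ref{cor:acc-exp}. Composition then gives $\varepsilon$-DP.

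For utility, invoke the best-candidate guarantee of Theorem~\ref{thm:adaptive-library}. By Corollary~\ref{cor:acc-exp}, for each fixed $a$ we have $\mathbb E[W_1\mid\widehat a=a]\le B_a(X)$. Hence $\mathbb E W_1\le \mathbb E B_{\widehat a}(X)$. The standard exponential-mechanism utility bound gives
\[
\mathbb E B_{\widehat a}(X)\le \min_{a\in\mathcal A}B_a(X)+\frac{2\Delta_{\mathcal A}}{\varepsilon_{\rm sel}}\bigl(\log|\mathcal A|+1\bigr),\qquad |\mathcal A|=d.
\]
Taking the candidate $s=k$ and using $O_j\le C_{\rm pack}2^{jk/d}$ (together with $O_j\le n$) bounds $B_{a_k}(X)$ exactly as in the proof of Theorem~\ref{thm:fixed-depth-rate-main} and Corollary~\ref{cor:rates}, since the schedule $\sigma^{(k,r_k)}$ is the optimized one for the profile $P_j^{(k)}$. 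Substituting $\sigma_{j+1}=A/(\varepsilon_{\rm main}\sqrt{P_j}2^{-j/(2d)})$ gives
\[
\sum_j\sigma_{j+1}O_j2^{-j/d}\lesssim C_{\rm pack}\,A^2/\varepsilon_{\rm main}.
\]
This produces the first term in each case.

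The main obstacle is bounding the selection penalty $\Delta_{\mathcal A}/\varepsilon_{\rm sel}$ uniformly over $s\in[d]$ by $d^2(\varepsilon_{\rm main}n)^{-(d+1)/(2d)}$ (times $\log d$ from $\log|\mathcal A|$). Write
\[
\Delta_a=\frac{4C_{\rm abs}A_{s,r_s}}{\varepsilon_{\rm main}n}\sum_{j\ge1}\frac{2^{-j/(2d)}}{\sqrt{P_j^{(s)}}}.
\]
Since $P_j\ge 2^{js/d}\wedge n\ge1$, the inner sum is at most $\sum_j 2^{-j/(2d)}\lesssim d$. For $A_{s,r_s}=2\sum_{j<r_s}\sqrt{P_j}2^{-j/(2d)}$, each term is at most $2^{j(s-1)/(2d)}$. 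With $r_s\le \frac ds\log_2(1+\varepsilon_{\rm main}n)$ this gives $A\lesssim d\,(\varepsilon_{\rm main}n)^{(s-1)/(2s)}$, up to a $(d/(s-1))$ factor to be checked, and $A\lesssim d\log(\varepsilon_{\rm main}n)$ for $s=1$. The worst case is $s=d$, which gives $\Delta\lesssim d^2(\varepsilon_{\rm main}n)^{(d-1)/(2d)-1}=d^2(\varepsilon_{\rm main}n)^{-(d+1)/(2d)}$. I expect the delicate part to be verifying that the geometric-sum constants combine to $d^2$ rather than something larger: for instance, $\sum 2^{j(s-1)/(2d)}\approx \frac{2d}{s-1}\,2^{r_s(s-1)/(2d)}$ must be combined with the choice of $r_s$. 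I would handle the cases $s=1$ and $s\ge2$ separately, and absorb residual factors into $C_{\rm sel}$. Adding the penalty to the fixed-candidate bound yields the stated inequality.
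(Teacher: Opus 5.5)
Your proposal is correct and follows the paper's proof step for step. Privacy comes from composing the exponential-mechanism selector (sensitivity via $|O_j(X)-O_j(X')|\le 1$ for $j\ge1$; the root noise needs no budget simply because $n_\varnothing=n$ is fixed under replacement adjacency) with fixed-schedule Pruned-PMM. Utility comes from the best-candidate bound (your $\log|\mathcal A|+1$ is a slightly weaker form of the paper's entropy-based Gibbs inequality, harmless since $d\ge2$), the candidate $s=k$, and $\Delta_a\lesssim d\,A_{s,r_s}/(\varepsilon_{\rm main}n)$. The factor you left to be checked resolves as in the paper. For $s\ge2$, $A_{s,r_s}\le 2q_s^{r_s}/(q_s-1)$, and $1/(q_s-1)\le 2d/((s-1)\ln 2)$ is the \emph{only} source of $d$. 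Also $q_s^{r_s}\le\sqrt2\,(1+\varepsilon_{\rm main}n((s-1)/d)^2)^{(s-1)/(2s)}\le\sqrt2\,(2\varepsilon_{\rm main}n)^{(d-1)/(2d)}$. Hence $A_{s,r_s}\lesssim d(\varepsilon_{\rm main}n)^{(d-1)/(2d)}$ with no extra factor; for $s=1$, use $\log_2 x\lesssim x^{(d-1)/(2d)}$, which is where $d\ge2$ enters. This gives $\Delta_{\mathcal A}\lesssim d^2(\varepsilon_{\rm main}n)^{-(d+1)/(2d)}$.
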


\begin{proof}[Proof sketch]
The score \(B_a(X)\) has controlled sensitivity over the standard set
\(\mathcal A\), and Lemma~\ref{lem:selection-sensitivity-simple} gives the
uniform bound on \(\Delta_{\mathcal A}\). Thus the exponential-mechanism
selector, followed by the fixed-schedule Pruned-PMM run, is private by
Theorem~\ref{thm:adaptive-privacy}. For accuracy, the best-candidate guarantee
in Theorem~\ref{thm:adaptive-library} shows that the adaptive error is bounded
by the best public candidate plus the selection penalty. Under packing-growth
dimension \(k\), the standard set contains the candidate \(s=k\), whose rate is
bounded in Corollary~\ref{cor:adaptive-explicit-rate}. Combining these facts
gives the displayed bound; see Corollary~\ref{cor:adaptive-final-rate}.
\end{proof}

For every \(2\le k\le d\), the selection term is of lower order than the
packing-growth term for fixed \(d\), fixed packing-growth constants, and fixed positive privacy budgets
\(\varepsilon_{\rm sel},\varepsilon_{\rm main}\). Hence, Adaptive Pruned-PMM achieves the intrinsic exponent
\(1/k\) without knowing \(k\) in advance. For \(k=1\), the fixed-depth candidate
has the logarithmic one-dimensional rate, while the adaptive bound retains the selection penalty. When the privacy budgets vary with
\(n\), the full bound above retains the selection penalty.

\subsection{Runtime}
Consistency is applied before expansion, so at most \(m\) nodes
have positive corrected counts at each depth, where \(m\) is the
output size. The synthesis stage therefore visits at most \(1+2mr\)
nodes. With \(\sigma_0=\varepsilon_{\rm main}^{-1}\) and
\(\varepsilon_{\rm main}n\ge2\), its expected runtime is
\(O(nr+dn)\), including explicit output generation.

For Adaptive Pruned-PMM, let \(L:=\max_{a\in\mathcal A}r_a\).
Including occupancy computation and candidate selection, the total
expected runtime is \(O((n+|\mathcal A|)L+dn)\).
For the standard candidate set, \(|\mathcal A|=d\) and
\(L=\lfloor d\log_2(\varepsilon_{\rm main}n)\rfloor\).
See Theorem~\ref{thm:time} and Corollary~\ref{cor:time-intrinsic}
in Appendix~\ref{subsec:runtime-pruning}. The synthesis runtime bound grows linearly with the depth,
allowing deeper, geometry-adapted hierarchies without the
$2^r$ cost of materializing the complete tree.
For fixed $d$ and fixed positive privacy budgets, the standard
candidate set therefore gives $O(n\log n)$ expected runtime
for the complete adaptive mechanism.

\section{Lower Bound and Sharpness}
\label{sec:lower}

We complement the upper bound with a packing lower bound. The upper analysis
uses an external packing upper bound to control the occupied cells. The lower bound uses a corresponding internal lower-growth condition: the
support contains many separated points at every small scale. It shows that the
leading exponent \(1/k\) is forced by the geometry, not by the upper-bound proof
technique.

\begin{theorem}[Packing lower bound]
\label{thm:packing-lower-bound}
Let \(T=[0,1]^d\) with \(\rho(x,y)=\|x-y\|_\infty\). Suppose that
\(S\subset T\) satisfies
$
N_{\mathrm{pack}}(S,t)\ge c_0t^{-k}
\,\,\text{for all }0<t\le t_0,
$
for some \(c_0>0\), \(t_0\in(0,1]\), and \(k\in\{1,\dots,d\}\). Then there exist
constants \(c>0\) and \(n_0\in\mathbb N\), depending only on \(c_0,t_0,k\), such
that for every \(n\ge n_0\), every \(\varepsilon\in(0,1]\) with
\(\varepsilon n\ge2\), and every pure \(\varepsilon\)-DP synthetic-data mechanism
\(\mathcal M:S^n\to\bigsqcup_{m\ge1}T^m\),
\[
\sup_{X\in S^n}
\mathbb E W_1\bigl(\mu_X,\mu_{\mathcal M(X)}\bigr)
\ge
c\,\min\{t_0,(\varepsilon n)^{-1/k}\}.
\]
\end{theorem}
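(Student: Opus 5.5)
The plan is the standard packing/group-privacy argument, as in the lower bounds for private measures of Boedihardjo--Strohmer--Vershynin and He et al. Fix a scale \(t\in(0,t_0]\) to be chosen later. Let \(\{z_1,\dots,z_N\}\subset S\) be a \(t\)-packing with \(N\ge c_0t^{-k}\). We build a large family of datasets \(\{X^{(v)}\}_{v\in V}\subset S^n\) with two properties:
\begin{itemize}
\item[(i)] distinct members are pairwise far in \(W_1\): \(W_1(\mu_{X^{(v)}},\mu_{X^{(w)}})\ge \alpha t\) for an absolute \(\alpha>0\);
\item[(ii)] any two members differ in at most \(n\) coordinates.
\end{itemize}
Property (ii) is automatic for \(X\in S^n\), but we will use it with a reduced Hamming radius. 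Concretely, split the data into a \emph{free} block of \(\ell\approx N\) points and a \emph{fixed} block of \(n-\ell\) points placed at \(z_1\). On the free block, place one point at each \(z_i\) for \(i\) in a subset \(I_v\subset[N]\) of size \(N/2\) (the remaining free points also go to \(z_1\)). Take the \(I_v\) from a Gilbert--Varshamov family of size \(|V|\ge e^{cN}\) with pairwise symmetric differences \(\ge N/8\).

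For such a pair, the two empirical measures disagree by mass \(\gtrsim N/n\), distributed on atoms that are pairwise \(t\)-separated. Since any transport plan must move this excess mass by distance at least \(t/2\), we get \(W_1\ge c\,tN/n\). This is not a constant multiple of \(t\), so the family must be rescaled. Use mass \(m=\lfloor n/N\rfloor\) points per chosen atom instead of one. Then every member differs from every other in at most \(n\) coordinates, with the mass discrepancy of order one, giving \(W_1\ge \alpha t\).

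The privacy step uses group privacy. For each \(v\), let \(B_v\) be the \(W_1\)-ball of radius \(\alpha t/2\) around \(\mu_{X^{(v)}}\); these balls are disjoint. Suppose, for contradiction, that \(\mathbb E W_1<\alpha t/8\) on every member. By Markov, \(\Pr(\mathcal M(X^{(v)})\in B_v)\ge 3/4\). Since \(X^{(v)}\) and \(X^{(w)}\) differ in at most \(D\) coordinates, where \(D\) is the Hamming distance of the construction, group privacy gives
\[
\Pr(\mathcal M(X^{(w)})\in B_v)\ge e^{-\varepsilon D}\cdot\tfrac34 .
\]
Summing over the disjoint events \(\{\mathcal M(X^{(w)})\in B_v\}_{v\in V}\) yields \(|V|e^{-\varepsilon D}\cdot\frac34\le1\), i.e.\ \(cN\le \varepsilon D+O(1)\). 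With \(D\le n\) this reads \(c_0 t^{-k}\lesssim \varepsilon n\). Choosing \(t\) so that \(c_0t^{-k}\) is a large constant multiple of \(\varepsilon n\), namely \(t\asymp(\varepsilon n)^{-1/k}\), produces a contradiction. The clamp \(\min\{t_0,\cdot\}\) handles the case where this \(t\) would exceed \(t_0\); there we take \(t=t_0\), with \(N\ge c_0t_0^{-k}\) a constant, and \(n_0\) ensures \(m\ge1\) and that \(|V|\) is large enough.

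The main obstacle is making the construction deliver both separation of order \(t\) and a packing number \(\log|V|\gtrsim N\) whose ratio to \(\varepsilon D\) can be forced large. With mass \(n/N\) per atom, \(D\approx n\), and the condition becomes \(N\gtrsim\varepsilon n\); that is exactly the choice \(t\asymp (\varepsilon n)^{-1/k}\), so the construction closes up to constants.

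The careful points are three:
\begin{itemize}
\item the \(W_1\) separation lemma: two measures, each uniform over \(N/2\) atoms of a \(t\)-packing with symmetric difference \(\ge N/8\), are at \(W_1\) distance \(\ge t/32\), because each unit of mismatched mass must travel at least \(t/2\) or be matched to a point within \(t/2\) of only one atom;
\item the integer rounding of \(n/N\), with the leftover points parked on a common atom, where they cancel;
\item the requirement that outputs range over \(T^m\), not \(S\), which is why balls in \(W_1\) rather than exact atoms are used.
\end{itemize}
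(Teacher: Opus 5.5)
Your overall architecture matches the paper's: separated atoms from the packing hypothesis, a constant-weight code, a common filler atom, group privacy, and a Markov/disjoint-balls count in place of the cited master bound. But the constants you dismiss with ``closes up to constants'' are exactly where the argument breaks, because the theorem must hold for every $\varepsilon\in(0,1]$.

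Your counting step needs $\log|V|>\varepsilon D+\log(4/3)$ with $D\le n$. The datasets exist only if $m=\lfloor n/N\rfloor\ge 1$, i.e.\ $N\le n$. Any family of $N/2$-subsets of $[N]$ has $|V|\le\binom{N}{N/2}<e^{N\ln 2}$, so $\log|V|>\varepsilon n$ forces $N>\varepsilon n/\ln 2$. This exceeds $n$ as soon as $\varepsilon>\ln 2$; then $m=0$ and there is no hard family at all.

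With the Gilbert--Varshamov guarantee you invoke (distance $N/8$, rate about $\ln 2-H(1/8)\approx 0.32$ nats), the argument only goes through for $\varepsilon$ below roughly $0.32$. Even the sharper count $D\le mN/2$ (only non-filler coordinates change) only raises this to about $0.63$. Your claim that ``$n_0$ ensures $m\ge1$'' is where the gap hides: $n/N\approx 1/(K\varepsilon)$ with $K>1/c$ does not depend on $n$, so taking $n$ large does not help.

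The paper closes this with two coupled choices.
\begin{itemize}
\item It selects $M/2$ of $M\asymp\varepsilon n$ atoms and places $q=\lfloor 2n/M\rfloor$ copies on each, with the filler on a separate atom $z_\star$. So only $M\le 2n$ is required, while the selected atoms still carry mass $p\ge1/2$.
\item It uses a constant-weight code with small relative distance $\delta=1/25$, for which $\ln 2-H(1/25)>1/2$. The resulting rate $\beta>1/2$ makes $\beta M\ge \varepsilon n+\ln 2+1$ compatible with $M\le 2n$ for all $\varepsilon\le 1$ and $n\ge n_0$ (Lemmas~\ref{lem:cw} and~\ref{lem:Mle2n}).
\end{itemize}
Both ingredients matter. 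The factor-$2$ allocation alone still needs rate exceeding $\varepsilon/2$, which the Gilbert--Varshamov bound at distance $N/8$ does not supply.

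Alternatively, you could actually implement the ``reduced Hamming radius'' you mention. For example, symmetrize $\mathcal M$ by a uniformly random permutation of its input, and use a local code whose pairwise distances are a small fraction of $N$. As written, however, the proposal uses $D\le n$ and fails for $\varepsilon$ near $1$.
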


\begin{proof}[Proof sketch]
Set \(M\asymp \varepsilon n\) and
\(t\asymp\min\{t_0,M^{-1/k}\}\). The packing lower-growth assumption gives
\(M+1\) points in \(S\) separated at scale \(t\), including the filler point. A constant-weight code provides
more than \(2e^{\varepsilon n}\) subsets that differ on a constant fraction of
their coordinates. Each codeword defines a dataset by placing equal mass on its
selected points and using one common filler point for the remaining samples.
The resulting empirical measures form a \(W_1\)-packing at scale \(\Omega(t)\).
Restricting any \(\varepsilon\)-DP mechanism to this finite family and
post-processing to empirical measures gives, by group privacy, an
\((\varepsilon n)\)-metrically private mechanism. The master packing lower bound
for metric privacy \citep{BoedihardjoStrohmerVershynin2024PrivateMeasures}
therefore forces expected error \(\Omega(t)\) for some dataset. Since
\(M\asymp\varepsilon n\), this gives
\(\Omega(\min\{t_0,(\varepsilon n)^{-1/k}\})\). Details are in
Appendix~\ref{subsec:lb-main}.
\end{proof}

We next identify a class of supports on which the Pruned-PMM
upper bound matches the lower bound of
Theorem~\ref{thm:packing-lower-bound} in its dependence on
$\varepsilon n$.
The upper analysis controls the noise-transport error through
the number of occupied cells at each level; external packing
provides a geometric bound on these counts.
For finite unions of coordinate cubes, these counts can be
bounded directly, even when the components have different
active coordinates or intersect.
The resulting rate is tight for fixed ambient dimension,
intrinsic dimension, and number of components.

\begin{proposition}[Matching rates on finite unions of coordinate cubes]
\label{prop:matching-coordinate-supports}
Fix $d\ge2$ and $2\le k\le d$. Let
\[
S=\bigcup_{\ell=1}^{J}S_\ell,
\qquad
S_\ell=
\{x\in[0,1]^d:
x_{A_\ell}\in[0,1]^k,\ x_{A_\ell^c}=z_\ell\},
\]
where $J\ge1$, $A_\ell\subseteq[d]$ with $|A_\ell|=k$, and
$z_\ell\in[0,1]^{A_\ell^c}$.
The active coordinate sets may differ, and the components
may intersect. There exist constants $c_k>0$ and $n_0\in\mathbb N$ depending
only on $k$, and a universal constant $C$, such that, for every
$0<\varepsilon\le1$ and $n\ge n_0$ with $\varepsilon n\ge4$,
Pruned-PMM admits a public choice of depth and noise scales
for which it is $\varepsilon$-differentially private and its
output $Y$ satisfies
\[
\sup_{X\in S^n}
\mathbb E W_1(\mu_X,\mu_Y)
\le
C J^{1/k}
\left(\frac{d}{k-1}\right)^{2/k}
(\varepsilon n)^{-1/k}.
\]
The depth and noise scales depend only on $d,k,J,n,\varepsilon$,
not on the active coordinate sets, component locations, or
allocation of samples among components. Conversely, every pure \(\varepsilon\)-DP synthetic-data mechanism has worst-case expected $W_1$ error over $S^n$ at least
$c_k(\varepsilon n)^{-1/k}$. Thus, for fixed $d,k,J$, Pruned-PMM
attains the minimax-optimal rate
$\Theta_{d,k,J}((\varepsilon n)^{-1/k})$.
\end{proposition}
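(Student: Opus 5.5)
The proof has two parts, the upper bound and the lower bound; most of the work is in the upper bound.

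\textbf{Upper bound.} I will not go through the external packing constant. Instead I will bound the occupied-cell counts $O_j(X)$ directly for every $X\in S^n$ and then feed them into the profile-dependent fixed-depth bound of Proposition~\ref{prop:opts} (via Corollary~\ref{cor:acc-exp}).

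\emph{Step 1: cell geometry.} At level $j$, write $j=qd+\rho$ with $0\le\rho<d$. The cell $\Omega_\theta$ is then a product of dyadic intervals: $\rho$ coordinates have side $2^{-(q+1)}$ and $d-\rho$ coordinates have side $2^{-q}$.

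\emph{Step 2: cells meeting one component.} Fix a component $S_\ell$. Off $A_\ell$, its coordinates equal the fixed vector $z_\ell$, so a cell meeting $S_\ell$ is determined by its intervals in the $k$ coordinates of $A_\ell$. The other $d-k$ intervals are forced, up to a factor $2^{d-k}$ from boundary ties. To avoid that factor, I will use the half-open dyadic convention, under which each point lies in exactly one cell. The number of cells meeting $S_\ell$ is then at most
\[
\prod_{i\in A_\ell}(\text{number of intervals in coordinate } i)\le 2^{k(q+1)}\le 2^k\,2^{jk/d}.
\]

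\emph{Step 3: occupancy profile.} Summing over components gives
\[
O_j(X)\le \min\{n,\;J\,2^k\,2^{jk/d}\}\qquad\text{for all } X\in S^n,
\]
uniformly in the active coordinate sets $A_\ell$, the locations $z_\ell$, and how the samples are allocated among components.

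\emph{Step 4: public schedule and depth.} Choose the noise schedule exactly as in the standard candidate with $s=k$, but with profile $P_j=\min\{n,J2^k2^{jk/d}\}$. This depends only on $d,k,J,n,\varepsilon$, so it is public. Proposition~\ref{prop:opts} then gives
\[
\mathbb E W_1\lesssim \frac{1}{\varepsilon n}+\frac{1}{\varepsilon n}\Bigl(\sum_{j<r}\sqrt{P_j}\,2^{-j/(2d)}\Bigr)^2+2^{-r/d}.
\]
The sum is bounded by
\[
\sqrt{J2^k}\sum_{j<r}2^{j(k-1)/(2d)}\lesssim \sqrt{J2^k}\,\frac{d}{k-1}\,2^{r(k-1)/(2d)},
\]
using $2^{(k-1)/(2d)}-1\gtrsim (k-1)/d$. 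Balance $J2^k(d/(k-1))^2 2^{r(k-1)/d}/(\varepsilon n)$ against $2^{-r/d}$ by taking $2^{r/d}\asymp (\varepsilon n(k-1)^2/(J2^kd^2))^{1/k}$, with $r$ rounded down to an integer. Both terms are then of order $J^{1/k}2^{1}(d/(k-1))^{2/k}(\varepsilon n)^{-1/k}$. The leftover constant $2$ is absorbed into $C$; this is why I track $2^{k\cdot(1/k)}$ explicitly.

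\emph{Step 5: side conditions.} The $1/(\varepsilon n)$ term is of lower order. The conditions $n\ge n_0$ and $\varepsilon n\ge4$ are used only to guarantee $r\ge1$ and that the rounding of $r$ costs a constant factor. If $r$ would be $<1$, the trivial bound $W_1\le1$ together with a choice $n_0(k)$ handles it; some care is needed here so that $C$ stays universal. Privacy is immediate, since the schedule satisfies $2\sum_i 1/\sigma_i=\varepsilon$.

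\textbf{Lower bound.} Apply Theorem~\ref{thm:packing-lower-bound} to $S\supseteq S_1$. Because $S_1$ is an isometric copy of $[0,1]^k$ in $\ell_\infty$, $N_{\rm pack}(S,t)\ge\lfloor 1/t\rfloor^k\ge 2^{-k}t^{-k}$ for $t\le1$. So $c_0=2^{-k}$, $t_0=1$, and we get the bound $c\min\{1,(\varepsilon n)^{-1/k}\}=c(\varepsilon n)^{-1/k}$ with $c$ depending only on $k$. Combining the two bounds gives $\Theta_{d,k,J}$.

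\textbf{Main obstacle.} The delicate part is the upper bound's dependence on the geometric-sum bound and on the profile-dependent form of Proposition~\ref{prop:opts}. In particular, the cap $\min\{n,\cdot\}$ must be handled so that the optimized constant is exactly $J^{1/k}(d/(k-1))^{2/k}$ with a universal prefactor. I would first check this with the uncapped profile, then argue that capping only decreases the sum.
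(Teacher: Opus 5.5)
Your proposal is correct and follows essentially the same route as the paper. Both proofs count occupied cells per component directly instead of going through the external packing constant; your cruder constant $J2^k$, in place of the paper's $J2^{k(d-k)/d}$, changes the final constant by at most a factor $2$ after taking the $1/k$-th power. Both use a public geometric schedule whose depth balances $Q_r^2/(\varepsilon n)$ against $2^{-r/d}$, fall back on the trivial bound $W_1\le 1$ when the depth would clamp, and obtain the lower bound from the $\ell_\infty$ grid inside $S_1$ via Theorem~\ref{thm:packing-lower-bound}.

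One small correction: your displayed bound for the public schedule comes from Corollary~\ref{cor:acc-exp} combined with $O_j\le P_j$, as in the proof of Corollary~\ref{cor:adaptive-explicit-rate}. It is not literally Proposition~\ref{prop:opts}, which concerns the schedule optimized for the realized occupancy profile.
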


The proof is given in Appendix~\ref{subsec:matching-growth}.

\section{Experiments}
\label{sec:experiments}

The experiments validate two algorithmic messages. First, private depth
selection should improve empirical \(W_1\) over the ambient-depth PMM baseline.
Second, pruning should reduce computation without changing the PMM accuracy
behavior. Unless stated otherwise, we use \(n=30000\), \(\varepsilon=1\), the
\(\ell_\infty\) ground metric, and estimate \(W_1\) on random subsamples.

\paragraph{Adaptive depth selection and pruning.}
Figure~\ref{fig:experiment-summary} summarizes the main findings. In the
high-dimensional experiment, the data lie on a \(k\)-dimensional coordinate
subspace of \([0,1]^d\), while \(d\) varies. Full PMM uses the ambient-depth
rule, whereas Adaptive Pruned-PMM privately selects a depth--schedule pair from
the public candidate set. The left panel shows \(k=1\): Adaptive Pruned-PMM
achieves lower empirical \(W_1\) than PMM, with a gap that increases with \(d\).
Additional plots for \(k=3,5\) appear in Appendix~\ref{app:experiments}.

The right panel isolates pruning. We compare full PMM and Pruned-PMM on nine
two-dimensional datasets: two moons, spiral, annulus, S-curve, pinwheel, figure
eight, four blobs, checkerboard, and cross. We use
\(r=\lceil\log_2(\varepsilon n)\rceil=15\). The \(W_1\) ratio remains close to
one, while runtime and visited-node ratios favor Pruned-PMM. Thus pruning
preserves the PMM accuracy behavior while reducing the cost of traversing the
hierarchy.

\begin{figure}[t]
\centering

\begin{subfigure}[t]{0.49\linewidth}
    \centering
    \includegraphics[width=\linewidth]{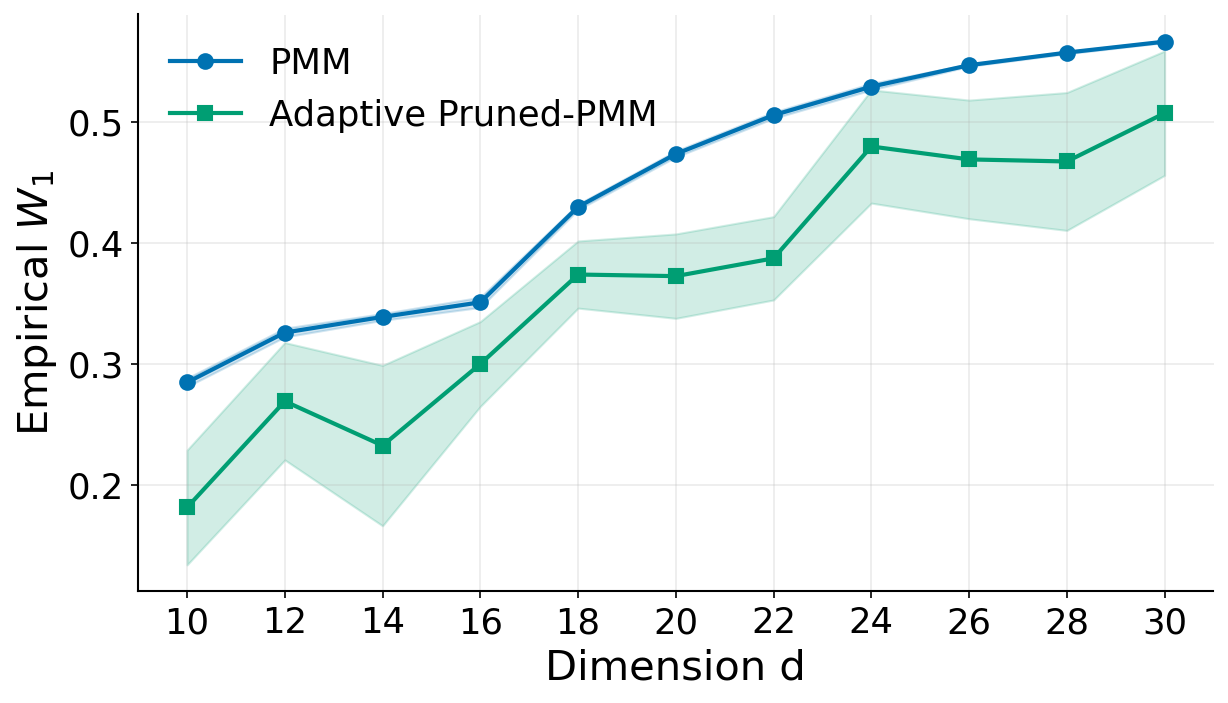}
    \caption{Adaptive depth selection, \(k=1\).}
    \label{fig:exp-summary-adaptive}
\end{subfigure}
\hfill
\begin{subfigure}[t]{0.49\linewidth}
    \centering
    \includegraphics[width=\linewidth]{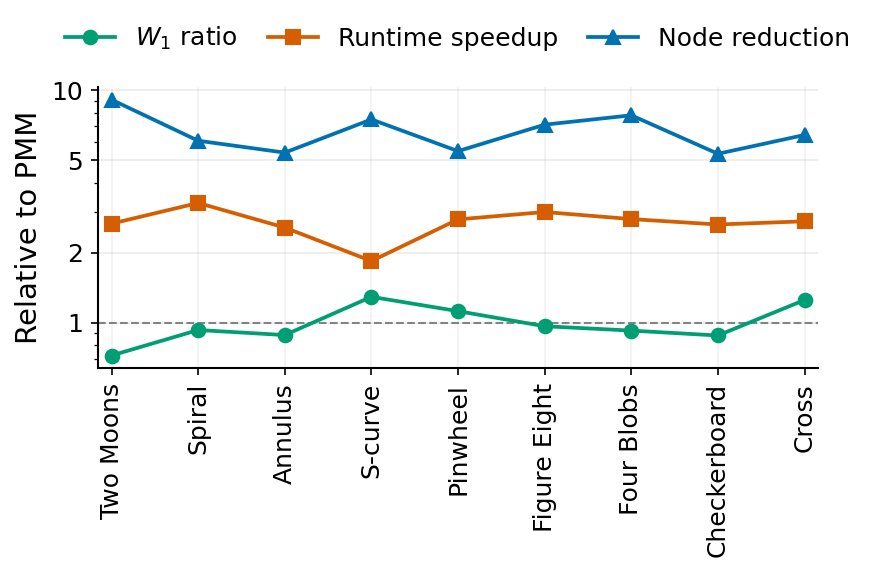}
    \caption{PMM versus Pruned-PMM on 2D shapes.}
    \label{fig:exp-summary-pruning}
\end{subfigure}

\caption{
Left: Adaptive Pruned-PMM lowers empirical \(W_1\) relative to the ambient-depth PMM baseline as \(d\) grows. Right: Pruned-PMM preserves PMM-level \(W_1\) while reducing runtime and visited nodes. Ratios are Pruned-PMM/PMM for \(W_1\), and PMM/Pruned-PMM for runtime and nodes.
}
\label{fig:experiment-summary}
\end{figure}

\paragraph{Depth scaling.}
Figure~\ref{fig:depth-sweep-main} shows the effect of increasing \(r\). Full PMM
materializes the complete binary tree, while Pruned-PMM expands only branches
that remain active under the pruning rule. The \(W_1\) curves stay comparable,
but runtime and visited-node counts separate rapidly on a log scale. This
supports the computational claim: \textit{pruning keeps the same accuracy
mechanism as PMM but avoids expanding branches with zero corrected mass.}

\begin{figure}[t]
\centering
\includegraphics[width=\linewidth]{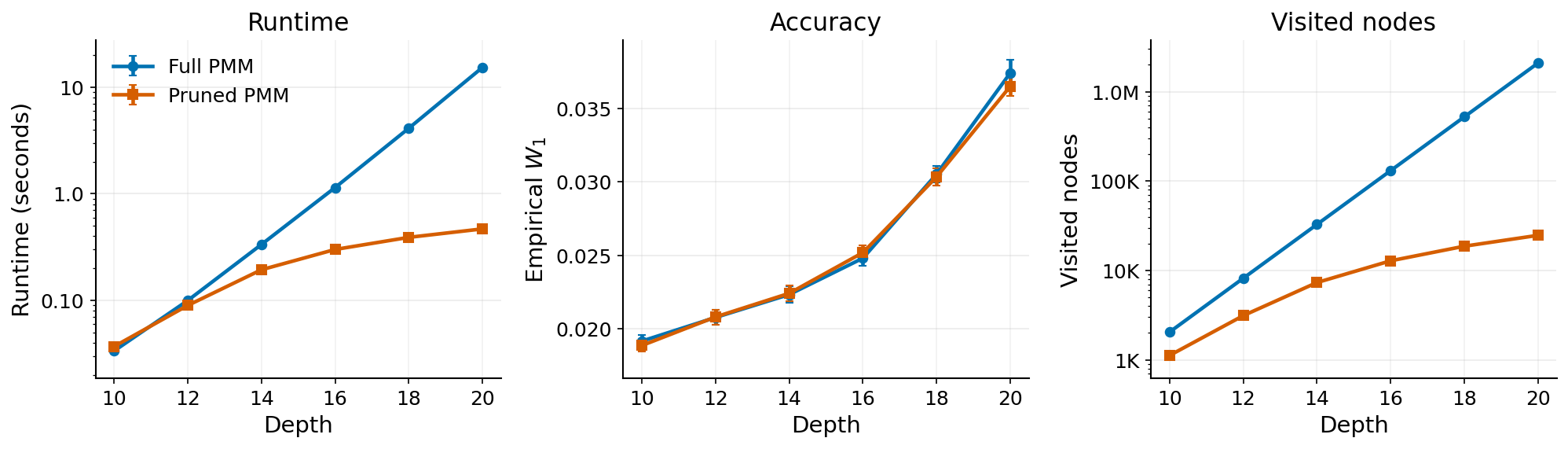}
\caption{
Depth sweep on the 2D shape suite. For each depth, points and error bars average
over the nine shapes and 10 independent runs per shape. Pruned-PMM has comparable
empirical \(W_1\) to full PMM, while using substantially less runtime and visiting
far fewer tree nodes.
}
\label{fig:depth-sweep-main}
\end{figure}

\section{Conclusion}
\label{sec:conclusion}

We studied differentially private synthetic data under Wasserstein utility when
the empirical support occupies a geometrically simpler subset of the ambient cube.
The key structural assumption is packing growth: across scales, the neighborhood
of the empirical support can be packed like a \(k\)-dimensional set rather than a
\(d\)-dimensional one. This condition is natural for regular low-dimensional supports, such as
compact curves, surfaces, Lipschitz manifolds, finite unions of such pieces, and, more generally, sets with controlled upper Minkowski growth.
Building on PMM, we introduced \emph{Pruned-PMM} and \emph{Adaptive Pruned-PMM}. Pruned-PMM
keeps the same multiscale accuracy mechanism as PMM but avoids expanding branches with zero corrected mass. Adaptive Pruned-PMM adds a private depth-selection step, allowing the mechanism to choose among public depth--noise schedules
without knowing the packing-growth dimension in advance. Under packing-growth dimension \(k\), for fixed positive privacy budgets and fixed geometry as \(n\) grows, the resulting Wasserstein error has leading rate \((\varepsilon n)^{-1/k}\) for \(k>1\), replacing the ambient-dimensional PMM exponent \(1/d\) by the packing-growth exponent \(1/k\). The lower bound shows that
this exponent is unavoidable under a matching internal packing-growth condition.

\paragraph{Limitations.}
Our results are based on the realized empirical support of the data. If a population distribution is supported on a fixed set satisfying the same external packing upper bound, every empirical support inherits that bound by containment. Another natural direction is to replace the
integer-indexed candidate set with a finer public grid, allowing adaptation to real-valued growth dimensions at the cost of a larger private-selection penalty.
Finally, the upper and lower bounds use closely related but not identical packing conditions: the upper bound uses external packing growth, while the lower bound uses internal packing growth. Proposition~\ref{prop:matching-coordinate-supports} identifies a class satisfying both conditions and matches the dependence on \(\varepsilon n\) for fixed \(d,k,J\); matching the dependence on ambient dimension and geometric constants more generally remains open. 

\section*{Acknowledgments}

Amir R. Asadi acknowledges support from the Leverhulme Trust
(grant ECF-2023-189) and the Isaac Newton Trust (grant 23.08(b)).

\bibliographystyle{unsrtnat}
\bibliography{bib}

\newpage
\appendix
\phantomsection
\addcontentsline{toc}{chapter}{Appendix Contents}
\localtableofcontents
\include{upper}
\include{exponential}
\include{lower}
\include{app_experiment}

\end{document}

%% file: upper.tex
\section{Pruned PMM under packing-growth}

\subsection{Setup and notation}

Let $\Omega=[0,1]^d$ equipped with $\rho(x,y)=\|x-y\|_\infty$. Then $\operatorname{diam}(\Omega)=1$.
For a finite multiset $\mathcal{X}=\{x_1,\dots,x_n\}\subset\Omega$ with $n\ge 1$ define the empirical
probability measure
\[
\mu_\mathcal{X} := \frac{1}{n}\sum_{i=1}^n \delta_{x_i}.
\]
Let $W_1(\cdot,\cdot)$ denote 1-Wasserstein distance w.r.t.\ $\rho$.
\paragraph{Hierarchical Partition}
Fix a depth $r\ge 0$ and construct a binary partition tree
$\{\Omega_\theta\}_{\theta\in\{0,1\}^{\le r}}$ as follows: $\Omega_\varnothing=\Omega$ and
each node $\theta$ at depth $|\theta|=j<r$ is split into children $\theta0,\theta1$ by
bisecting $\Omega_\theta$ at the midpoint along coordinate $((j\bmod d)+1)$.\\\\
Given data $X=(x_1,\dots,x_n)\in\Omega^n$ ($n\ge 1$), define true counts
\[
n_\theta := \#\{i\in[n]: x_i\in\Omega_\theta\}\qquad(|\theta|\le r).
\]

\paragraph{Noisy counts.}
For each node $\theta$ at level $|\theta|=j$, sample an independent integer-valued noise
$\lambda_\theta\sim\mathrm{LapZ}(\sigma_j)$, and define truncated noisy counts
\[
n'_\theta := (n_\theta+\lambda_\theta)_+ \ :=\ \max\{n_\theta+\lambda_\theta,\,0\}.
\]

\subsection{Algorithm}
Pruned-PMM samples the noisy counts of the two children of each
active node and immediately applies
Algorithm~\ref{alg:pruned_consistency}. Only children with positive
corrected counts are retained. The root is initialized as
\(m_\varnothing=\max\{1,n'_\varnothing\}\).

For each level \(j\), define
\[
A_j:=\{\theta\in\{0,1\}^j:m_\theta>0\},
\qquad
\mathrm{Child}(\theta)
:=\{\theta b:b\in\{0,1\},\ m_{\theta b}>0\}.
\]
Every inactive node and all its descendants have corrected count
zero. Thus \((m_\theta)\) is understood as a consistent family on
the full tree. The zero-respecting local correction retains the
single-zero routing rule and the uniformly random double-zero
fallback. Algorithm~\ref{alg:pruned-pmm} gives the full procedure.

\begin{remark}[Relation to standard PMM]
Standard PMM enforces consistency by replacing each noisy sibling pair by
\emph{any} comparable vector on the line of the parent mass. In contrast,
Pruned-PMM imposes a stricter zero-respecting rule in the zero-noisy cases:
if exactly one noisy child count is zero, that child is assigned zero consistent
mass and the full parent mass is routed to the other child; if both noisy child
counts are zero, one child is selected and kept active while the other is assigned
zero mass. Outside the zero cases, any fixed local comparable rule is admissible.
Children with zero corrected mass are not expanded.
Thus Pruned-PMM is best viewed as a structured PMM-type variant with a
zero-respecting consistency map. 
\end{remark}

\subsection{Privacy}

Throughout this subsection, adjacency means \emph{replacement adjacency} on $\Omega^n$:
two data sets $X,X'\in\Omega^n$ are adjacent if they differ in exactly one coordinate.

The privacy proof is most naturally stated through a conceptual full-tree experiment.
Although the implementation of Pruned-PMM samples noises only when active nodes
are visited, we may equivalently imagine that an independent noise variable is sampled
in advance for \emph{every} node of the complete binary tree of depth $r$.
The pruned algorithm then reveals and uses only those coordinates of the full noisy tree
that are needed by the pruning rule. This is exactly the right viewpoint for privacy.

\begin{remark}[Lazy noise sampling as post-processing]\label{rem:lazy}
For every node $\theta\in\{0,1\}^{\le r}$, let
\[
\lambda_\theta\sim \mathrm{LapZ}(\sigma_{|\theta|})
\]
be sampled independently. The actual implementation of Pruned-PMM samples
$\lambda_\theta$ only when the node $\theta$ is visited, but this is equivalent in distribution
to first sampling the full family
$
\lambda=(\lambda_\theta)_{\theta\in\{0,1\}^{\le r}}
$
and then running the pruning, consistency, and output steps using only the visited coordinates.
Hence the lazy implementation is a measurable function of the full noisy tree and differs from
the full-tree experiment only by post-processing.
\end{remark}

\begin{lemma}[discrete-Laplace mechanism]\label{lem:inhom}
Let $F:\Omega^n\to\mathbb Z^N$ be any map, let $s=(s_i)_{i=1}^N\in(0,\infty)^N$,
and let $\lambda=(\lambda_i)_{i=1}^N$ have independent coordinates
$\lambda_i\sim \mathrm{LapZ}(s_i)$. Then the mechanism
$
X\longmapsto F(X)+\lambda
$
is $\varepsilon$-differentially private with
$
\varepsilon
=
\sup_{X\sim X'} \|F(X)-F(X')\|_{\ell_1(s)},
\,\,
\|z\|_{\ell_1(s)}:=\sum_{i=1}^N \frac{|z_i|}{s_i},
$
where the supremum is over all adjacent data sets $X,X'$.
\end{lemma}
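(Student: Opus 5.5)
The plan is to reduce the claim to the privacy of pure Laplace noise with respect to a weighted $\ell_1$ norm, coordinate by coordinate. Fix adjacent $X\sim X'$ and an output $z\in\mathbb Z^N$. Since the noise coordinates are independent, the output $F(X)+\lambda$ has probability mass function
\[
\Pr\bigl(F(X)+\lambda=z\bigr)=\prod_{i=1}^N p_{s_i}\bigl(z_i-F_i(X)\bigr),
\qquad p_s(w):=\frac{1-e^{-1/s}}{1+e^{-1/s}}e^{-|w|/s}.
\]
The same formula holds for $X'$ with $F(X')$ in place of $F(X)$. The output space $\mathbb Z^N$ is countable, so it suffices to bound the ratio of these two mass functions pointwise. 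Summing the pointwise bound over $z\in E$ then gives the inequality for an arbitrary event $E$.

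For the pointwise ratio, the normalizing constants cancel coordinatewise. This leaves
\[
\frac{\Pr(F(X)+\lambda=z)}{\Pr(F(X')+\lambda=z)}
=\exp\Bigl(\sum_{i=1}^N\frac{|z_i-F_i(X')|-|z_i-F_i(X)|}{s_i}\Bigr).
\]
By the reverse triangle inequality, $|z_i-F_i(X')|-|z_i-F_i(X)|\le |F_i(X)-F_i(X')|$. The exponent is therefore at most $\|F(X)-F(X')\|_{\ell_1(s)}$, which in turn is at most the supremum $\varepsilon$ in the statement.

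If that supremum is infinite, the statement holds trivially. Otherwise, summing over $z\in E$ gives $\Pr(F(X)+\lambda\in E)\le e^{\varepsilon}\Pr(F(X')+\lambda\in E)$ for every $E\subseteq\mathbb Z^N$, as required.

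There is no real obstacle here. The only points needing care are that the discrete Laplace mass is strictly positive everywhere, so that the ratios are well defined, and that independence is what lets the joint mass factorize. This lemma will then be applied with $F(X)=(n_\theta)_\theta$ and $s_\theta=\sigma_{|\theta|}$. Under a replacement, at most one count per level changes by $-1$ and one by $+1$, which gives a total budget of $\sum_j 2/\sigma_j$.
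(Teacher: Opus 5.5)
Your proposal is correct and matches the paper's proof: factorize the joint mass by independence, cancel the normalizing constants, bound each exponent term by the triangle inequality, and take the supremum over adjacent pairs. Your explicit summation over $z\in E$ and your remark that the mass function is strictly positive only make precise steps the paper leaves implicit.
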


\begin{proof}
Fix adjacent data sets $X,X'$ and an output vector $y\in\mathbb Z^N$. Then
\[
\frac{\Pr(F(X)+\lambda=y)}{\Pr(F(X')+\lambda=y)}
=
\prod_{i=1}^N
\frac{\Pr(\lambda_i=y_i-F(X)_i)}{\Pr(\lambda_i=y_i-F(X')_i)}.
\]
Using the pmf of $\mathrm{LapZ}(s_i)$, the normalizing constants cancel, giving
\[
\frac{\Pr(F(X)+\lambda=y)}{\Pr(F(X')+\lambda=y)}
=
\exp\!\left(
\sum_{i=1}^N
\frac{|y_i-F(X')_i|-|y_i-F(X)_i|}{s_i}
\right).
\]
By the triangle inequality,
$
|y_i-F(X')_i|-|y_i-F(X)_i|
\le |F(X)_i-F(X')_i|,
$
hence
\[
\frac{\Pr(F(X)+\lambda=y)}{\Pr(F(X')+\lambda=y)}
\le
\exp\!\left(
\sum_{i=1}^N \frac{|F(X)_i-F(X')_i|}{s_i}
\right)
=
\exp\!\big(\|F(X)-F(X')\|_{\ell_1(s)}\big).
\]
Taking the supremum over adjacent $X,X'$ yields the claim.
\end{proof}

\begin{theorem}[Privacy of Pruned-PMM]\label{thm:privacy}
Let
$
\varepsilon := 2\sum_{j=1}^r \frac{1}{\sigma_j}.
$
Then Pruned-PMM is $\varepsilon$-differentially private.
\end{theorem}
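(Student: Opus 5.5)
The plan is to reduce to Lemma~\ref{lem:inhom} via the full-tree viewpoint of Remark~\ref{rem:lazy}. Sample in advance an independent noise $\lambda_\theta\sim\mathrm{LapZ}(\sigma_{|\theta|})$ for every node $\theta\in\{0,1\}^{\le r}$ of the complete depth-$r$ tree. Let $F(X)$ be the vector of true counts at all \emph{non-root} nodes, $F(X)=(n_\theta)_{1\le|\theta|\le r}\in\mathbb Z^N$.

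The root needs separate handling, since $\varepsilon$ contains no $1/\sigma_0$ term. The root count equals $n$ for every $X\in\Omega^n$, so $n_\varnothing$ does not depend on the data. Hence $n'_\varnothing=(n+\lambda_\varnothing)_+$ and $m_\varnothing=\max\{1,n'_\varnothing\}$ are functions of $\lambda_\varnothing$ alone. That noise is independent of everything else, so it can be treated as public auxiliary randomness and contributes no privacy loss. I would therefore condition on $\lambda_\varnothing$ (equivalently, fold it into the post-processing) and apply Lemma~\ref{lem:inhom} only to $F(X)+\lambda$ on the non-root coordinates.

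Next comes the sensitivity computation. For adjacent $X\sim X'$, one point $x_i$ is replaced by $x_i'$. At each level $j\ge1$ the cells $\{\Omega_\theta\}_{|\theta|=j}$ partition $\Omega$, so $x_i$ lies in exactly one cell and $x_i'$ lies in exactly one cell. This holds provided the bisection convention makes the cells a genuine partition, for instance half-open intervals with the boundary point $1$ assigned consistently; I would state this tie convention explicitly. Thus at level $j$ the vector $F(X)-F(X')$ has at most two nonzero entries, each of absolute value $1$: one count decreases by one and another increases by one, or nothing changes if both points fall in the same cell. Therefore
\[
\|F(X)-F(X')\|_{\ell_1(s)}\le\sum_{j=1}^r\frac{2}{\sigma_j}=\varepsilon,
\]
and Lemma~\ref{lem:inhom} shows that the full noisy count vector $(n_\theta+\lambda_\theta)_{1\le|\theta|\le r}$ is $\varepsilon$-DP, conditionally on the root randomness. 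Since the root noise is data-independent, this conditional guarantee integrates to the unconditional one.

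Finally, the output $Y$ must be a post-processing of this vector together with data-independent randomness. Truncation $(\cdot)_+$ is coordinatewise post-processing. The pruned traversal uses only $m_\theta$ and the noisy child counts $n'_{\theta0},n'_{\theta1}$. Algorithm~\ref{alg:pruned_consistency} depends only on these values and on auxiliary coins (the double-zero fallback and the fixed local rule), which are independent of $X$. Leaf placement uses a fixed data-independent rule. By Remark~\ref{rem:lazy}, the lazy implementation has the same output distribution as this full-tree post-processing, and post-processing preserves differential privacy.

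The main obstacle is making sure nothing in the traversal or in the correction rule touches the true counts $n_\theta$ other than through $n_\theta+\lambda_\theta$. In particular, I need to check that the active set and the ``fixed local rule'' never consult the data directly. The second point needing care is the root-count argument, which relies on the dataset size $n$ being public under replacement adjacency.
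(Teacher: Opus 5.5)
Your proposal is correct and matches the paper's proof. Both arguments pass to the full noisy tree via lazy sampling, bound the weighted $\ell_1$ sensitivity by $2/\sigma_j$ at each level $j\ge1$, invoke Lemma~\ref{lem:inhom}, and treat truncation, pruned consistency, and leaf placement as post-processing with data-independent randomness. The one cosmetic difference is the root: the paper keeps $n_\varnothing$ as a coordinate of $F$ and notes it has zero sensitivity since $n_\varnothing(X)=n_\varnothing(X')=n$, whereas you drop it and treat $\lambda_\varnothing$ as independent auxiliary randomness, which is equivalent.
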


\begin{proof}
Consider the complete binary tree $\{0,1\}^{\le r}$ and define the count map
\[
F(X):=(n_\theta(X))_{\theta\in\{0,1\}^{\le r}}.
\]
For each node $\theta$, set
$
\sigma_\theta:=\sigma_{|\theta|},
\,\,
\lambda_\theta\sim \mathrm{LapZ}(\sigma_\theta),
$
with all $\lambda_\theta$ independent.
Fix adjacent data sets $X,X'\in\Omega^n$, and suppose they differ only in one coordinate:
for some $i\in[n]$, the point $x_i$ in $X$ is replaced by $x_i'$ in $X'$.
At the root level, both data sets contain exactly $n$ points, so
$
n_\varnothing(X)=n_\varnothing(X')=n.
$
Thus the root contributes zero to the weighted $\ell_1$ sensitivity.

Now fix a level $j\in\{1,\dots,r\}$. Let $\theta_j\in\{0,1\}^j$ be the unique cell at level $j$
containing $x_i$, and let $\theta_j'\in\{0,1\}^j$ be the unique cell at level $j$ containing $x_i'$.

If $\theta_j=\theta_j'$, then no count changes at level $j$, and therefore
\[
\sum_{\theta\in\{0,1\}^j}|n_\theta(X)-n_\theta(X')|=0.
\]
If $\theta_j\neq\theta_j'$, then exactly two counts change at level $j$:
the count of $\Omega_{\theta_j}$ changes by $1$ in magnitude, the count of $\Omega_{\theta_j'}$
also changes by $1$ in magnitude, and all other level-$j$ counts remain unchanged. Hence
\[
\sum_{\theta\in\{0,1\}^j}|n_\theta(X)-n_\theta(X')|=2.
\]
It follows that
\[
\|F(X)-F(X')\|_{\ell_1(\sigma)}
=
\sum_{j=0}^r \sum_{\theta\in\{0,1\}^j}
\frac{|n_\theta(X)-n_\theta(X')|}{\sigma_j}
\le
2\sum_{j=1}^r \frac{1}{\sigma_j}
=
\varepsilon.
\]
By Lemma~\ref{lem:inhom}, the full noisy count vector
$
\big(n_\theta(X)+\lambda_\theta\big)_{\theta\in\{0,1\}^{\le r}}
$
is $\varepsilon$-differentially private.

Now define the truncated noisy counts
$
n'_\theta := (n_\theta+\lambda_\theta)_+.
$
The family $(n'_\theta)$ is a deterministic function of the full noisy count vector.
The active sets \(A_j\), child map \(\mathrm{Child}(\theta)\), and
corrected counts \((m_\theta)\) produced by
Algorithm~\ref{alg:pruned-pmm}, using the local correction in
Algorithm~\ref{alg:pruned_consistency}, are measurable functions
of the full noisy count vector and auxiliary randomness independent
of the data. Stopping at zero corrected counts is therefore
post-processing.
Finally, the synthetic data $Y$ are generated from the counts $(m_\theta)$ using randomness
independent of $X$.

Hence the entire output of Pruned-PMM is obtained from the full noisy count vector by
post-processing and independent randomization. Therefore Pruned-PMM is
$\varepsilon$-differentially private.
\end{proof}
\subsection{Noise model and structural assumptions}

\begin{assumption}[Discrete Laplace noise]\label{ass:noise}
For $\sigma>0$, let $\lambda\sim\mathrm{LapZ}(\sigma)$ denote the integer-valued
discrete Laplace distribution with probability mass function
\[
\Pr(\lambda=z)
=
\frac{1-e^{-1/\sigma}}{1+e^{-1/\sigma}}\,e^{-|z|/\sigma},
\qquad z\in\mathbb Z.
\]
Then $\lambda$ is symmetric, i.e.
$
\Pr(\lambda=z)=\Pr(\lambda=-z)\,\,(z\in\mathbb Z),
$
and the following bounds hold for all $\sigma>0$ and all $t\ge 0$:
\[
\mathbb E|\lambda|
=
\frac{1}{\sinh(1/\sigma)}
\le \sigma,
\qquad
\Pr(\lambda\le -t)
=
\frac{e^{-\lceil t\rceil/\sigma}}{1+e^{-1/\sigma}}
\le e^{-t/\sigma}.
\]
Equivalently, in the bounds
$
\mathbb E|\lambda|\le C_{\mathrm{abs}}\sigma,
\,\,
\Pr(\lambda\le -t)\le C_{\mathrm{tail}}e^{-t/\sigma},
$
one may take
$
C_{\mathrm{abs}}=1,
\,\,
C_{\mathrm{tail}}=1.
$
\end{assumption}

\begin{assumption}[External packing-growth]\label{ass:pack}
For $\gamma,\xi>0$, define the external packing number of the data set
$S:=\{x_1,\dots,x_n\}\subset\Omega$ by
\[
N_{\mathrm{pack}}^{\mathrm{ext}}(S;\gamma,\xi)
:=
\max\Big\{
|T|:\ T\subset\Omega,\ \operatorname{dist}_\infty(t,S)\le \gamma\ \ \forall t\in T,
\ \|t-t'\|_\infty\ge \xi\ \ \forall t\neq t'\in T
\Big\}.
\]
All packing-growth assumptions in the upper bound are imposed on the realized
empirical support \(S_X\), not on an underlying population distribution.
The empirical support \(S_X\) has packing-growth dimension \(k\) if
\[
N_{\rm pack}^{\rm ext}(S_X;\eta,\eta/2)\lesssim \eta^{-k},
\,\, 0<\eta\le1.
\]
The implicit constant is a geometric regularity constant, independent of
\(n\), \(\varepsilon\), and the algorithmic depth. In the main text, we suppress
this constant and refer to \(k\) as the packing-growth dimension. The appendix
states the corresponding bounds with the explicit constant \(C_{\rm pack}\).
\end{assumption}

\begin{remark}[Neighborhood-packing interpretation]\label{rem:pack-neighborhood}
The external packing number is exactly the packing number of the $\eta$-neighborhood of $S$:
\[
N_{\mathrm{pack}}^{\mathrm{ext}}(S;\eta,\xi)
=
N_{\mathrm{pack}}(S^{(+\eta)},\xi),
\qquad
S^{(+\eta)}:=\{z\in\Omega:\operatorname{dist}_\infty(z,S)\le \eta\}.
\]
Thus Assumption~\ref{ass:pack} may equivalently be viewed as a packing bound on the neighborhoods
$S^{(+\eta_j)}$.
\end{remark}

\subsection{Geometry and occupancy consequences of packing}

At depth $j$, the cyclic binary partition produces cells whose diameters are of order $2^{-j/d}$.
The packing assumption is useful because the centers of occupied depth-$j$ cells form a separated
set inside the $\eta_j$-neighborhood of the data. This allows us to convert geometric control of
$S^{(+\eta_j)}$ into a bound on the number of occupied cells.

\begin{proposition}[Cell diameters]\label{prop:diam}
For every node $\theta$ at depth $|\theta|=j$,
$
\operatorname{diam}(\Omega_\theta)\le 2\cdot 2^{-j/d}.
$

In particular, the leaf resolution
$
\delta:=\max_{|\theta|=r}\operatorname{diam}(\Omega_\theta)
$
satisfies
$
\delta\le 2\cdot 2^{-r/d}.
$
\end{proposition}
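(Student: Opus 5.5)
The plan is to determine the side lengths of an arbitrary depth-$j$ cell exactly and then bound the $\ell_\infty$ diameter by the largest side. Since $\rho$ is the $\ell_\infty$ metric, every cell is an axis-aligned box $\prod_{i=1}^d I_i$, and its diameter equals $\max_i |I_i|$. So it suffices to show that the longest side of a depth-$j$ cell is at most $2\cdot 2^{-j/d}$.

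First, we count the number of bisections each coordinate has received. By the construction in the hierarchical partition, the split at depth $\ell$ (producing depth $\ell+1$) bisects coordinate $(\ell \bmod d)+1$. Along the path from the root to $\theta$, the depths $\ell=0,\dots,j-1$ occur, so coordinate $i$ is bisected exactly
\[
c_i(j) := \#\{0\le \ell\le j-1:\ \ell\equiv i-1 \pmod d\}
\]
times. Write $j = qd + s$ with $0\le s<d$. Then $c_i(j)=q+1$ for $i\le s$ and $c_i(j)=q$ otherwise. In either case $c_i(j)\ge q=\lfloor j/d\rfloor$.

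Each bisection halves the side length, and the root is $[0,1]^d$, so the side length along coordinate $i$ is $2^{-c_i(j)}\le 2^{-\lfloor j/d\rfloor}$. Since $\lfloor j/d\rfloor > j/d - 1$, we get
\[
2^{-\lfloor j/d\rfloor} < 2\cdot 2^{-j/d}.
\]
Hence $\operatorname{diam}(\Omega_\theta)=\max_i 2^{-c_i(j)}\le 2\cdot 2^{-j/d}$. The endpoint convention for cells (half-open versus closed) does not matter: the supremum distance within a box equals its longest side whether or not the boundary is included, and closure only helps the upper bound. The claim about $\delta$ is the special case $j=r$, maximized over the finitely many leaves.

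I do not expect a real obstacle. The only step needing care is the bookkeeping of the cyclic coordinate order, namely checking that every coordinate has been split at least $\lfloor j/d\rfloor$ times. This follows because the indices $0,\dots,j-1$ contain $q$ complete residue cycles modulo $d$.
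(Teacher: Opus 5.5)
Your proposal is correct and takes essentially the same approach as the paper. Both arguments observe that each coordinate has been bisected at least $\lfloor j/d\rfloor$ times, bound the $\ell_\infty$ diameter by the largest side length $2^{-\lfloor j/d\rfloor}$, and then use $\lfloor j/d\rfloor \ge j/d - 1$. Your exact count $c_i(j)\in\{q,q+1\}$ and your remark about endpoint conventions are extra detail, not a different route.
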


\begin{proof}
After $j$ bisections, each coordinate has been split at least $\lfloor j/d\rfloor$ times.
Therefore, every side length of $\Omega_\theta$ is at most
$
2^{-\lfloor j/d\rfloor}.
$
Since $\rho=\|\cdot\|_\infty$, the diameter of an axis-aligned rectangle is its largest side length, so
$
\operatorname{diam}(\Omega_\theta)\le 2^{-\lfloor j/d\rfloor}.
$
Using
$
\lfloor j/d\rfloor \ge j/d-1,
$
we obtain
\[
2^{-\lfloor j/d\rfloor}\le 2^{-(j/d-1)}=2\cdot 2^{-j/d},
\]
which proves the claim. The bound for $\delta$ follows by taking the maximum over all leaves.
\end{proof}

\begin{lemma}[Packing-growth controls occupancy]\label{lem:pack2occ}
Let
\[
\tau_j:=\{\theta\in\{0,1\}^j:\ n_\theta>0\}
\]
be the set of occupied depth-$j$ cells, and let $\eta_j:=2^{-j/d}$. Under
Assumption~\ref{ass:pack},
\[
|\tau_j|
\le
N_{\mathrm{pack}}^{\mathrm{ext}}(S;\eta_j,\eta_j/2)
\le
C_{\mathrm{pack}}\,2^{jk/d}
\qquad\text{for all }j\le r.
\]
\end{lemma}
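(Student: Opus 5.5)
The first inequality follows from an injection of occupied cells into a separated set; the second is then Assumption~\ref{ass:pack} at scale \(\eta_j\).

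\emph{Construction.} For each occupied cell \(\theta\in\tau_j\), let \(c_\theta\) be the center of \(\Omega_\theta\). Set \(T_j:=\{c_\theta:\theta\in\tau_j\}\). Distinct cells have distinct centers, so \(|T_j|=|\tau_j|\).

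\emph{Step 1: proximity.} Since \(n_\theta>0\), the cell \(\Omega_\theta\) contains some data point \(x_i\in S\). Every point of an axis-aligned box lies within half the largest side length of the center in \(\|\cdot\|_\infty\). Hence
\[
\operatorname{dist}_\infty(c_\theta,S)\le \tfrac12\operatorname{diam}(\Omega_\theta).
\]

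\emph{Step 2: separation.} After \(j\) cyclic bisections, the side length along coordinate \(i\) is \(2^{-a_i}\) with \(a_i\in\{\lfloor j/d\rfloor,\lceil j/d\rceil\}\). All level-\(j\) cells are translates of a single box on a regular grid. Two distinct cells therefore differ by a nonzero multiple of the side length in at least one coordinate. It follows that their centers are at \(\ell_\infty\)-distance at least the smallest side length, which is \(2^{-\lceil j/d\rceil}\).

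\emph{Step 3: comparison with the packing scale.} This is the step I expect to be the main obstacle, because the bounds from Steps 1--2 are not directly \(\eta_j\) and \(\eta_j/2\). Step 1, with the exact diameter \(2^{-\lfloor j/d\rfloor}\), gives proximity \(2^{-\lfloor j/d\rfloor-1}\). Step 2 gives separation \(2^{-\lceil j/d\rceil}\).

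Write \(j=qd+\rho\) with \(0\le\rho<d\). If \(\rho=0\), the proximity is \(\tfrac12\eta_j\le\eta_j\) and the separation is \(\eta_j\ge\eta_j/2\), so both requirements hold. If \(\rho>0\), the separation is \(2^{-q-1}\). Since \(\eta_j=2^{-q}\,2^{-\rho/d}\) with \(2^{-\rho/d}\in(1/2,1)\), we get \(\eta_j/2<2^{-q-1}\), so the separation requirement holds. The proximity is \(2^{-q-1}\le\eta_j\), because \(2^{-\rho/d}>1/2\); so that requirement holds as well.

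\emph{Conclusion.} In every case \(T_j\) is admissible in the definition of \(N^{\rm ext}_{\rm pack}(S;\eta_j,\eta_j/2)\). This gives the first inequality. Should a cruder diameter bound (such as Proposition~\ref{prop:diam}'s factor \(2\)) spoil these constants, I would fall back on the exact side lengths as above. Failing that, a constant rescaling of \(\eta\) can be absorbed into \(C_{\rm pack}\) using \(\eta\le1\).

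The second inequality is Assumption~\ref{ass:pack} applied at \(\eta=\eta_j\in(0,1]\). This gives \(C_{\rm pack}\eta_j^{-k}=C_{\rm pack}2^{jk/d}\).
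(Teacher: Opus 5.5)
Your proof is correct and follows the paper's argument: the centers of occupied depth-$j$ cells lie within $\eta_j$ of $S$ and are $2^{-\lceil j/d\rceil}$-separated, so they form an admissible external packing, and the second inequality is Assumption~\ref{ass:pack} at $\eta=\eta_j$. Your exact-side-length case split in Step~3 is valid but unnecessary. The paper gets both comparisons in one line from $\operatorname{diam}(\Omega_\theta)\le 2\eta_j$ (Proposition~\ref{prop:diam}) and $\lceil j/d\rceil\le j/d+1$, so the factor $2$ you worried about causes no loss and no fallback rescaling is needed.
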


\begin{proof}
Fix $j\le r$. For each $\theta\in\tau_j$, let $c_\theta$ denote the center of the cell
$\Omega_\theta$. Since $\theta$ is occupied, there exists some $x\in S\cap\Omega_\theta$.
Therefore
$
\|c_\theta-x\|_\infty\le \operatorname{diam}(\Omega_\theta)/2.
$
By Proposition~\ref{prop:diam},
$
\operatorname{diam}(\Omega_\theta)\le 2\eta_j,
$
hence
$
\|c_\theta-x\|_\infty\le \eta_j.
$
Thus
$
\operatorname{dist}_\infty(c_\theta,S)\le \eta_j.
$

It remains to show that distinct occupied-cell centers are $\eta_j/2$-separated.
At depth $j$, each coordinate has been bisected either $\lfloor j/d\rfloor$ or $\lceil j/d\rceil$ times,
so the grid spacing in each coordinate is either
$
2^{-\lfloor j/d\rfloor}
\,\,\text{or}\,\,
2^{-\lceil j/d\rceil}.
$
In particular, every grid step is at least
$
2^{-\lceil j/d\rceil}\ge \eta_j/2.
$
Since all depth-$j$ cells have the same side-length vector, their centers lie on a common rectangular
grid at that depth. If $\theta\neq\theta'$, then the centers $c_\theta$ and $c_{\theta'}$ differ in at least
one coordinate by at least one grid step. Therefore
\[
\|c_\theta-c_{\theta'}\|_\infty \ge 2^{-\lceil j/d\rceil}\ge \eta_j/2.
\]
So the set
$
\{c_\theta:\theta\in\tau_j\}
$
is an admissible external packing for $S$ at scale $\eta_j$ and separation $\eta_j/2$. Hence
\[
|\tau_j|\le N_{\mathrm{pack}}^{\mathrm{ext}}(S;\eta_j,\eta_j/2).
\]
The second inequality follows immediately from Assumption~\ref{ass:pack}.
\end{proof}

\begin{remark}[Why this assumption matches occupancy]\label{rem:pack-natural}
A direct packing assumption on the data set $S$ is not well aligned with the quantity $|\tau_j|$:
two distinct occupied cells may contain points that are arbitrarily close if the points lie near a
partition boundary. The external packing formulation avoids this pathology by working with the
centers of occupied cells, which are automatically separated at the cell scale and remain close to
the data set.
\end{remark}

\subsection{Tools}

We collect here the elementary transport lemmas and the flux-based combinatorial tools
used in the proof of the main accuracy theorem.

\begin{lemma}[Changing cardinality by adding points]\label{lem:card}
Let $U\subseteq V$ be finite multisets in $\Omega$ with
$
1\le |U|\le |V|=:N.
$

Then
\[
W_1(\mu_U,\mu_V)\le \frac{|V\setminus U|}{N}\,\operatorname{diam}(\Omega).
\]
\end{lemma}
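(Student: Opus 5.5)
I would prove Lemma~\ref{lem:card} with an explicit coupling. The dual form of $W_1$ in the preliminaries is the Kantorovich--Rubinstein formulation, which agrees with the primal form $\inf_\pi\int\rho\,d\pi$ over couplings. Since $\pi$ and $\rho$ are bounded on the compact space $\Omega$, it suffices to construct one coupling of $\mu_U$ and $\mu_V$ in which only mass at most $|V\setminus U|/N$ ever moves. Each unit of moved mass travels a distance at most $\operatorname{diam}(\Omega)$.

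Write $K:=|U|$ and $V=U\uplus R$ with $|R|=N-K=|V\setminus U|$, where $R$ is the multiset difference. The measure $\mu_V$ puts mass $1/N$ on each element of $U$ and of $R$. The measure $\mu_U$ puts mass $1/K\ge 1/N$ on each element of $U$.

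The transport plan works as follows.
\begin{itemize}
\item For each element $u\in U$, keep mass $1/N$ in place, so it is coupled to itself at cost $0$.
\item The remaining mass of $\mu_U$ at $u$ is $1/K-1/N$. Together these leftovers have total mass $K(1/K-1/N)=1-K/N=(N-K)/N$.
\item This equals the total mass $(N-K)/N$ that $\mu_V$ places on $R$.
\item Couple these two leftover measures by any coupling, for example the product coupling normalized by $(N-K)/N$.
\end{itemize}

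Formally, the coupling is
\[
\pi=\frac1N\sum_{u\in U}\delta_{(u,u)}+\frac{N}{N-K}\Bigl(\bigl(\tfrac1K-\tfrac1N\bigr)\sum_{u\in U}\delta_u\Bigr)\otimes\Bigl(\tfrac1N\sum_{x\in R}\delta_x\Bigr)
\]
when $N>K$. When $N=K$ we have $U=V$ and the bound is trivially $0$.

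The marginals of $\pi$ are $\mu_U$ and $\mu_V$ by construction. The diagonal part costs nothing. The second part has total mass $(N-K)/N$ and integrates $\rho\le\operatorname{diam}(\Omega)$, so $\int\rho\,d\pi\le\frac{|V\setminus U|}{N}\operatorname{diam}(\Omega)$.

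Alternatively, I can avoid invoking primal duality by working directly with the dual form. Fix a $1$-Lipschitz $f$; since the integrals are unchanged by adding a constant, assume $f(u_0)=0$ for some $u_0\in U$, which gives $\lVert f\rVert_\infty$-type control via the diameter. A cleaner dual computation is
\[
\int f\,d\mu_U-\int f\,d\mu_V=\frac{N-K}{N}\Bigl(\bar f_U-\bar f_R\Bigr),
\]
where $\bar f_U$ and $\bar f_R$ are the averages of $f$ over $U$ and over $R$. Both averages lie in $[\min f,\max f]$, and this interval has length at most $\operatorname{diam}(\Omega)$ because $f$ is $1$-Lipschitz. So the dual form gives the bound directly, with no primal duality needed.

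The identity above is the only step that needs care. To verify it, expand
\[
\int f\,d\mu_V=\frac KN\,\bar f_U+\frac{N-K}{N}\,\bar f_R
\qquad\text{and}\qquad
\int f\,d\mu_U=\bar f_U,
\]
and subtract. Taking absolute values and the supremum over $f$ then finishes the proof.
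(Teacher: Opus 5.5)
Your coupling is exactly the paper's: the product part puts mass $\frac{N}{N-K}\bigl(\tfrac1K-\tfrac1N\bigr)\tfrac1N=\tfrac{1}{NK}$ on each pair $(u,x)\in U\times R$, which is the paper's rule of spreading each $x\in R$ uniformly over $U$, so the argument is correct and follows the same route. Your alternative dual computation $\int f\,d\mu_U-\int f\,d\mu_V=\frac{N-K}{N}(\bar f_U-\bar f_R)$ is also correct. It has the small merit of working directly with the dual definition of $W_1$ adopted in the preliminaries, whereas the paper's coupling proof tacitly uses the primal form.
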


\begin{proof}
Let $M:=|U|$ and write the multiset decomposition
\[
V = U \uplus R,
\qquad
R:=V\setminus U,
\qquad
|R|=N-M.
\]
Recall that
$
\mu_V=\frac{1}{N}\sum_{v\in V}\delta_v,
\,\,
\mu_U=\frac{1}{M}\sum_{u\in U}\delta_u.
$
Construct a coupling $\pi\in\Pi(\mu_V,\mu_U)$ as follows:
\begin{itemize}
\item for each copy of $u\in U$ inside $V$, keep its mass $1/N$ at $u$;
\item for each $v\in R$, split its mass $1/N$ uniformly over the $M$ points of $U$,
sending mass $1/(NM)$ from $v$ to each $u\in U$.
\end{itemize}
Then the first marginal of $\pi$ is $\mu_V$ by construction. Moreover, each $u\in U$
receives total mass
\[
\frac{1}{N}+|R|\cdot \frac{1}{NM}
=
\frac{1}{N}+\frac{N-M}{NM}
=
\frac{1}{M},
\]
so the second marginal is exactly $\mu_U$.

Only the mass originating from $R$ moves nontrivially. Each $v\in R$ transports total mass $1/N$,
and every transported piece travels distance at most $\operatorname{diam}(\Omega)$. Therefore
\[
\int_{\Omega\times\Omega}\rho(x,y)\,d\pi(x,y)
\le
|R|\cdot \frac{1}{N}\operatorname{diam}(\Omega)
=
\frac{|V\setminus U|}{N}\operatorname{diam}(\Omega).
\]
Taking the infimum over couplings proves the claim.
\end{proof}

\begin{lemma}[Relocation within leaves]\label{lem:leaf}
Let
$
\delta:=\max_{|\theta|=r}\operatorname{diam}(\Omega_\theta).
$
Let $A,B\subset\Omega$ be finite multisets with $|A|=|B|\ge 1,$
and suppose that for every leaf cell $\Omega_\theta$ with $|\theta|=r$,
$
|A\cap\Omega_\theta|=|B\cap\Omega_\theta|.
$

Then
\[
W_1(\mu_A,\mu_B)\le \delta.
\]
\end{lemma}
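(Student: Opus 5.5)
We prove Lemma~\ref{lem:leaf} by building an explicit coupling that only moves mass inside leaf cells. Set $N:=|A|=|B|\ge1$. For a leaf $\theta$ with $|\theta|=r$, write $a_\theta:=|A\cap\Omega_\theta|$ and $b_\theta:=|B\cap\Omega_\theta|$, so that $a_\theta=b_\theta$ by hypothesis. The leaf cells partition $\Omega$, so every point of $A$ and of $B$ lies in exactly one leaf.

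First, we fix the convention at cell boundaries. For the counts to be well defined, each point must belong to a unique leaf; we use the same half-open convention that makes $n_\theta$ well defined in the hierarchy. We then decompose the multisets as $A=\biguplus_\theta A_\theta$ and $B=\biguplus_\theta B_\theta$, where $A_\theta$ and $B_\theta$ are the points assigned to leaf $\theta$. This gives $|A_\theta|=|B_\theta|=a_\theta$.

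Next, we define the coupling. For every leaf with $a_\theta>0$, choose any bijection $\phi_\theta:A_\theta\to B_\theta$, counting multiplicity. Gluing these gives a bijection $\phi:A\to B$. Let
\[
\pi:=\frac1N\sum_{x\in A}\delta_{(x,\phi(x))}.
\]
Its first marginal is $\mu_A$. Its second marginal is $\frac1N\sum_{y\in B}\delta_y=\mu_B$, because $\phi$ is a bijection of multisets. Hence $\pi\in\Pi(\mu_A,\mu_B)$.

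Then we bound the cost. Each pair $(x,\phi(x))$ lies in the same leaf $\Omega_\theta$, so $\rho(x,\phi(x))\le\operatorname{diam}(\Omega_\theta)\le\delta$. Therefore
\[
\int\rho\,d\pi\le\frac1N\cdot N\delta=\delta,
\]
and taking the infimum over couplings gives $W_1(\mu_A,\mu_B)\le\delta$.

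The argument uses the primal coupling form of $W_1$, which matches the dual definition in the paper by Kantorovich--Rubinstein duality; Lemma~\ref{lem:card} relies on the same fact. The only delicate point is the boundary convention, and it is resolved by the tree's own assignment of points to cells. If that convention were left ambiguous, we would instead use the closure of each cell; the closure has the same diameter, so the bound is unchanged.
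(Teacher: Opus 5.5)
Your proposal is correct and follows essentially the same route as the paper's proof: match the points of $A$ and $B$ leaf by leaf (possible since the leaf counts agree), and observe that every matched pair lies in a common leaf and so moves by at most $\operatorname{diam}(\Omega_\theta)\le\delta$. You add more detail than the paper, namely the explicit coupling $\pi$ with its marginal check and the boundary convention that makes the leaves a genuine partition; this detail is welcome but does not change the argument.
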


\begin{proof}
Inside each leaf $\Omega_\theta$, match the points of $A\cap\Omega_\theta$ to the points of
$B\cap\Omega_\theta$. This is possible because the leaf counts agree. The resulting matching
defines a coupling between $\mu_A$ and $\mu_B$ in which every matched pair lies in the same
leaf cell, hence moves by distance at most
$
\operatorname{diam}(\Omega_\theta)\le \delta.
$
Therefore, the transport cost of this coupling is at most $\delta$, and so
$
W_1(\mu_A,\mu_B)\le \delta.
$
\end{proof}

\paragraph{Flux.}
For $a,b\in\mathbb Z_+^2$, define
\[
\mathrm{flux}(a,b):=
\begin{cases}
0, & \text{if $a$ and $b$ are comparable},\\[2mm]
\min\{|a_0-b_0|,\ |a_1-b_1|\}, & \text{otherwise}.
\end{cases}
\]

\begin{lemma}[Flux as incomparability]\label{lem:fluxdist}
For all $a,b\in\mathbb Z_+^2$, the quantity $\mathrm{flux}(a,b)$ equals the
$\ell_\infty$-distance from $a$ to the set of vectors in $\mathbb Z_+^2$ that are comparable to $b$.
\end{lemma}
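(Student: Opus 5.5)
The plan is to verify the claim directly by a two-sided argument in the plane. Write \(C_b:=\{c\in\mathbb Z_+^2: c \text{ comparable to } b\}\) and \(D(a):=\min_{c\in C_b}\|a-c\|_\infty\). The minimum exists because we may restrict to a bounded box around \(a\). We then show \(D(a)=\mathrm{flux}(a,b)\) by handling the comparable and incomparable cases separately.

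If \(a\) is comparable to \(b\), then \(a\in C_b\), so \(D(a)=0=\mathrm{flux}(a,b)\).

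Now suppose \(a\) and \(b\) are incomparable. By symmetry in the two coordinates we may assume \(a_0>b_0\) and \(a_1<b_1\). Set \(\delta:=\min\{a_0-b_0,\,b_1-a_1\}\ge1\).

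For the upper bound, we exhibit a comparable point at distance exactly \(\delta\), splitting by which gap is smaller.
\begin{itemize}
\item If \(\delta=a_0-b_0\), take \(c=(b_0,a_1)\). Then \(c\le b\) coordinatewise, so \(c\in C_b\), and \(c\in\mathbb Z_+^2\) because \(b_0,a_1\ge0\). Also \(\|a-c\|_\infty=a_0-b_0=\delta\).
\item If \(\delta=b_1-a_1\), take \(c=(a_0,b_1)\ge b\), which again is nonnegative. Then \(\|a-c\|_\infty=\delta\).
\end{itemize}
Hence \(D(a)\le\delta\).

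For the lower bound, let \(c\in C_b\) be arbitrary; we show \(\|a-c\|_\infty\ge\delta\).
\begin{itemize}
\item If \(c\le b\), then \(c_0\le b_0\), so \(|a_0-c_0|\ge a_0-b_0\ge\delta\).
\item If \(c\ge b\), then \(c_1\ge b_1\), so \(|a_1-c_1|\ge b_1-a_1\ge\delta\).
\end{itemize}
Thus every comparable \(c\) is at \(\ell_\infty\)-distance at least \(\delta\) from \(a\), and \(D(a)\ge\delta\).

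Combining both bounds gives \(D(a)=\delta=\min\{|a_0-b_0|,|a_1-b_1|\}=\mathrm{flux}(a,b)\).

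No step is genuinely hard. The only point needing care is that the witness points in the upper bound stay in \(\mathbb Z_+^2\), which holds because each coordinate is taken from \(a\) or \(b\). One should also note that a comparable vector may be either \(\le b\) or \(\ge b\), and these two branches match the two terms of the minimum in the definition of flux.
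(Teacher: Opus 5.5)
Your proof is correct and follows essentially the same route as the paper. After reducing to $a_0>b_0$, $a_1<b_1$, both arguments split the comparable set into the quadrant below $b$ and the quadrant above $b$, and compute the $\ell_\infty$-distance from $a$ to each as $a_0-b_0$ and $b_1-a_1$; your version additionally writes out the witnesses $(b_0,a_1)$ and $(a_0,b_1)$ and the matching lower bounds that the paper leaves implicit.
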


\begin{proof}
If $a$ and $b$ are comparable, then both quantities are zero.

Otherwise, without loss of generality, assume
$
a_0>b_0,
\,\,
a_1<b_1.
$
The set of vectors comparable to $b$ is
\[
\{x\in\mathbb Z_+^2:\ x_0\le b_0,\ x_1\le b_1\}
\;\cup\;
\{x\in\mathbb Z_+^2:\ x_0\ge b_0,\ x_1\ge b_1\}.
\]
The $\ell_\infty$-distance from $a$ to the first set is $a_0-b_0$, while the
$\ell_\infty$-distance from $a$ to the second set is $b_1-a_1$. Therefore the distance from
$a$ to the union is
\[
\min\{a_0-b_0,\ b_1-a_1\}
=
\min\{|a_0-b_0|,\ |a_1-b_1|\}
=
\mathrm{flux}(a,b).
\]
\end{proof}

\begin{lemma}[Flux is controlled by noise]\label{lem:fluxnoise}
For every internal node $\theta$,
\[
\mathrm{flux}\big((n_{\theta0},n_{\theta1}),(m_{\theta0},m_{\theta1})\big)
\le
\max\{|\lambda_{\theta0}|,\ |\lambda_{\theta1}|\}.
\]
\end{lemma}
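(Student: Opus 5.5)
The plan is to reduce the claim to Lemma~\ref{lem:fluxdist}. The consistency step is built so that the corrected child vector is always comparable to the noisy child vector, and the noisy vector is within $\max_b|\lambda_{\theta b}|$ of the true vector in $\ell_\infty$. Write $n:=(n_{\theta0},n_{\theta1})$, $n':=(n'_{\theta0},n'_{\theta1})$ and $m:=(m_{\theta0},m_{\theta1})$. By Lemma~\ref{lem:fluxdist} with $a=n$ and $b=m$, $\mathrm{flux}(n,m)$ is the $\ell_\infty$-distance from $n$ to the set of vectors in $\mathbb Z_+^2$ comparable to $m$. So it is enough to show (i) that $n'$ lies in this set, and (ii) that $\|n-n'\|_\infty\le\max\{|\lambda_{\theta0}|,|\lambda_{\theta1}|\}$. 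Then the distance is at most $\|n-n'\|_\infty$, which gives the bound.

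For (ii), truncation $x\mapsto(x)_+$ is $1$-Lipschitz and $n_{\theta b}\ge0$ satisfies $(n_{\theta b})_+=n_{\theta b}$. Hence $|n'_{\theta b}-n_{\theta b}|=|(n_{\theta b}+\lambda_{\theta b})_+-(n_{\theta b})_+|\le|\lambda_{\theta b}|$ for each $b$.

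For (i), I go through the cases of Algorithm~\ref{alg:pruned_consistency} at an active node with parent mass $M=m_\theta>0$.
\begin{itemize}
\item \emph{Generic case.} Comparability of $m$ with $n'$ is imposed directly by the fixed local rule.
\item \emph{Exactly one positive noisy count}, say $a_b>0=a_{1-b}$. Then $m_{1-b}=0=n'_{\theta(1-b)}$, and the other coordinates satisfy either $M\ge a_b$ or $M\le a_b$. Since the zero coordinates agree, $m$ and $n'$ are comparable in either direction.
\item \emph{Double zero}, $n'=(0,0)$. Every vector in $\mathbb Z_+^2$ dominates $(0,0)$, so $m$ is comparable to $n'$.
\end{itemize}

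It remains to handle internal nodes that are never visited, i.e.\ inactive nodes. For these the conceptual full-tree family of Remark~\ref{rem:lazy} has $m_{\theta0}=m_{\theta1}=0$. The vector $(0,0)$ is comparable to every vector in $\mathbb Z_+^2$, so the flux is $0$ and the bound holds trivially.

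There is no real obstacle here. The one point to check carefully is the zero-routing cases of the pruned rule, since they are not stated as ``comparable'' in the algorithm. The case analysis above confirms that they still produce a vector comparable to $n'$, which is what makes the argument go through.
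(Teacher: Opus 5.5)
Your proof is correct and follows the paper's argument. Like the paper, you reduce to Lemma~\ref{lem:fluxdist} using two facts: the $1$-Lipschitz truncation bound $\|n-n'\|_\infty\le\max_b|\lambda_{\theta b}|$, and comparability of the corrected pair with the noisy pair, with inactive nodes handled by $m_{\theta0}=m_{\theta1}=0$. Your explicit check that the single-zero and double-zero routing rules of Algorithm~\ref{alg:pruned_consistency} still give a vector comparable to $n'$ fills in a step the paper only asserts. Applying Lemma~\ref{lem:fluxdist} directly, rather than first splitting into comparable and incomparable cases, is a harmless simplification.
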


\begin{proof}
If \(m_\theta=0\), consistency gives
\(m_{\theta0}=m_{\theta1}=0\), so the flux is zero.
Otherwise, \(\theta\) is expanded and its corrected child pair
is obtained from Algorithm~\ref{alg:pruned_consistency}.
Set
\[
a:=(n_{\theta0},n_{\theta1}),
\qquad
b':=(n'_{\theta0},n'_{\theta1})
=
\big((n_{\theta0}+\lambda_{\theta0})_+,\ (n_{\theta1}+\lambda_{\theta1})_+\big),
\]
and
$
b:=(m_{\theta0},m_{\theta1}).
$
Since the map $x\mapsto x_+$ is $1$-Lipschitz,
$
\|a-b'\|_\infty\le \max\{|\lambda_{\theta0}|,\ |\lambda_{\theta1}|\}.
$
If $a$ and $b$ are comparable, then
$
\mathrm{flux}(a,b)=0
$
and the claim is immediate.

Otherwise, $a$ and $b$ are incomparable. By the consistency rule in
Algorithm~\ref{alg:pruned_consistency}, the chosen vector $b$ is comparable to the noisy vector $b'$.
Hence, $b'$ belongs to the set of vectors comparable to $b$. By Lemma~\ref{lem:fluxdist},
\[
\mathrm{flux}(a,b)
\le
\|a-b'\|_\infty
\le
\max\{|\lambda_{\theta0}|,\ |\lambda_{\theta1}|\},
\]
as claimed.
\end{proof}

\begin{lemma}[Empty nodes contribute zero flux\citep{He2023Algorithmically}]\label{lem:empty}
If $n_\theta=0$, then
\[
\mathrm{flux}\big((n_{\theta0},n_{\theta1}),(m_{\theta0},m_{\theta1})\big)=0.
\]
\end{lemma}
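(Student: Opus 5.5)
The claim follows directly from the definition of $\mathrm{flux}$ and the nonnegativity of the corrected counts, with no probabilistic input. I would first use consistency of the true counts, $n_\theta=n_{\theta0}+n_{\theta1}$, together with $n_{\theta b}\ge0$. These give
\[
n_\theta=0 \;\Longrightarrow\; n_{\theta0}=n_{\theta1}=0,
\]
so the first argument of the flux is $a:=(n_{\theta0},n_{\theta1})=(0,0)$.

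Next I would record that the corrected pair $b:=(m_{\theta0},m_{\theta1})$ lies in $\mathbb Z_+^2$, and handle the two cases separately.
\begin{itemize}
\item If $m_\theta=0$, the node is inactive and consistency of the family $(m_\theta)$ forces $m_{\theta0}=m_{\theta1}=0$.
\item If $m_\theta>0$, the node is expanded, and Algorithm~\ref{alg:pruned_consistency} outputs $(u_0,u_1)\in\mathbb Z_+^2$ by construction, in each of its three branches.
\end{itemize}
In both cases $b_0\ge0$ and $b_1\ge0$.

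The key observation is that $(0,0)$ is comparable to every vector in $\mathbb Z_+^2$, since $a_b=0\le b_b$ for both $b\in\{0,1\}$. By the definition of $\mathrm{flux}$, which is zero whenever its two arguments are comparable, this gives $\mathrm{flux}(a,b)=0$. Equivalently, by Lemma~\ref{lem:fluxdist}, $a$ has $\ell_\infty$-distance zero from the set of vectors comparable to $b$.

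There is no real obstacle here. The only point to check is that the corrected counts are nonnegative in every branch of the zero-respecting rule, including the random fallback when both noisy children vanish; that rule still outputs $(M,0)$ or $(0,M)$ with $M>0$. The argument does not depend on the noise or on which local comparable rule is used, so it covers Pruned-PMM as well as the original PMM setting of \citet{He2023Algorithmically}.
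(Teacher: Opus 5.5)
Your proof is correct and matches the paper's argument: nonnegativity and consistency of the true counts force $(n_{\theta0},n_{\theta1})=(0,0)$, and $(0,0)$ is comparable to every vector in $\mathbb Z_+^2$, so the flux vanishes by definition. Your check that the corrected pair lies in $\mathbb Z_+^2$ in every branch of the pruned correction spells out what the paper leaves implicit.
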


\begin{proof}
If $n_\theta=0$, then necessarily
$
(n_{\theta0},n_{\theta1})=(0,0).
$
Since $(0,0)$ is comparable to every vector in $\mathbb Z_+^2$, the definition of flux gives
$
\mathrm{flux}\big((0,0),(m_{\theta0},m_{\theta1})\big)=0.
$
\end{proof}

\begin{lemma}[Two-bin flux transfer\citep{He2023Algorithmically}]\label{lem:twobin}
Let $a=(a_0,a_1)$ and $b=(b_0,b_1)$ be in $\mathbb Z_+^2$. Then one can transform the
occupancy vector $a$ into $b$ by:
\begin{enumerate}
\item adding $(b_0+b_1)-(a_0+a_1)$ balls in total to the two bins
(removing if this number is negative), and
\item transferring exactly $\mathrm{flux}(a,b)$ balls from one bin to the other.
\end{enumerate}
\end{lemma}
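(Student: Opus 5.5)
The plan is a direct case analysis on whether $a$ and $b$ are comparable. In each case I will exhibit an explicit sequence of moves: a transfer of exactly $\mathrm{flux}(a,b)$ balls between the two bins, followed by a one-signed change of the bin contents whose net size is $(b_0+b_1)-(a_0+a_1)$. Throughout, I will check that every intermediate vector stays in $\mathbb Z_+^2$, so that removals are always legal.

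\emph{Comparable case.} Here $\mathrm{flux}(a,b)=0$, so no transfer is needed.
\begin{itemize}
\item If $b_0\ge a_0$ and $b_1\ge a_1$, I add $b_0-a_0$ balls to bin $0$ and $b_1-a_1$ balls to bin $1$.
\item If $b_0\le a_0$ and $b_1\le a_1$, I remove $a_0-b_0$ balls from bin $0$ and $a_1-b_1$ balls from bin $1$.
\end{itemize}
In both subcases the total number of balls added (or removed) is $|(b_0+b_1)-(a_0+a_1)|$, and all counts stay between the corresponding entries of $a$ and $b$, hence stay nonnegative.

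\emph{Incomparable case.} By symmetry of the two bins, assume $a_0>b_0$ and $a_1<b_1$. Set
\[
f:=\mathrm{flux}(a,b)=\min\{a_0-b_0,\ b_1-a_1\}\ge 1 .
\]
First transfer $f$ balls from bin $0$ to bin $1$. This is legal because $f\le a_0$, and it produces
\[
a':=(a_0-f,\ a_1+f),
\]
with the same total as $a$. The key step is to show that $a'$ is comparable to $b$, using which term attains the minimum.
\begin{itemize}
\item If $f=a_0-b_0$, then $a'_0=b_0$ and $a'_1=a_1+f\le b_1$. So $a'\le b$ componentwise, and I finish by adding $b_1-a'_1$ balls to bin $1$.
\item If $f=b_1-a_1$, then $a'_1=b_1$ and $a'_0=a_0-f\ge b_0$. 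So $a'\ge b$ componentwise, and I finish by removing $a'_0-b_0$ balls from bin $0$.
\end{itemize}
In either subcase the final adjustment has net size $(b_0+b_1)-(a_0+a_1)$, since the transfer preserved the total. The intermediate counts lie between $a'$ and $b$, so they remain nonnegative.

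There is no real obstacle in this argument. The only point needing care is that the transfer amount is exactly the minimum in the definition of flux: this choice makes one coordinate of $a'$ match $b$ exactly, which forces comparability and ensures the remaining correction is one-signed. That is precisely what makes step 1 of the lemma a pure addition or a pure removal rather than a mix of both.
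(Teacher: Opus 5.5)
Your proof is correct and is essentially the paper's argument: the same split into comparable and incomparable cases, with the incomparable case subdivided by which term attains the minimum in $\mathrm{flux}(a,b)$, so that one coordinate is matched exactly and the remaining correction is one-signed. The only difference is the order of the two operations. The paper first performs the cardinality correction (removing $(a_0+a_1)-(b_0+b_1)$ balls from bin $0$ in one subcase, adding $(b_0+b_1)-(a_0+a_1)$ balls to bin $1$ in the other) and only then transfers $\mathrm{flux}(a,b)$ balls, matching the order in which the statement lists the operations; you transfer first, and for this two-bin statement either order is valid.
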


\begin{proof}
If $a$ and $b$ are comparable, then either $a\preceq b$ or $b\preceq a$.
If $a\preceq b$, add $b_i-a_i$ balls to bin $i$ for $i=0,1$.
If $b\preceq a$, remove $a_i-b_i$ balls from bin $i$ for $i=0,1$.
In both cases, no transfer is needed and $\mathrm{flux}(a,b)=0$.

Now suppose $a$ and $b$ are incomparable. Without loss of generality, assume
$
a_0\ge b_0,
\,\,
a_1\le b_1.
$
Set
$
M:=\min\{a_0-b_0,\ b_1-a_1\}=\mathrm{flux}(a,b).
$

\emph{Case 1:} $a_0-b_0\ge b_1-a_1$. Then $M=b_1-a_1$ and
\[
\Delta:=(a_0+a_1)-(b_0+b_1)=(a_0-b_0)-(b_1-a_1)\ge 0.
\]
First remove $\Delta$ balls from bin $0$, obtaining $(a_0-\Delta,a_1)$.
Then transfer $M$ balls from bin $0$ to bin $1$. The final state is
\[
(a_0-\Delta-M,\ a_1+M)=(b_0,b_1),
\]
because $\Delta+M=a_0-b_0$ and $M=b_1-a_1$.

\emph{Case 2:} $a_0-b_0\le b_1-a_1$. Then $M=a_0-b_0$ and
\[
\Delta:=(b_0+b_1)-(a_0+a_1)=(b_1-a_1)-(a_0-b_0)\ge 0.
\]
First add $\Delta$ balls to bin $1$, obtaining $(a_0,a_1+\Delta)$.
Then transfer $M$ balls from bin $0$ to bin $1$. The final state is
\[
(a_0-M,\ a_1+\Delta+M)=(b_0,b_1),
\]
because $M=a_0-b_0$ and $\Delta+M=b_1-a_1$.

In both cases, the construction uses the correct total add/remove amount and then transfers
exactly $M=\mathrm{flux}(a,b)$ balls.
\end{proof}

\begin{lemma}[Hierarchical transformation by flux \citep{He2023Algorithmically}]\label{lem:hier}
Let $X\in\Omega^n$ and let $(n_\theta)_{|\theta|\le r}$ be its counts in the partition
$\{\Omega_\theta\}_{\theta\in\{0,1\}^{\le r}}$. Let $(m_\theta)_{|\theta|\le r}$ be any
consistent collection of nonnegative integers, i.e.
\[
m_\theta=m_{\theta0}+m_{\theta1}
\qquad\text{for all }|\theta|\le r-1,
\]
with root mass $m:=m_\varnothing$.

For each internal node $\theta$, define
\[
t_\theta:=\mathrm{flux}\big((n_{\theta0},n_{\theta1}),(m_{\theta0},m_{\theta1})\big).
\]
Then one can transform $X$ into a multiset $Z\subset\Omega$ of size $m$ whose leaf counts
are $(m_\theta)_{|\theta|=r}$ by:
\begin{enumerate}
\item adding or removing a total of $|m-n|$ points in $\Omega$, and
\item for every internal node $\theta$ with $|\theta|\le r-1$, transferring exactly $t_\theta$
points between $\Omega_{\theta0}$ and $\Omega_{\theta1}$.
\end{enumerate}
Moreover, every transfer at node $\theta$ takes place entirely inside $\Omega_\theta$.
\end{lemma}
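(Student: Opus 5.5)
We prove Lemma~\ref{lem:hier} by induction on the depth \(r\), proceeding top-down. The two-bin statement of Lemma~\ref{lem:twobin} is applied at every internal node, and the operations it prescribes are pushed down the tree.

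\textbf{Root step.} First handle the root. Since \(|m-n|\) points must be added or removed in total, we begin by applying Lemma~\ref{lem:twobin} at \(\theta=\varnothing\) with \(a=(n_0,n_1)\) and \(b=(m_0,m_1)\). This adds or removes exactly \(|m-n|\) balls, split between \(\Omega_0\) and \(\Omega_1\), and transfers \(t_\varnothing\) balls across the cut. Transfers from \(\Omega_0\) to \(\Omega_1\) (or back) are realized by moving actual points, so they stay inside \(\Omega_\varnothing\). After this step the intermediate multiset \(X^{(1)}\) has level-one counts \((m_0,m_1)\).

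\textbf{Descending the tree.} The key point is that at level one, the points of \(X^{(1)}\) lying in \(\Omega_b\) have some child counts \((\tilde n_{b0},\tilde n_{b1})\) that may differ from the original \((n_{b0},n_{b1})\), because we chose where inside \(\Omega_b\) to add, remove, or transfer points. To keep the flux at deeper nodes equal to \(t_\theta\), which is defined with respect to the \emph{original} counts \(n_\theta\), we fix a canonical placement rule. Take the statement to allow the operations to be ordered differently: all additions and removals happen conceptually at the leaves, and the transfer at node \(\theta\) moves a point from a chosen sub-leaf of \(\Omega_{\theta b}\) to a chosen sub-leaf of \(\Omega_{\theta(1-b)}\).

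The cleanest formulation is therefore bottom-up bookkeeping. Define, for every node, the net discrepancy \(m_\theta-n_\theta\), and proceed from the root downward. At node \(\theta\) the current (partially transformed) mass in \(\Omega_\theta\) is already \(m_\theta\), while its child vector is still the original \((n_{\theta0},n_{\theta1})\) adjusted only by what has been routed in from above. The obstacle is exactly this interaction. Resolve it by deferring: treat points added, removed, or transferred into \(\Omega_\theta\) from ancestors as "unplaced" balls charged to \(\Omega_\theta\) as a whole. The inductive claim is then: given the original points of \(X\) in \(\Omega_\theta\) plus \(m_\theta-n_\theta\) net unplaced balls, one can realize leaf counts \((m_{\theta\omega})\) using additions/removals of unplaced balls and transfers of size exactly \(t_{\theta'}\) at each descendant \(\theta'\) inside \(\Omega_{\theta'}\).

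\textbf{Inductive step.} At \(\theta\), Lemma~\ref{lem:twobin} with \(a=(n_{\theta0},n_{\theta1})\) and \(b=(m_{\theta0},m_{\theta1})\) splits the \(m_\theta-n_\theta\) unplaced balls between the children, giving net \(m_{\theta b}-n_{\theta b}\) in each (up to the \(t_\theta\) transfer). It also performs \(t_\theta\) transfers between \(\Omega_{\theta0}\) and \(\Omega_{\theta1}\). Each transfer is realized by moving a point whose final leaf is determined later in the recursion, so it is counted as one unplaced removal in the source child and one unplaced addition in the target child. Since transfers preserve the net discrepancy accounting, the hypotheses for the children \(\theta0,\theta1\) are met.

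\textbf{Base and totals.} At the leaves, unplaced balls are simply placed in the leaf, yielding counts \(m_\theta\). Summing over the whole tree, the only genuine additions or removals are the root's \(|m-n|\); all other movements are the transfers \(t_\theta\) inside \(\Omega_\theta\). The main care needed is verifying that transferred points can always be chosen to exist, i.e.\ that a child never has negative available mass. This follows from \(b\in\mathbb Z_+^2\) and the explicit case analysis in Lemma~\ref{lem:twobin}, where the transfer source always retains nonnegative count.
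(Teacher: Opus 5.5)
Your argument is correct in substance, but it runs top-down, opposite to the paper's proof. The paper never invokes Lemma~\ref{lem:twobin} here. It works bottom-up: inside each leaf it matches $\min\{n_\theta,m_\theta\}$ original points to target slots. Then, at each internal node, it pairs unmatched original points of one child with unmatched target slots of the other. The leftover objects in every subtree are automatically of one kind, with signed count $n_\theta-m_\theta$. So the number of pairs formed at $\theta$ is exactly $t_\theta$, the root leaves exactly $|m-n|$ unmatched objects, and the realization is immediate: a cardinality correction, then one direct move per pair.

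You instead apply Lemma~\ref{lem:twobin} at every node with the \emph{original} child counts and defer all placements to the leaves. This makes the transfer count at $\theta$ equal to $t_\theta$ by construction. Your deferred ``unplaced'' balls are exactly dual to the paper's unmatched slots and unmatched points.

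What your write-up leaves implicit, and should state, is the invariant that makes the net bookkeeping legitimate:
\begin{enumerate}
\item Each unplaced ball must carry a label: either a genuine root addition or removal, or the transfer at a specific node.
\item Each child must receive unplaced balls of a single sign.
\end{enumerate}
The second point does follow from the case analysis of Lemma~\ref{lem:twobin}, since in every case each bin only gains or only loses balls. It is what lets you, at a leaf $\omega$ with $m_\omega<n_\omega$, remove $n_\omega-m_\omega$ \emph{distinct original} points. This is possible because $m_\omega\ge 0$, and it is the real existence check, not nonnegativity at the transfer source. You can then realize each transfer label at $\theta$ as a single move from its removal leaf under $\theta b$ to its addition leaf under $\theta(1-b)$, which lies inside $\Omega_\theta$.

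Finally, drop your opening ``root step'' with physical moves in favor of the deferred version. As you note yourself, those moves would change the child counts on which the deeper fluxes are defined.
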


\begin{proof}
We first construct a matching bottom-up, without moving any points.

\emph{Step 1: matching within leaves.}
In each leaf $\Omega_\theta$, create $m_\theta$ target slots and match
$\min\{n_\theta,m_\theta\}$ original points to distinct slots in that leaf.
These points will remain in place. The unmatched objects in the leaf are either
$n_\theta-m_\theta$ original points or $m_\theta-n_\theta$ target slots.

\emph{Step 2: matching across sibling subtrees.}
Proceed upward through the internal nodes. Inductively, after processing a child
$\theta i$, its unmatched objects are all of one kind, and their signed number
(original points minus target slots) is $n_{\theta i}-m_{\theta i}$.
If the two children have unmatched objects of opposite kinds, pair original points
from one child with target slots from the other until one side is exhausted.
The number of pairs formed at $\theta$ is then
\[
\min\{|n_{\theta0}-m_{\theta0}|,|n_{\theta1}-m_{\theta1}|\}
=t_\theta.
\]
If the children do not have unmatched objects of opposite kinds, no pair is formed,
and $t_\theta=0$. In either case, the remaining unmatched objects at $\theta$
are all of one kind, with signed number $n_\theta-m_\theta$ by consistency.
This proves the induction. At the root, there remain exactly $|m-n|$ unmatched
objects: original points if $n>m$, or target slots if $m>n$.

\emph{Step 3: realizing the matching.}
First remove all unmatched original points, or fill every unmatched target slot
by adding a point in its leaf. This makes exactly $|m-n|$ additions or removals.
For each pair formed at an internal node $\theta$, move its original point directly
to any point in the leaf of its matched target slot. Each original point is used
in at most one pair, so these transfers can be performed in any order after the
cardinality correction. Exactly $t_\theta$ such transfers join opposite children
of $\theta$, and both endpoints lie in $\Omega_\theta$.
Every target slot is now filled, so the resulting multiset $Z$ has the prescribed
leaf counts and size $m$.
\end{proof}

\subsection{Accuracy}

The transport error of Pruned-PMM is driven by two effects:
the mismatch between the total output size and the input size, and the mass that must be
transferred across sibling cells in order to realize the consistent counts.
The second contribution is localized to occupied nodes and can therefore be controlled by the
packing-based occupancy bound.

For each depth $j$, define the occupied set and occupancy count
\[
\tau_j:=\{\theta\in\{0,1\}^j:\ n_\theta>0\},
\qquad
O_j:=|\tau_j|.
\]
Under Assumption~\ref{ass:pack}, Lemma~\ref{lem:pack2occ} yields
\begin{equation}\label{eq:occ}
O_j=|\tau_j|\le C_{\mathrm{pack}}\,2^{jk/d}
\qquad (0\le j\le r).
\end{equation}

\paragraph{Transport budget.}
For each internal node $\theta$, let
$
t_\theta:=\mathrm{flux}\big((n_{\theta0},n_{\theta1}),(m_{\theta0},m_{\theta1})\big).
$
Since a transfer across the split of $\Omega_\theta$ can be carried out entirely within the parent cell
$\Omega_\theta$, each such moved point travels distance at most $\operatorname{diam}(\Omega_\theta)$.
We therefore define the total transport budget
\begin{equation}\label{eq:Ddef}
D:=\sum_{|\theta|\le r-1} t_\theta\,\operatorname{diam}(\Omega_\theta).
\end{equation}

\begin{theorem}[Deterministic transport bound]\label{thm:acc-det}
Let $\delta:=\max_{|\theta|=r}\operatorname{diam}(\Omega_\theta)$ be the leaf resolution.
Then for the Pruned-PMM output $Y$,
\[
W_1(\mu_X,\mu_Y)
\le
\frac{|m-n|}{n}\operatorname{diam}(\Omega)+\frac{D}{n}+\delta.
\]
In particular, since $\operatorname{diam}(\Omega)=1$ for $\Omega=[0,1]^d$ equipped with
$\|\cdot\|_\infty$,
\[
W_1(\mu_X,\mu_Y)\le \frac{|m-n|}{n}+\frac{D}{n}+\delta.
\]
\end{theorem}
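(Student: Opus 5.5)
The plan is a triangle inequality through two intermediate multisets: the transformed multiset \(Z\) from Lemma~\ref{lem:hier}, and a slightly enlarged copy of the original data. The consistent family \((m_\theta)\) produced by Pruned-PMM is nonnegative and consistent on the full tree, with zeros below inactive nodes, and the root mass satisfies \(m=m_\varnothing\ge1\). Hence Lemma~\ref{lem:hier} applies. It gives a multiset \(Z\) of size \(m\) whose leaf counts are \((m_\theta)_{|\theta|=r}\). \(Z\) is obtained from \(X\) by \(|m-n|\) additions or removals, together with \(t_\theta\) transfers inside each \(\Omega_\theta\). The output \(Y\) has the same leaf counts as \(Z\), so Lemma~\ref{lem:leaf} gives \(W_1(\mu_Z,\mu_Y)\le\delta\). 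It then remains to show
\[
W_1(\mu_X,\mu_Z)\le \frac{|m-n|}{n}\operatorname{diam}(\Omega)+\frac{D}{n}.
\]

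To separate the cardinality change from the transfers, I would introduce an intermediate multiset \(X^\ast\) of size \(m\), obtained from \(X\) by removing or adding the \(|m-n|\) points. I would use the matching in the proof of Lemma~\ref{lem:hier} to order the operations: perform the cardinality correction first, then the transfers. Each transfer moves a single point from one location to another inside \(\Omega_\theta\). The transfers therefore define a bijection between \(X^\ast\) and \(Z\), and the total displacement is at most \(\sum_\theta t_\theta\operatorname{diam}(\Omega_\theta)=D\). Since \(|X^\ast|=|Z|=m\), this gives \(W_1(\mu_{X^\ast},\mu_Z)\le D/m\).

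For the cardinality step I would apply Lemma~\ref{lem:card}. If \(m\ge n\), then \(X\subseteq X^\ast\), and the lemma gives \(W_1(\mu_X,\mu_{X^\ast})\le \frac{m-n}{m}\le\frac{|m-n|}{n}\). In this case \(D/m\le D/n\), so the claim follows.

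The main obstacle is the case \(m<n\). There \(X^\ast\subseteq X\), so Lemma~\ref{lem:card} gives \((n-m)/n\), but the transfer term is \(D/m\), which may exceed \(D/n\). I would avoid dividing by \(m\) by arranging the steps differently. First transfer the points in \(X\), giving a size-\(n\) multiset \(X'\) with \(W_1(\mu_X,\mu_{X'})\le D/n\); then remove \(n-m\) points, which costs \((n-m)/n\) by Lemma~\ref{lem:card}. This ordering is legitimate because in the matching of Lemma~\ref{lem:hier}, every original point is either matched to a slot (moved or kept) or is one of the unmatched ones that get removed. Removed points are never used in transfers, so the transfers can be applied to \(X\) itself before any removal. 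By the same reasoning, when \(m>n\) we may also transfer first and then add points. In both cases I would use the ordering that leaves the size-\(n\) set intact during the transfers. Combining the three bounds by the triangle inequality, and substituting \(\operatorname{diam}(\Omega)=1\), yields the stated estimate.
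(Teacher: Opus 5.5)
Your proposal is correct and follows the paper's proof essentially step for step. The paper also routes through the multiset \(Z\) of Lemma~\ref{lem:hier}, adds points before transferring when \(m\ge n\) (using \(D/m\le D/n\)), and for \(m<n\) transfers within the retained points while leaving the removed points \(X_0\) in place. It then compares \(\mu_X\) with \(\mu_{Z\uplus X_0}\) at cost \(D/n\) and applies Lemma~\ref{lem:card} to \(Z\subseteq Z\uplus X_0\), which is exactly your \emph{transfer first, then remove} ordering, before closing with Lemma~\ref{lem:leaf} and the triangle inequality.
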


\begin{proof}
Let $Z$ be the multiset provided by Lemma~\ref{lem:hier}, so that $Z$ has the same leaf counts as $Y$.
By Lemma~\ref{lem:hier}, the transformation from $X$ to $Z$ consists of two parts:
\begin{enumerate}
\item adding or removing a total of $|m-n|$ points;
\item for every internal node $\theta$, transferring exactly $t_\theta$ points between
$\Omega_{\theta0}$ and $\Omega_{\theta1}$, each transfer taking place inside $\Omega_\theta$.
\end{enumerate}

Let $X_1$ denote the intermediate multiset after the cardinality correction step, so $|X_1|=m$.

\emph{Case 1: $m\ge n$.}
Then $X\subseteq X_1$ as multisets, and Lemma~\ref{lem:card} gives
\[
W_1(\mu_X,\mu_{X_1})
\le
\frac{m-n}{m}\operatorname{diam}(\Omega)
\le
\frac{m-n}{n}\operatorname{diam}(\Omega).
\]
The transfer step moves points with total distance at most $D$ among $m$ points, so
$
W_1(\mu_{X_1},\mu_Z)\le \frac{D}{m}\le \frac{D}{n}.
$
Hence
\[
W_1(\mu_X,\mu_Z)
\le
\frac{|m-n|}{n}\operatorname{diam}(\Omega)+\frac{D}{n}.
\]
\emph{Case 2: $m<n$.}
Let $X_0$ be the multiset of removed points, so that
$
|X_0|=n-m,
\qquad
X=X_1\uplus X_0.
$
The transfer step moves points of $X_1$ to obtain $Z$ with total distance at most $D$,
while leaving $X_0$ unchanged. Therefore
$
W_1(\mu_X,\mu_{Z\uplus X_0})\le \frac{D}{n}.
$
Since $Z\subseteq Z\uplus X_0$ and $|Z\uplus X_0|=n$, Lemma~\ref{lem:card} gives
\[
W_1(\mu_Z,\mu_{Z\uplus X_0})
\le
\frac{|X_0|}{n}\operatorname{diam}(\Omega)
=
\frac{n-m}{n}\operatorname{diam}(\Omega).
\]

In both cases,
\[
W_1(\mu_X,\mu_Z)
\le
\frac{|m-n|}{n}\operatorname{diam}(\Omega)+\frac{D}{n}.
\]
Finally, $Z$ and $Y$ have the same leaf counts, so Lemma~\ref{lem:leaf} implies
$
W_1(\mu_Z,\mu_Y)\le \delta.
$
By the triangle inequality,
\[
W_1(\mu_X,\mu_Y)
\le
\frac{|m-n|}{n}\operatorname{diam}(\Omega)+\frac{D}{n}+\delta.
\]
\end{proof}

\begin{corollary}[Expected transport bound under occupancy growth]\label{cor:acc-exp}
With \(O_j:=|\tau_j|\), one has
$
\mathbb E D
\le
4C_{\mathrm{abs}}\sum_{j=0}^{r-1}
\sigma_{j+1}O_j2^{-j/d},
$
and therefore
\[
\mathbb E W_1(\mu_X,\mu_Y)
\le
\frac{\mathbb E|m-n|}{n}
+
\frac{4C_{\mathrm{abs}}}{n}\sum_{j=0}^{r-1}
\sigma_{j+1}O_j2^{-j/d}
+
\delta.
\]
\end{corollary}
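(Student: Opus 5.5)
The statement to prove is Corollary~\ref{cor:acc-exp}. The second display follows from the first: take expectations in Theorem~\ref{thm:acc-det} and use $\delta\le 2\cdot2^{-r/d}$ where needed; the corollary keeps $\delta$ itself. So the work is to bound $\mathbb E D$, where
\[
D=\sum_{|\theta|\le r-1}t_\theta\operatorname{diam}(\Omega_\theta),\qquad t_\theta=\mathrm{flux}\big((n_{\theta0},n_{\theta1}),(m_{\theta0},m_{\theta1})\big).
\]

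First, split $D$ by level $j=|\theta|$ and discard empty nodes. By Lemma~\ref{lem:empty}, $t_\theta=0$ whenever $n_\theta=0$, so at level $j$ only $\theta\in\tau_j$ contribute. This gives at most $O_j$ nonzero terms at level $j$.

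Next, bound each surviving term. By Lemma~\ref{lem:fluxnoise},
\[
t_\theta\le\max\{|\lambda_{\theta0}|,|\lambda_{\theta1}|\}\le|\lambda_{\theta0}|+|\lambda_{\theta1}|.
\]
This holds deterministically, including at inactive nodes. There, $m_\theta=0$ forces zero flux, so the bound holds even though the noise variables were never sampled. To keep expectations well defined, I will use the full-tree coupling of Remark~\ref{rem:lazy}, in which every $\lambda_\theta$ exists and is distributed as $\mathrm{LapZ}(\sigma_{|\theta|})$. The children of $\theta$ sit at level $j+1$, so Assumption~\ref{ass:noise} gives
\[
\mathbb E t_\theta\le 2C_{\rm abs}\sigma_{j+1}.
\]
The one point to check is that $\tau_j$ is deterministic, since it depends only on the data. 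Linearity of expectation then applies directly, with no conditioning on the random active set.

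Then insert the diameter bound from Proposition~\ref{prop:diam}, $\operatorname{diam}(\Omega_\theta)\le2\cdot2^{-j/d}$. This yields
\[
\mathbb E D\le\sum_{j=0}^{r-1}O_j\cdot2C_{\rm abs}\sigma_{j+1}\cdot2\cdot2^{-j/d}=4C_{\rm abs}\sum_{j<r}\sigma_{j+1}O_j2^{-j/d}.
\]

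Finally, take expectations in Theorem~\ref{thm:acc-det} and substitute this bound for $\mathbb E D/n$; the remaining terms are $\mathbb E|m-n|/n$ and $\delta$ unchanged. There is no real obstacle in this argument. The only subtlety is the measurability and lazy-sampling point handled above.
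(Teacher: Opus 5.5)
Your proposal is correct and follows the same route as the paper: discard empty nodes via Lemma~\ref{lem:empty}, bound each $t_\theta$ by $|\lambda_{\theta0}|+|\lambda_{\theta1}|$ using Lemma~\ref{lem:fluxnoise}, take expectations with $\mathbb E|\lambda|\le C_{\rm abs}\sigma_{j+1}$, and insert the diameter bound of Proposition~\ref{prop:diam}. Your appeal to the full-tree coupling of Remark~\ref{rem:lazy}, and your note that $\tau_j$ is deterministic, make explicit what the paper's proof uses implicitly.
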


\begin{proof}
By Lemma~\ref{lem:empty}, only occupied nodes contribute to $D$. Therefore
$
\mathbb E D
=
\sum_{j=0}^{r-1}\sum_{\theta\in\tau_j}\mathbb E[t_\theta]\operatorname{diam}(\Omega_\theta).
$
Fix $j$ and $\theta\in\tau_j$. By Lemma~\ref{lem:fluxnoise},
$
t_\theta
\le
\max\{|\lambda_{\theta0}|,|\lambda_{\theta1}|\}
\le
|\lambda_{\theta0}|+|\lambda_{\theta1}|.
$
Taking expectation and using Assumption~\ref{ass:noise},
$
\mathbb E t_\theta
\le
2\,\mathbb E|\lambda|
\le
2C_{\mathrm{abs}}\sigma_{j+1}.
$
Hence
\[
\mathbb E D
\le
\sum_{j=0}^{r-1}
(2C_{\mathrm{abs}}\sigma_{j+1})
\sum_{\theta\in\tau_j}\operatorname{diam}(\Omega_\theta).
\]
Using Proposition~\ref{prop:diam} and \eqref{eq:occ},
\[
\sum_{\theta\in\tau_j}\operatorname{diam}(\Omega_\theta)
\le
O_j\cdot \max_{|\theta|=j}\operatorname{diam}(\Omega_\theta)
\le
2O_j\,2^{-j/d}
\le
(C_{\mathrm{pack}}2^{jk/d})(2\cdot 2^{-j/d})
=
2C_{\mathrm{pack}}2^{j(k-1)/d}.
\]
This proves
\[
\mathbb E D
\le
4C_{\mathrm{abs}}
\sum_{j=0}^{r-1}\sigma_{j+1}O_j2^{-j/d}.
\]
\end{proof}
Under Assumption~\ref{ass:pack}, using
\(O_j\le C_{\mathrm{pack}}2^{jk/d}\) gives the packing-growth bound
\[
\mathbb E D
\le
4C_{\mathrm{abs}}C_{\mathrm{pack}}
\sum_{j=0}^{r-1}\sigma_{j+1}2^{j(k-1)/d},
\]
and hence
\[
\mathbb E W_1(\mu_X,\mu_Y)
\le
\frac{\mathbb E|m-n|}{n}
+
\frac{4C_{\mathrm{abs}}C_{\mathrm{pack}}}{n}
\sum_{j=0}^{r-1}\sigma_{j+1}2^{j(k-1)/d}
+
\delta,
\qquad
\delta\le 2\cdot 2^{-r/d}.
\]

\subsection{\texorpdfstring{Optimizing the upper bound over $(\sigma_j)$ and the depth $r$}{Optimizing the upper bound over (sigma j) and the depth r}}

We now optimize the expected upper bound from Corollary~\ref{cor:acc-exp} over the
noise magnitudes $(\sigma_j)_{j=1}^r$ and the depth $r$, while treating the root-noise
magnitude $\sigma_0$ as a fixed design parameter.

\paragraph{Tight use of the privacy budget.}
Under replacement adjacency, Theorem~\ref{thm:privacy} gives the privacy constraint
\begin{equation}\label{eq:DPtight}
2\sum_{j=1}^r \frac{1}{\sigma_j}=\varepsilon.
\end{equation}
This may be assumed without loss of generality when optimizing over
$\sigma_1,\dots,\sigma_r$, since the upper bound is increasing in each of these variables.

\paragraph{Upper bound to optimize.}
For the root-mass contribution, since $n\ge1$ and
$m=\max\{1,n+\lambda_\varnothing\}$,
$
|m-n|
=
\big|\max\{1,n+\lambda_\varnothing\}-n\big|
\le
|\lambda_\varnothing|.
$
Therefore, by Assumption~\ref{ass:noise},
$
\mathbb E|m-n|
\le
\mathbb E|\lambda_\varnothing|
\le
C_{\mathrm{abs}}\sigma_0.
$
Combining this with Corollary~\ref{cor:acc-exp} gives
\begin{equation}\label{eq:acc-linear}
\mathbb E W_1(\mu_X,\mu_Y)
\le
\frac{C_{\mathrm{abs}}}{n}\sigma_0
+
\frac{4C_{\mathrm{abs}}C_{\mathrm{pack}}}{n}
\sum_{j=1}^{r}\sigma_j\,2^{(j-1)(k-1)/d}
+
\delta,
\qquad
\delta\le 2\cdot 2^{-r/d}.
\end{equation}
Before specializing to the packing-growth envelope, we first optimize the sharper
occupancy-profile bound
\[
\mathbb E W_1(\mu_X,\mu_Y)
\le
\frac{C_{\mathrm{abs}}}{n}\sigma_0
+
\frac{4C_{\mathrm{abs}}}{n}
\sum_{j=1}^{r}\sigma_j O_{j-1}2^{-(j-1)/d}
+
\delta .
\]
Thus, for fixed $r$ and fixed $\sigma_0$, it remains to minimize a weighted linear function
of $(\sigma_j)_{j=1}^r$ subject to the privacy constraint \eqref{eq:DPtight}.

\begin{proposition}[Optimized occupancy-profile bound]\label{prop:opts}
Fix \(r\ge 1\), define
$
O^{(r)}:=(O_0,\dots,O_{r-1}),
$
and set
$
A_r(O^{(r)})
:=
2\sum_{j=0}^{r-1}\sqrt{O_j}\,2^{-j/(2d)}.
$
Choose \(\sigma_0=\varepsilon^{-1}\). Among all choices
\(\sigma_1,\dots,\sigma_r>0\) satisfying
$
2\sum_{i=1}^r\frac1{\sigma_i}=\varepsilon,
$
the occupancy-profile upper bound is minimized uniquely at
\[
\sigma_i^\star
=
\frac{2A_r(O^{(r)})}
{\varepsilon\sqrt{4O_{i-1}2^{-(i-1)/d}}},
\qquad i=1,\dots,r.
\]
Moreover, with this choice,
\begin{equation*}\label{eq:opt-bound-r}
\mathbb E W_1(\mu_X,\mu_Y)
\le
B_r(O^{(r)}),
\tag{Opt}
\end{equation*}
where
\[
B_r(O^{(r)})
:=
\frac{C_{\mathrm{abs}}}{\varepsilon n}
+
\frac{2C_{\mathrm{abs}}}{\varepsilon n}
\left(
2\sum_{j=0}^{r-1}\sqrt{O_j}\,2^{-j/(2d)}
\right)^2
+
2\cdot 2^{-r/d}.
\]
\end{proposition}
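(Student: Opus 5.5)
Starting from the occupancy-profile bound stated just before the proposition, which follows from Corollary~\ref{cor:acc-exp} and $\mathbb E|m-n|\le C_{\rm abs}\sigma_0$, we substitute $\sigma_0=\varepsilon^{-1}$; the root term becomes $C_{\rm abs}/(\varepsilon n)$. Proposition~\ref{prop:diam} gives $\delta\le 2\cdot2^{-r/d}$. What remains is the finite-dimensional problem
\[
\min_{\sigma_1,\dots,\sigma_r>0}\ \sum_{i=1}^r w_i\sigma_i
\quad\text{subject to}\quad 2\sum_{i=1}^r\frac1{\sigma_i}=\varepsilon,
\qquad w_i:=\frac{4C_{\rm abs}}{n}\,O_{i-1}2^{-(i-1)/d}.
\]

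To solve it, we apply Cauchy--Schwarz to the pairs $\sqrt{w_i\sigma_i}$ and $\sqrt{1/\sigma_i}$:
\[
\Bigl(\sum_i\sqrt{w_i}\Bigr)^2\le\Bigl(\sum_i w_i\sigma_i\Bigr)\Bigl(\sum_i\frac1{\sigma_i}\Bigr)=\frac{\varepsilon}{2}\sum_i w_i\sigma_i .
\]
This yields the lower bound $\sum_i w_i\sigma_i\ge \frac{2}{\varepsilon}\bigl(\sum_i\sqrt{w_i}\bigr)^2$. Equality holds exactly when $\sqrt{w_i}\,\sigma_i$ is constant in $i$, i.e.\ $\sigma_i\propto w_i^{-1/2}$. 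The proportionality constant is then fixed by the privacy constraint, which forces $\sigma_i=2\sum_\ell\sqrt{w_\ell}/(\varepsilon\sqrt{w_i})$.

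Uniqueness comes from strict convexity of $\sigma\mapsto 1/\sigma$, or equivalently from the equality case of Cauchy--Schwarz. One degenerate case needs separate treatment: if some $O_{i-1}=0$, then $w_i=0$ and the infimum is not attained. This cannot actually occur, because $O_j\ge1$ for all $j$ (each level contains at least one occupied cell since $n\ge1$). So all weights are positive and the minimizer exists. We then check that this formula matches the displayed $\sigma_i^\star$. The factor $4C_{\rm abs}/n$ is common to all $w_i$ and cancels in the ratio, so with $A_r=2\sum_j\sqrt{O_j}2^{-j/(2d)}$ we get $\sigma_i^\star=2A_r/(\varepsilon\sqrt{4O_{i-1}2^{-(i-1)/d}})$. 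This is consistent, since $\sum_\ell\sqrt{4O_{\ell-1}2^{-(\ell-1)/d}}=A_r$.

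Finally we evaluate the optimal value:
\[
\frac{2}{\varepsilon}\Bigl(\sum_i\sqrt{w_i}\Bigr)^2=\frac{2}{\varepsilon}\cdot\frac{4C_{\rm abs}}{n}\Bigl(\sum_j\sqrt{O_j}2^{-j/(2d)}\Bigr)^2=\frac{2C_{\rm abs}}{\varepsilon n}A_r^2 .
\]
Adding the root and leaf terms gives exactly $B_r(O^{(r)})$, which establishes (Opt). The step most likely to go wrong is this constant bookkeeping: matching the factor $2$ inside $A_r$ against the $4$ in $w_i$ and the $2/\varepsilon$ from the constraint, and checking that the noise at level $i$ pairs with $O_{i-1}$ and not $O_i$. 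The argument itself has no real obstacle.
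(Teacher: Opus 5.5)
Your proposal is correct and follows the paper's proof essentially step for step. Both use Cauchy--Schwarz on $\bigl(\sum_i w_i\sigma_i\bigr)\bigl(\sum_i\sigma_i^{-1}\bigr)$, take the equality case $\sigma_i\propto w_i^{-1/2}$ normalized by the privacy constraint, and get positivity of the weights from $O_j\ge1$. The only cosmetic difference is that you fold $4C_{\mathrm{abs}}/n$ into the weights, whereas the paper uses $b_i=4O_{i-1}2^{-(i-1)/d}$ and multiplies by $C_{\mathrm{abs}}/n$ at the end; your constant bookkeeping checks out.
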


\begin{proof}
Using Corollary~\ref{cor:acc-exp} and the choice \(\sigma_0=\varepsilon^{-1}\), it remains to minimize
$
\sum_{i=1}^r b_i\sigma_i,
\,
b_i:=4O_{i-1}2^{-(i-1)/d},
$
subject to
$
2\sum_{i=1}^r\frac1{\sigma_i}=\varepsilon.
$
Since \(O_{i-1}\ge 1\), all weights \(b_i\) are strictly positive. By Cauchy--Schwarz,
\[
\left(\sum_{i=1}^r b_i\sigma_i\right)
\left(\sum_{i=1}^r\frac1{\sigma_i}\right)
\ge
\left(\sum_{i=1}^r\sqrt{b_i}\right)^2.
\]
Under the constraint \(2\sum_{i=1}^r\sigma_i^{-1}=\varepsilon\), this gives
\[
\sum_{i=1}^r b_i\sigma_i
\ge
\frac{2}{\varepsilon}
\left(\sum_{i=1}^r\sqrt{b_i}\right)^2.
\]
Equality holds if and only if
$
\sigma_i=\frac{c}{\sqrt{b_i}}
\, (i=1,\dots,r)
$
for some constant \(c>0\). Imposing the privacy constraint gives
\[
2\sum_{i=1}^r\frac{\sqrt{b_i}}{c}=\varepsilon,
\qquad\text{hence}\qquad
c=\frac{2\sum_{i=1}^r\sqrt{b_i}}{\varepsilon}.
\]
Since
\[
\sum_{i=1}^r\sqrt{b_i}
=
2\sum_{j=0}^{r-1}\sqrt{O_j}\,2^{-j/(2d)}
=
A_r(O^{(r)}),
\]
the optimizer is
\[
\sigma_i^\star
=
\frac{2A_r(O^{(r)})}
{\varepsilon\sqrt{4O_{i-1}2^{-(i-1)/d}}},
\qquad i=1,\dots,r,
\]
and the minimum value is
\[
\frac{2A_r(O^{(r)})^2}{\varepsilon}.
\]
Multiplying by \(C_{\mathrm{abs}}/n\), adding the root contribution
\(C_{\mathrm{abs}}/(\varepsilon n)\), and using
\(\delta\le 2\cdot 2^{-r/d}\) gives \eqref{eq:opt-bound-r}.
\end{proof}

Proposition~\ref{prop:opts} gives the formally optimal schedule for a known occupancy
profile. Since this profile is data-dependent, we next use the packing envelope to define
a public schedule.

\paragraph{Specialization under packing-growth.}
Under Assumption~\ref{ass:pack}, we have
$
O_j\le C_{\mathrm{pack}}2^{jk/d}
\,\, (0\le j\le r-1).
$
Therefore
\[
A_r(O^{(r)})
=
2\sum_{j=0}^{r-1}\sqrt{O_j}\,2^{-j/(2d)}
\le
2\sqrt{C_{\mathrm{pack}}}
\sum_{j=0}^{r-1}2^{j(k-1)/(2d)}.
\]
Let
$
q:=2^{(k-1)/(2d)}\ge 1,
$
\[
S_r:=\sum_{j=0}^{r-1} q^j
=
\begin{cases}
r, & k=1,\\[2mm]
\dfrac{q^r-1}{q-1}, & k>1.
\end{cases}
\]
Then
$
A_r(O^{(r)})\le 2\sqrt{C_{\mathrm{pack}}}\,S_r.
$
Consequently,
\begin{equation}\label{eq:Bclosed}
B_r(O^{(r)})
\le
\frac{C_{\mathrm{abs}}}{\varepsilon n}
+
\frac{8C_{\mathrm{abs}}C_{\mathrm{pack}}}{\varepsilon n}S_r^2
+
2\cdot 2^{-r/d}.
\end{equation}
Note that, by using \(\sigma_i:=\frac{2S_r}{\varepsilon q^{i-1}}\), we can achieve the same upper bound. For this public schedule, applying the packing bound directly gives
\[
\frac{4C_{\mathrm{abs}}C_{\mathrm{pack}}}{n}
\sum_{i=1}^r \sigma_i 2^{(i-1)(k-1)/d}
=
\frac{4C_{\mathrm{abs}}C_{\mathrm{pack}}}{n}
\sum_{i=1}^r
\frac{2S_r}{\varepsilon q^{i-1}}q^{2(i-1)}
=
\frac{8C_{\mathrm{abs}}C_{\mathrm{pack}}}{\varepsilon n}S_r^2.
\]
Hence,
\[
\mathbb E W_1(\mu_X,\mu_Y)
\le
\frac{C_{\mathrm{abs}}}{\varepsilon n}
+
\frac{8C_{\mathrm{abs}}C_{\mathrm{pack}}}{\varepsilon n}S_r^2
+
2\cdot 2^{-r/d}.
\]

\begin{corollary}[Fixed-depth Pruned-PMM under packing-growth]\label{cor:rates}
Assume \(\varepsilon n\ge 2\) and suppose Assumption~\ref{ass:pack} holds with
packing-growth dimension \(k\in\{1,\dots,d\}\) and constant \(C_{\mathrm{pack}}\). Set
\(\sigma_0=\varepsilon^{-1}\). Let \(q:=2^{(k-1)/(2d)}\) and
\(S_r:=\sum_{j=0}^{r-1}q^j\). Choose the public schedule
\(\sigma_i:=\frac{2S_r}{\varepsilon q^{i-1}}\), \(i=1,\dots,r\). Then
\(2\sum_{i=1}^r\sigma_i^{-1}=\varepsilon\), so Pruned-PMM is
\(\varepsilon\)-differentially private.

\smallskip
\noindent
\textbf{Case \(k=1\).}
Take \(r:=\left\lceil d\log_2(\varepsilon n)\right\rceil\). Then
\[
\mathbb E W_1(\mu_X,\mu_Y)
=
O\!\left(
\frac{d^2\log^2(\varepsilon n)}
{\varepsilon n}
\right).
\]

\smallskip
\noindent
\textbf{Case \(2\le k\le d\).}
Put \(\alpha:=2^{(k-1)/(2d)}-1\), set \(M:=\varepsilon n\alpha^2\), and take
$
r:=
\max\left\{
1,
\left\lfloor
\frac{d}{k}\log_2(1+M)
\right\rfloor
\right\}.
$
Then
\[
\mathbb E W_1(\mu_X,\mu_Y)
=
O\!\left(
(\varepsilon n\alpha^2)^{-1/k}
\right).
\]
Equivalently, since \(\alpha\asymp (k-1)/d\),
\[
\mathbb E W_1(\mu_X,\mu_Y)
=
O\!\left(
\left(\frac{d}{k-1}\right)^{2/k}
(\varepsilon n)^{-1/k}
\right).
\]
The constants depend only on \(C_{\mathrm{pack}}\).
\end{corollary}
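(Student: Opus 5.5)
The proof starts from the closed-form bound displayed just before the statement,
\[
\mathbb E W_1(\mu_X,\mu_Y)\le \frac{C_{\mathrm{abs}}}{\varepsilon n}+\frac{8C_{\mathrm{abs}}C_{\mathrm{pack}}}{\varepsilon n}S_r^2+2\cdot 2^{-r/d}.
\]
This bound is valid for every $r\ge1$ under the public schedule $\sigma_i=2S_r/(\varepsilon q^{i-1})$. Privacy is immediate. Indeed, $2\sum_{i=1}^r\sigma_i^{-1}=\varepsilon S_r^{-1}\sum_{i}q^{i-1}=\varepsilon$, so Theorem~\ref{thm:privacy} applies. What remains is a one-dimensional optimization in $r$, handled separately for $k=1$ and $k\ge2$.

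\textbf{Case $k=1$.} Here $q=1$ and $S_r=r$. With $r=\lceil d\log_2(\varepsilon n)\rceil$, the resolution term is
\[
2\cdot2^{-r/d}\le 2(\varepsilon n)^{-1}.
\]
Since $\varepsilon n\ge2$, we have $r\le d\log_2(\varepsilon n)+1\le 2d\log_2(\varepsilon n)$. The middle term is therefore $O(d^2\log^2(\varepsilon n)/(\varepsilon n))$. The root term $C_{\mathrm{abs}}/(\varepsilon n)$ is absorbed into this.

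\textbf{Case $k\ge2$.} Here $q=1+\alpha>1$, and we use $S_r=(q^r-1)/(q-1)\le q^r/\alpha$. Note that
\[
q^r=2^{r(k-1)/(2d)},\qquad\text{so}\qquad S_r^2\le \alpha^{-2}2^{r(k-1)/d}.
\]
The bound then becomes
\[
\frac{C}{\varepsilon n}+\frac{C'}{\varepsilon n\alpha^2}2^{r(k-1)/d}+2\cdot 2^{-r/d}.
\]
Set $M=\varepsilon n\alpha^2$ and write $2^{r/d}=x$. The expression is $M^{-1}x^{k-1}+x^{-1}$, which is balanced at $x\asymp M^{1/k}$.

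With $r=\max\{1,\lfloor \frac dk\log_2(1+M)\rfloor\}$, there are two subcases.

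1. If the floor is at least $1$, then $(1+M)^{1/k}/2^{1/k}\le 2^{r/d}\le(1+M)^{1/k}$. The middle term is then at most $C'M^{-1}(1+M)^{(k-1)/k}$ and the resolution term is at most $2\cdot2^{1/k}(1+M)^{-1/k}$. Both are $O((1+M)^{-1/k})$ when $M\ge1$.
2. If $M$ is small or the floor is $0$, we take $r=1$. The middle term is $O(1/(\varepsilon n\alpha^2))=O(M^{-1})$ times a bounded factor, and $W_1\le1$ trivially. In this regime $M^{-1/k}\ge$ const, so the bound $O(M^{-1/k})$ holds trivially.

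For $M\ge1$, $(1+M)^{-1/k}\le M^{-1/k}$. The root term satisfies $1/(\varepsilon n)=\alpha^2/M\le M^{-1}\le M^{-1/k}$, using $\alpha\le1$.

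Finally, $\alpha=2^{(k-1)/(2d)}-1$ with $(k-1)/(2d)\in(0,1/2]$. Convexity of $2^x$ gives
\[
\frac{(k-1)\ln2}{2d}\le\alpha\le\frac{k-1}{2d},
\]
so $\alpha\asymp(k-1)/d$. Hence $(\varepsilon n\alpha^2)^{-1/k}\lesssim (d/(k-1))^{2/k}(\varepsilon n)^{-1/k}$, with the constant $(2/\ln2)^{2/k}$ being universal.

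The main obstacle is the bookkeeping at the floor and small-$M$ boundary. The root-noise term and the $r=1$ case must be absorbed into $M^{-1/k}$ without introducing dependence on $d$ beyond $\alpha$. The trivial bound $W_1\le1$ covers the regime $M<1$.
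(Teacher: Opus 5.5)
Your proposal is correct and follows the paper's proof essentially step for step. Both start from the same closed-form public-schedule bound, use $S_r\le q^r/\alpha$, apply the trivial $W_1\le 1\le M^{-1/k}$ regime for $M<1$, and absorb the root term via $\alpha\le1$. The paper's bookkeeping is slightly cleaner: it splits directly on $M<1$ versus $M\ge1$ and notes that $M\ge1$ already forces the floor in $r$ to be at least $1$, since $d/k\ge1$ and $\log_2(1+M)\ge1$, so your separate ``floor is $0$'' subcase is unnecessary.
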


\begin{proof}
First,
$
2\sum_{i=1}^r\frac1{\sigma_i}
=
2\sum_{i=1}^r
\frac{\varepsilon q^{i-1}}{2S_r}
=
\frac{\varepsilon}{S_r}\sum_{i=1}^r q^{i-1}
=
\frac{\varepsilon}{S_r}S_r
=
\varepsilon.
$
Thus, the schedule uses the privacy budget tightly.

\smallskip
\noindent
\textbf{Case $k=1$.}
Here $q=1$ and hence $S_r=r$. With
$r=\lceil d\log_2(\varepsilon n)\rceil$, we have
\[
2^{-r/d}
\le
2^{-\log_2(\varepsilon n)}
=
(\varepsilon n)^{-1}.
\]
Therefore the bound for the public schedule gives
\[
\mathbb E W_1(\mu_X,\mu_Y)
\le
\frac{C_{\mathrm{abs}}}{\varepsilon n}
+
\frac{8C_{\mathrm{abs}}C_{\mathrm{pack}}}{\varepsilon n}r^2
+
\frac{2}{\varepsilon n}.
\]
Moreover,
$
r
=
\lceil d\log_2(\varepsilon n)\rceil
\le
d\log_2(\varepsilon n)+1
=
O\!\left(d\log(\varepsilon n)\right),
$
since $\varepsilon n\ge2$. Hence
\[
\mathbb E W_1(\mu_X,\mu_Y)
=
O\!\left(
\frac{d^2\log^2(\varepsilon n)}
{\varepsilon n}
\right).
\]
This proves the claimed bound for $k=1$.

\smallskip
\noindent
\smallskip
\noindent
\textbf{Case \(2\le k\le d\).}
Let $\alpha:=q-1=2^{(k-1)/(2d)}-1$ and set
$
M:=\varepsilon n\alpha^2.
$
It is enough to prove the claim when $M\ge1$. Indeed, if $M<1$, then
$M^{-1/k}>1$, while $W_1(\mu_X,\mu_Y)\le\operatorname{diam}(\Omega)=1$.

Assume therefore that $M\ge1$. Since $d/k\ge1$ and $\log_2(1+M)\ge1$,
the lower truncation in the definition of $r$ is harmless.
With
$
r=\left\lfloor \frac{d}{k}\log_2 (M+1)\right\rfloor ,
$
we have
$
r\le \frac{d}{k}\log_2 (M+1),
\,\,
r\ge \frac{d}{k}\log_2 (M+1)-1.
$
Since $k>1$, $q>1$, and
\[
S_r=\sum_{j=0}^{r-1}q^j
=
\frac{q^r-1}{q-1}
\le
\frac{q^r}{\alpha}.
\]
Therefore
$
S_r^2
\le
\frac{q^{2r}}{\alpha^2}
=
\frac{2^{r(k-1)/d}}{\alpha^2}.
$
Using the upper bound on $r$,
$
2^{r(k-1)/d}
\le
2^{\frac{k-1}{k}\log_2(M+1)}
=
(M+1)^{(k-1)/k}
\le
2M^{(k-1)/k}.
$
Hence
$
\frac{8C_{\mathrm{abs}}C_{\mathrm{pack}}}{\varepsilon n}S_r^2
\le
16C_{\mathrm{abs}}C_{\mathrm{pack}}
M^{-1/k}.
$
For the resolution term, the lower bound on $r$ gives
$
2^{-r/d}
\le
2^{1/d}(M+1)^{-1/k}
\le
2^{1/d}M^{-1/k}.
$
Thus
$
2\cdot 2^{-r/d}
\le
2^{1+1/d}M^{-1/k}.
$

Finally, since \(2\le k\le d\),
$
0<\alpha=2^{(k-1)/(2d)}-1\le 2^{1/2}-1<1.
$
Thus $\alpha^2\le1$, and since $M\ge1$,
\[
\frac{C_{\mathrm{abs}}}{\varepsilon n}
=
C_{\mathrm{abs}}\frac{\alpha^2}{M}
\le
C_{\mathrm{abs}}M^{-1}
\le
C_{\mathrm{abs}}M^{-1/k}.
\]
Combining these estimates in the bound for the public schedule,
\[
\mathbb E W_1(\mu_X,\mu_Y)
=
O\!\left(
M^{-1/k}
\right)
=
O\!\left(
(\varepsilon n\alpha^2)^{-1/k}
\right).
\]

Finally, since $0<(k-1)/(2d)\le1/2$, the estimate $2^x-1\asymp x$ on
$[0,1/2]$ gives
$
\alpha
=
2^{(k-1)/(2d)}-1
\asymp
\frac{k-1}{d}.
$
Therefore
\[
(\varepsilon n\alpha^2)^{-1/k}
\asymp
\left(\frac{d}{k-1}\right)^{2/k}
(\varepsilon n)^{-1/k}.
\]
This proves the claimed $k>1$ bound.
\end{proof}

\subsection{Expected runtime with pruning}
\label{subsec:runtime-pruning}

A node is visited when its noisy count is materialized.
Only nodes with positive corrected counts are expanded.
We use sparse data-point lists, partitioned when a node is
expanded. Under the unit-cost arithmetic and noise-sampling model,
local consistency and bookkeeping take constant time per node.
Cell geometry is maintained by single-coordinate updates.
We include the cost of writing the synthetic points explicitly.

\begin{theorem}[Runtime under pruning]
\label{thm:time}
Let \(n\ge1\), \(r\ge1\), and
\(m=\max\{1,n+\lambda_\varnothing\}\), where
\(\lambda_\varnothing\sim\mathrm{LapZ}(\sigma_0)\).
The number of visited nodes satisfies
\[
N_{\mathrm{vis}}\le1+2mr.
\]
Under the implementation above,
\[
T=O(nr+mr+dm),
\qquad
\mathbb E T=O\!\left(nr+(n+\sigma_0)(r+d)\right).
\]
No packing-growth assumption is required.
\end{theorem}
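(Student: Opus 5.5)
We bound the number of visited nodes by a level-by-level counting argument, then charge the work per node, and finally take expectations over the root noise.

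\textbf{Visited nodes.} The corrected family $(m_\theta)$ is consistent with root mass $m_\varnothing=m$. This holds because Algorithm~\ref{alg:pruned_consistency} always outputs $u_0+u_1=M$, and inactive subtrees carry zero mass. Hence for each level $j$ we have $\sum_{|\theta|=j}m_\theta=m$. Every active node has $m_\theta\ge1$ and nonnegative integer mass, so $|A_j|\le m$ for all $j\le r$.

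A node is visited either as the root or as a child of an expanded active node at levels $0,\dots,r-1$. Each expansion materializes exactly two children, so
\[
N_{\mathrm{vis}}\le 1+2\sum_{j=0}^{r-1}|A_j|\le 1+2mr.
\]
No geometry enters this step, which is consistent with the claim that no packing-growth assumption is needed.

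\textbf{Deterministic running time.} Under the unit-cost model, each visited node costs $O(1)$ for noise sampling, the local consistency rule, and the single-coordinate update of cell geometry. This totals $O(mr)$.

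The true counts $n_{\theta b}$ come from partitioning the sparse data lists. When an active node $\theta$ at level $j$ is expanded, we split its list of data points in time proportional to its length plus $O(1)$. The lists at a fixed level are disjoint subsets of $X$, so the total partitioning cost per level is $O(n+|A_j|)$. Summed over the $r$ levels this gives $O(nr+mr)$. Computing $n_\varnothing=n$ and building the root list costs $O(n)$.

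Writing the output places $m_\theta$ points in each leaf, for example at leaf centers. Each point has $d$ coordinates, so this costs $O(dm)$. We maintain leaf geometry incrementally, so writing a point never requires recomputing its cell. Altogether $T=O(nr+mr+dm)$.

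\textbf{Expectation.} Since $m=\max\{1,n+\lambda_\varnothing\}\le n+1+|\lambda_\varnothing|$, Assumption~\ref{ass:noise} gives
\[
\mathbb E m\le n+1+C_{\mathrm{abs}}\sigma_0=O(n+\sigma_0),
\]
using $n\ge1$. The bound on $T$ is linear in $m$, so linearity of expectation yields
\[
\mathbb E T=O\bigl(nr+(n+\sigma_0)(r+d)\bigr).
\]

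\textbf{Main obstacle.} The only delicate point is the data-partitioning cost. The list-splitting work must be charged per level as $O(n)$ rather than per visited node. This rests on the disjointness of active cells at each level, and that disjointness also covers active nodes with empty data lists, which cost $O(1)$ and are absorbed in the $O(mr)$ term.
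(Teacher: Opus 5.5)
Your proposal is correct and follows the paper's proof essentially step for step. Consistency gives $|A_j|\le m$ and hence $N_{\mathrm{vis}}\le 1+2mr$; list-routing is charged $O(n)$ per level because the active cells at a level are disjoint; tree operations cost $O(N_{\mathrm{vis}})$; and output costs $O(dm)$. The only cosmetic difference is in the expectation step: the paper uses $m\le n+(\lambda_\varnothing)_+$ and symmetry to get $\mathbb E m\le n+\sigma_0/2$, whereas your cruder bound $\mathbb E m\le n+1+C_{\mathrm{abs}}\sigma_0$ yields the same $O(n+\sigma_0)$.
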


\begin{proof}
At every level, the positive integer corrected counts sum to \(m\).
Thus \(|A_j|\le m\), and
\[
N_{\mathrm{vis}}
=1+2\sum_{j=0}^{r-1}|A_j|
\le1+2mr.
\]
The expanded cells at each depth are disjoint, so routing their
data-point lists costs at most \(O(n)\) per level. Tree operations
cost \(O(N_{\mathrm{vis}})\), and writing the output costs \(O(dm)\).
This proves the pathwise bound.

Since \(n\ge1\), we have
\(m\le n+(\lambda_\varnothing)_+\). Symmetry and
Assumption~\ref{ass:noise} give
\[
\mathbb E m
\le n+\tfrac12\mathbb E|\lambda_\varnothing|
\le n+\tfrac12\sigma_0.
\]
Taking expectations proves the result.
\end{proof}

\begin{corollary}[Runtime with calibrated root noise]
\label{cor:time-intrinsic}
If \(\sigma_0=\varepsilon^{-1}\) and \(\varepsilon n\ge2\),
then fixed-depth Pruned-PMM satisfies
\[
\mathbb E T=O(nr+dn).
\]
For Adaptive Pruned-PMM, let \(L:=\max_{a\in\mathcal A}r_a\).
With \(\sigma_0=\varepsilon_{\rm main}^{-1}\) and
\(\varepsilon_{\rm main}n\ge2\), including occupancy computation
and private selection gives
\[
\mathbb E T_{\rm adaptive}
=O\!\left((n+|\mathcal A|)L+dn\right).
\]
For the standard candidate set, \(|\mathcal A|=d\) and
\(L=\lfloor d\log_2(\varepsilon_{\rm main}n)\rfloor\).
\end{corollary}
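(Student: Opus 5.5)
The plan is to derive both statements from the pathwise and expected bounds of Theorem~\ref{thm:time}. The main work is to substitute the calibrated root noise scale and then account separately for the cost of the selection stage.

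\textbf{Fixed-depth case.} Theorem~\ref{thm:time} gives $\mathbb E T=O(nr+(n+\sigma_0)(r+d))$. With $\sigma_0=\varepsilon^{-1}$ and $\varepsilon n\ge 2$ we have $\sigma_0\le n/2$, so $n+\sigma_0\le \tfrac32 n$. Hence $\mathbb E T=O(nr+n(r+d))=O(nr+dn)$. This is immediate.

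\textbf{Adaptive case.} I would split the running time into four stages and bound each.

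\emph{(i) Occupancy computation.} We need $O_j$ for all $0\le j<L$. I would refine the points level by level. At level $j$ the split coordinate is $(j\bmod d)+1$ and the split is at a known midpoint. Each point's child bit is therefore obtained by one comparison against an interval that is maintained by single-coordinate updates, which is $O(1)$ per point per level under the unit-cost model. The current groups, namely the occupied cells, are split in $O(n)$ per level, and $O_j$ is the number of nonempty groups. The total is $O(nL)$, plus $O(dn)$ to read the input.

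\emph{(ii) Scores.} For each $a\in\mathcal A$, the quantities $P^{(s)}_j$, $A_{s,r_s}$, $\sigma^a_{j+1}$, $B_a(X)$ and $\Delta_a$ are sums of at most $r_a\le L$ terms. They cost $O(L)$ each, so $O(|\mathcal A|L)$ in total.

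\emph{(iii) Selection.} Forming $\Delta_{\mathcal A}$ and sampling from the exponential mechanism, or taking the argmin, costs $O(|\mathcal A|)$.

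\emph{(iv) Synthesis.} Conditional on $\widehat a$, the run is fixed-depth Pruned-PMM at depth $\widehat r\le L$ with $\sigma_0=\varepsilon_{\rm main}^{-1}$. The fresh root noise is independent of the selection, so the first part applied conditionally gives expected cost $O(n\widehat r+dn)\le O(nL+dn)$. Taking the total expectation and summing (i)--(iv) gives $\mathbb E T_{\rm adaptive}=O((n+|\mathcal A|)L+dn)$.

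\textbf{Standard candidate set.} Here $|\mathcal A|=d$ by definition, and it remains to check that $\max_s r_s=r_1=L$. Since $\varepsilon_{\rm main}n\ge2$ we have $L\ge d\ge1$, so the truncation $\max\{1,\cdot\}$ never exceeds $L$. For $s\ge2$ we have $((s-1)/d)^2\le1$, so
\[
\tfrac{d}{s}\log_2\!\bigl(1+\varepsilon_{\rm main}n((s-1)/d)^2\bigr)\le \tfrac d2\log_2(1+\varepsilon_{\rm main}n)\le d\log_2(\varepsilon_{\rm main}n).
\]
The last inequality uses $1+x\le x^2$ for $x\ge2$. Taking floors gives $r_s\le L$.

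\textbf{Expected obstacle.} The main point needing care is stage (i): obtaining $O(nL)$ rather than $O(ndL)$ for the occupancy profile. This relies on the one-coordinate-per-level refinement together with the unit-cost model. The rest is bookkeeping.
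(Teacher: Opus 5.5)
Your proposal is correct and follows essentially the same route as the paper: you substitute the calibrated root noise into Theorem~\ref{thm:time} (you use $\sigma_0\le n/2$ where the paper uses $\mathbb E m\le 5n/4$), then add the $O(nL)$ sparse occupancy routing, the $O(|\mathcal A|L)$ schedule, score and selection cost, and the $O(nL+dn)$ conditional synthesis cost. Your explicit check that $\max_s r_s=r_1=L$ for the standard candidate set is a useful addition: the paper only asserts it in this proof and justifies it elsewhere, at the definition of the candidate depths, via $1+N((s-1)/d)^2\le N^s$.
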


\begin{proof}
The calibrated root noise gives
\(\mathbb E m\le n+(2\varepsilon)^{-1}\le5n/4\).
The fixed-depth claim follows from Theorem~\ref{thm:time}.

For the adaptive mechanism, sparse routing to depth \(L\)
computes all required occupied-cell counts in \(O(nL)\) time.
Given the public candidate schedules, evaluating their scores
and sensitivities and selecting a candidate cost \(O(|\mathcal A|L)\).
For the standard candidate set, constructing the schedules also
takes \(O(|\mathcal A|L)\) time.
Since the selected depth is at most \(L\), the expected synthesis
cost is \(O(nL+dn)\).
Adding these costs proves the adaptive bound.
\end{proof}

%% file: exponential.tex
\section{Private Adaptive Depth Selection via the Exponential Mechanism}
\label{sec:depth-selection}

The fixed-depth analysis shows that, for a given occupancy profile, the leading transport
upper bound is minimized by a depth and noise schedule calibrated to that profile. If the noise scales were chosen as a deterministic function of \(O_0(X),\dots,O_{r-1}(X)\) without privacy accounting, then even before releasing synthetic data, the selected scales themselves could leak information about the data. The exponential mechanism is used only to privatize this adaptive choice; after selection, the synthesis stage is an ordinary  Pruned-PMM run. The
occupancy profile is private, so directly optimizing the schedule from the realized profile would
make the final noise distribution data-dependent. We avoid this by separating adaptation from
synthesis: first, we privately select one candidate from a finite public candidate set of
depth--schedule pairs using the exponential mechanism; then, conditional on the selected candidate,
we run Pruned-PMM once with a fixed schedule.

Throughout this section, \(\Omega=[0,1]^d\) is equipped with
\(\rho(x,y)=\|x-y\|_\infty\). Split the privacy budget as
\(\varepsilon=\varepsilon_{\mathrm{sel}}+\varepsilon_{\mathrm{main}}\), with
\(\varepsilon_{\mathrm{sel}},\varepsilon_{\mathrm{main}}>0\). Assume
\(\varepsilon_{\mathrm{main}}n\ge 2\), and write
$
N_{\mathrm{main}}:=\varepsilon_{\mathrm{main}}n.
$
The root perturbation is not required for replacement privacy, since \(n\) is fixed.
We include it only to match the previous fixed-depth statement, taking
\(\lambda_\varnothing\sim\mathrm{LapZ}(\varepsilon_{\mathrm{main}}^{-1})\)
independently of the adaptive selection. It does not enter the privacy constraint. The term
\(C_{\rm abs}/(\varepsilon_{\rm main}n)\) may be dropped if one sets \(m_\varnothing=n\).

\paragraph{Fixed-depth upper-bound notation.}
For a depth \(r\ge1\), write \(O^{(r)}=(O_0,\dots,O_{r-1})\). Recall the fixed-depth quantity
$
A_r(O^{(r)})
:=
2\sum_{j=0}^{r-1}\sqrt{O_j}\,2^{-j/(2d)}.
$
With privacy budget \(\varepsilon_{\mathrm{main}}\), the corresponding optimized fixed-depth upper
bound is
\[
B_r(O^{(r)})
:=
\frac{C_{\mathrm{abs}}}{\varepsilon_{\mathrm{main}}n}
+
\frac{2C_{\mathrm{abs}}}{\varepsilon_{\mathrm{main}}n}
A_r(O^{(r)})^2
+
2\cdot2^{-r/d}.
\]
This is exactly the fixed-depth bound from the previous section with \(\varepsilon\) replaced by
\(\varepsilon_{\mathrm{main}}\). The following notation will be used throughout, and the distinctions among the three quantities are summarized below

\[
\begin{array}{c|c|c}
\text{symbol} & \text{depends on data?} & \text{meaning}\\
\hline
B_r(O^{(r)}) & \text{yes} & \text{fixed-depth bound using actual occupancy profile}\\
B_{s,r} & \text{no} & \text{public model bound for model dimension}\,\, s \,\,\text{and depth}\,\, r\\
B_a(X) & \text{yes} & \text{exponential-mechanism score for public candidate}\,\, a
\end{array}
\]

\subsection{A standard candidate set of depth--schedule pairs}

We adapt over the integer grid \(s\in[d]\), where each \(s\) represents the model that the
occupied-cell growth behaves like \(2^{js/d}\). Here \(s\) should be read as a candidate packing-growth dimension. Define
$
L:=\left\lfloor d\log_2(\varepsilon_{\mathrm{main}}n)\right\rfloor.
$
For each \(s\in[d]\), define the model occupancy profile
$
P_j^{(s)}:=\min\{n,2^{js/d}\},\,\, j\ge0.
$
For \(1\le r\le L\), write
$
P^{(s,r)}:=(P_0^{(s)},\dots,P_{r-1}^{(s)})
$
and define
$
A_{s,r}:=A_r(P^{(s,r)})
=
2\sum_{j=0}^{r-1}\sqrt{P_j^{(s)}}\,2^{-j/(2d)}.
$
The schedule associated with the pair \((s,r)\) is
\[
\sigma_{j+1}^{(s,r)}
:=
\frac{A_{s,r}}
{\varepsilon_{\mathrm{main}}\sqrt{P_j^{(s)}}\,2^{-j/(2d)}},
\qquad j=0,\dots,r-1.
\]
Then
$
2\sum_{i=1}^{r}\frac{1}{\sigma_i^{(s,r)}}
=
\varepsilon_{\mathrm{main}}.
$
Thus Pruned-PMM run at depth \(r\) with schedule \(\sigma^{(s,r)}\) uses exactly
privacy budget \(\varepsilon_{\mathrm{main}}\) for the counts.

The model bound for \((s,r)\) is
\[
B_{s,r}:=B_r(P^{(s,r)})
=
\frac{C_{\mathrm{abs}}}{\varepsilon_{\mathrm{main}}n}
+
\frac{2C_{\mathrm{abs}}}{\varepsilon_{\mathrm{main}}n}
A_{s,r}^2
+
2\cdot2^{-r/d}.
\]
As in the main text, set \(r_1^\circ:=L\) and, for \(s\ge2\),
\[
r_s^\circ:=\max\left\{1,
\left\lfloor\frac{d}{s}\log_2\!\left(
1+\varepsilon_{\mathrm{main}}n\left(\frac{s-1}{d}\right)^2
\right)\right\rfloor\right\}.
\]
These depths lie in \(\{1,\dots,L\}\), since \(N_{\mathrm{main}}\ge2\) and
\(1+N_{\mathrm{main}}((s-1)/d)^2\le N_{\mathrm{main}}^s\) for \(s\ge2\).
The \textit{standard} candidate set is
\begin{equation}
\mathcal A^\circ
:=
\{(s,r_s^\circ,\sigma^{(s,r_s^\circ)}):s\in[d]\}.
\end{equation}
Here \(r_s^\circ\) is the depth denoted by \(r_s\) in the main text.

Each candidate contains both a depth and a full noise schedule. This is
essential: the schedule that is optimal for a fixed depth depends on the private
occupancy profile \(O_0,\dots,O_{r-1}\). If we selected only the depth and then
computed the optimal schedule from the realized profile, the final noise scales
would be data-dependent outside the accounted privacy mechanism. Instead, every
candidate schedule is fixed in advance from a public model profile
\(P^{(s,r)}\), and the data-dependent part is confined to the exponential-mechanism
score.

The results below hold for any finite public candidate set \(\mathcal A\) of
candidates \(a=(s,r,\sigma^{(s,r)})\) satisfying \(s\in[d]\), \(1\le r\le L\),
and \(2\sum_{i=1}^r1/\sigma_i^{(s,r)}=\varepsilon_{\mathrm{main}}\). In particular,
one may optionally enlarge the candidate set by adding the public optimizing
depth \(r_s^\star\) for each \(s\), defined as the smallest depth in
\(\{1,\dots,L\}\) minimizing \(B_{s,r}\).
Adding such candidates can only improve the best-candidate term, but may increase
both \(\log|\mathcal A|\) and the maximum score sensitivity
\(\Delta_{\mathcal A}\) defined below. For the standard candidate
set, \(|\mathcal A^\circ|=d\); for the enlarged candidate set containing both
\(r_s^\circ\) and \(r_s^\star\), one has \(|\mathcal A|\le2d\).

\subsection{Private score and exponential-mechanism selection}

For a candidate \(a=(s,r,\sigma^{(s,r)})\in\mathcal A\), define its data-dependent fixed-schedule
upper-bound score by
\[
B_a(X)
:=
\frac{C_{\mathrm{abs}}}{\varepsilon_{\mathrm{main}}n}
+
\frac{4C_{\mathrm{abs}}}{n}
\sum_{j=0}^{r-1}
\sigma_{j+1}^{(s,r)}O_j(X)2^{-j/d}
+
2\cdot2^{-r/d}.
\]
This is the fixed-depth Pruned-PMM bound evaluated at the private occupancy profile, but with the schedule associated with candidate \(a\). The score \(B_a(X)\) is not the realized Wasserstein error. It is the deterministic
upper bound from the fixed-schedule analysis, evaluated on the occupancy profile of \(X\).
It is computable from the private data and therefore must itself be selected privately.

For every \(j\ge1\), replacing one
point changes \(O_j(X)\) by at most one. Indeed, one occupied cell may disappear and one occupied
cell may appear, but the net change in the number of occupied cells has absolute value at most one.
Therefore the sensitivity of \(B_a\) is bounded by
$
\Delta_a
:=
\frac{4C_{\mathrm{abs}}}{n}
\sum_{j=1}^{r-1}
\sigma_{j+1}^{(s,r)}2^{-j/d}.
$
Let
$
\Delta_{\mathcal A}:=\max_{a\in\mathcal A}\Delta_a.
$
If \(\Delta_{\mathcal A}=0\), select a minimizer of \(B_a(X)\) using a fixed
public tie-breaking rule. Otherwise, the exponential mechanism selects
\(\widehat a\in\mathcal A\) with probability
\[
\Pr(\widehat a=a)
=
\frac{
\exp\!\left(
-\frac{\varepsilon_{\mathrm{sel}}B_a(X)}{2\Delta_{\mathcal A}}
\right)
}
{
\sum_{b\in\mathcal A}
\exp\!\left(
-\frac{\varepsilon_{\mathrm{sel}}B_b(X)}{2\Delta_{\mathcal A}}
\right)
}.
\]
If \(\widehat a=(\widehat s,\widehat r,\sigma^{(\widehat s,\widehat r)})\), run Pruned-PMM once at
depth \(\widehat r\), using root mass \(m_\varnothing=\max\{n+\lambda_\varnothing,1\}\) and the
selected schedule \(\sigma^{(\widehat s,\widehat r)}\). Use fresh synthesis randomness
independent of the selection stage. Denote the output by \(Y\).

\subsection{A Gibbs selection inequality}

\begin{lemma}[Gibbs selection inequality]
\label{lem:gibbs-selection}
Let \(\mathcal A\) be finite and let \(G_a\in\mathbb R\) be arbitrary scores. For \(\alpha>0\),
sample \(\widehat a\in\mathcal A\) with probability proportional to \(\exp(-\alpha G_a)\). Then
\[
\mathbb E G_{\widehat a}
\le
\min_{a\in\mathcal A}G_a
+
\frac{\log|\mathcal A|}{\alpha}.
\]
\end{lemma}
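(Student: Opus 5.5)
The inequality is the standard Gibbs variational principle: the Gibbs distribution minimizes expected score plus a multiple of negative entropy. I will prove it directly through the log-partition function. Write $p_a:=e^{-\alpha G_a}/Z$ with $Z:=\sum_{b\in\mathcal A}e^{-\alpha G_b}$.

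\textbf{Step 1: express the expected score through entropy.} Since $\log p_a=-\alpha G_a-\log Z$, we have $G_a=-\frac1\alpha(\log p_a+\log Z)$. Averaging over $p$ gives
\[
\mathbb E G_{\widehat a}=\sum_a p_aG_a=\frac1\alpha\Bigl(H(p)-\log Z\Bigr),
\qquad H(p):=-\sum_a p_a\log p_a .
\]

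\textbf{Step 2: bound the two terms separately.} The entropy of a distribution on $|\mathcal A|$ points satisfies $H(p)\le\log|\mathcal A|$, by Jensen's inequality or Gibbs' inequality against the uniform distribution. For the partition function, keep only the term with $a^\star\in\arg\min_aG_a$:
\[
Z\ge e^{-\alpha G_{a^\star}},
\qquad\text{so}\qquad
-\log Z\le\alpha\min_aG_a .
\]

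\textbf{Step 3: combine.} Substituting both bounds into Step~1 yields
\[
\mathbb E G_{\widehat a}\le\min_aG_a+\frac{\log|\mathcal A|}{\alpha},
\]
which is the claim. The scores may be arbitrary reals, since only finiteness of $\mathcal A$ is used.

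An equivalent route is the Donsker--Varadhan duality: $\sum_a q_aG_a+\frac1\alpha\mathrm{KL}(q\|u)$ is minimized over distributions $q$ at $q=p$, where $u$ is uniform on $\mathcal A$. Taking $q=\delta_{a^\star}$ and using $\mathrm{KL}(p\|u)\ge0$ gives the same bound.

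There is no substantive obstacle. The only delicate point is the sign bookkeeping in the identity of Step~1; it is cleanest to verify it by substituting $\log p_a=-\alpha G_a-\log Z$ directly.
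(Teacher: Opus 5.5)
Your proposal is correct and follows essentially the same route as the paper: the identity $\mathbb E G_{\widehat a}=(H(p)-\log Z)/\alpha$, combined with $H(p)\le\log|\mathcal A|$ and $Z\ge e^{-\alpha\min_a G_a}$. Your Donsker--Varadhan remark is a valid equivalent reformulation of the same argument.
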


\begin{proof}
Let \(G_\star:=\min_{a\in\mathcal A}G_a\), \(Z:=\sum_{a\in\mathcal A}\exp(-\alpha G_a)\), and
\(p_a:=\exp(-\alpha G_a)/Z\). Since \(\log p_a=-\alpha G_a-\log Z\), the entropy
\(H(p):=-\sum_a p_a\log p_a\) satisfies
$
H(p)=\alpha\sum_a p_aG_a+\log Z.
$
Therefore
$
\mathbb E G_{\widehat a}
=
\frac{H(p)-\log Z}{\alpha}.
$
Since \(H(p)\le\log|\mathcal A|\) and \(Z\ge\exp(-\alpha G_\star)\), we get
$
\mathbb E G_{\widehat a}
\le
G_\star+\frac{\log|\mathcal A|}{\alpha}.
$
\end{proof}

\subsection{Privacy and finite-candidate guarantee}
\begin{theorem}[Privacy of Adaptive Pruned-PMM]
\label{thm:adaptive-privacy}
The Adaptive Pruned-PMM mechanism described above is
\((\varepsilon_{\mathrm{sel}}+\varepsilon_{\mathrm{main}})\)-differentially private under replacement
adjacency.
\end{theorem}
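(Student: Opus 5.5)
The plan is to write the mechanism as a two-stage adaptive composition and check each stage separately. Stage one is the selector of Algorithm~\ref{alg:depth-select}, which outputs $\widehat a$. Stage two is the synthesis map $(X,\widehat a)\mapsto Y$, which runs fixed-depth Pruned-PMM with the public schedule attached to $\widehat a$, using fresh independent randomness. Since the candidate set $\mathcal A$ is public, I first show the selector is $\varepsilon_{\rm sel}$-DP. I then show that for each fixed $a\in\mathcal A$ the conditional synthesis map is $\varepsilon_{\rm main}$-DP. Finally I combine the two by the standard adaptive sequential composition argument: for adjacent $X\sim X'$ and any event $E$,
\[
\Pr(Y\in E)=\sum_{a\in\mathcal A}\Pr(\widehat a=a\mid X)\Pr(Y\in E\mid X,a)\le e^{\varepsilon_{\rm sel}}e^{\varepsilon_{\rm main}}\Pr(Y'\in E).
\]
This gives privacy even when only $Y$ is released; releasing the pair $(\widehat a,Y)$ is also covered.

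For the synthesis stage, fix $a=(s,r,\sigma^{(s,r)})$. The schedule satisfies $2\sum_{i=1}^r 1/\sigma_i^{(s,r)}=\varepsilon_{\rm main}$ by construction. Theorem~\ref{thm:privacy} then gives $\varepsilon_{\rm main}$-DP for the non-root counts. The root count is $n$ under replacement adjacency, so it has zero sensitivity, and the root noise $\lambda_\varnothing$ is data-independent. It therefore enters only through post-processing and costs nothing, as remarked at the start of the section.

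For the selection stage, I separate two cases.
\begin{itemize}
\item If $\Delta_{\mathcal A}=0$, then every $\Delta_a=0$. Because every $\sigma_{j+1}^{(s,r)}>0$ and $2^{-j/d}>0$, this forces every candidate to have $r_a\le 1$. The score then involves only $O_0=1$, so $B_a(X)$ does not depend on $X$. The deterministic argmin with public tie-breaking is therefore constant and trivially $0$-DP.
\item If $\Delta_{\mathcal A}>0$, I apply the usual exponential-mechanism argument. This requires the bound $|B_a(X)-B_a(X')|\le\Delta_{\mathcal A}$ for all $a$ and all adjacent $X\sim X'$.
\end{itemize}
Given that bound, the ratio of numerators $\exp(-\varepsilon_{\rm sel}B_a/(2\Delta_{\mathcal A}))$ is at most $e^{\varepsilon_{\rm sel}/2}$. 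The normalizing sums are also within a factor $e^{\varepsilon_{\rm sel}/2}$ of each other. Together these give the factor $e^{\varepsilon_{\rm sel}}$.

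The main step needing care is the sensitivity bound, namely that each $O_j$ changes by at most one under replacement. Replacing $x_i$ by $x_i'$ affects only the cells $\theta_j\ni x_i$ and $\theta_j'\ni x_i'$ at level $j$. Removing $x_i$ can empty at most one cell, and inserting $x_i'$ can occupy at most one new cell, so $|O_j(X)-O_j(X')|\le1$. At $j=0$ we always have $O_0=1$. The terms of $B_a$ that do not involve $O_j$ are data-independent. Hence
\[
|B_a(X)-B_a(X')|\le \frac{4C_{\rm abs}}{n}\sum_{j=1}^{r-1}\sigma^{(s,r)}_{j+1}2^{-j/d}=\Delta_a\le\Delta_{\mathcal A}.
\]
This also justifies the $\Delta_{\mathcal A}=0$ case above. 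Combining the two stages through the composition display completes the proof.
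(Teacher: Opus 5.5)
Your proposal is correct and takes the same route as the paper. It gets $\varepsilon_{\rm sel}$-privacy of the exponential-mechanism selector from the sensitivity bound $\Delta_{\mathcal A}$, with the $\Delta_{\mathcal A}=0$ branch costing nothing because the scores are then data-independent; it gets $\varepsilon_{\rm main}$-privacy of the fixed-schedule Pruned-PMM run, conditional on the public selected candidate, from Theorem~\ref{thm:privacy}; and it combines the two by sequential composition. You also spell out steps the paper's short proof leaves implicit: the argument that $|O_j(X)-O_j(X')|\le 1$, why $\Delta_{\mathcal A}=0$ forces every candidate depth to equal one, and the explicit adaptive composition sum over $\mathcal A$.
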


\begin{proof}
The score sensitivity is bounded by \(\Delta_{\mathcal A}\), so the exponential mechanism is
\(\varepsilon_{\mathrm{sel}}\)-differentially private when \(\Delta_{\mathcal A}>0\).
If \(\Delta_{\mathcal A}=0\), adjacent datasets have the same score vector, so
the fixed tie-breaking rule has zero privacy cost. Conditional on the selected candidate
\(\widehat a=(\widehat s,\widehat r,\sigma^{(\widehat s,\widehat r)})\), the depth and all noise
scales are fixed and satisfy
$
2\sum_{i=1}^{\widehat r}
\frac{1}{\sigma_i^{(\widehat s,\widehat r)}}
=
\varepsilon_{\mathrm{main}}.
$
Therefore, by the fixed-depth privacy theorem for Pruned-PMM, the final synthesis stage is
\(\varepsilon_{\mathrm{main}}\)-differentially private. Sequential composition gives total privacy
\(\varepsilon_{\mathrm{sel}}+\varepsilon_{\mathrm{main}}\).
\end{proof}

\begin{theorem}[Best-candidate guarantee]
\label{thm:adaptive-library}
For every fixed dataset \(X\),
\[
\mathbb E W_1(\mu_X,\mu_Y)
\le
\min_{a\in\mathcal A}B_a(X)
+
\frac{2\Delta_{\mathcal A}}{\varepsilon_{\mathrm{sel}}}\log|\mathcal A|.
\]
The expectation is over both the exponential-mechanism randomness and the final Pruned-PMM
randomness.
\end{theorem}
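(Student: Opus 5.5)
The argument conditions on the selected candidate \(\widehat a\) and then averages over the selection. Fix the dataset \(X\), so the occupancy profile \(O_0(X),\dots,O_{L-1}(X)\) is deterministic. The synthesis stage uses randomness independent of the selection stage. Conditional on \(\widehat a=a=(s,r,\sigma^{(s,r)})\), the output \(Y\) is therefore an ordinary fixed-depth Pruned-PMM run at depth \(r\) with the public schedule \(\sigma^{(s,r)}\), root noise \(\lambda_\varnothing\sim\mathrm{LapZ}(\varepsilon_{\rm main}^{-1})\), and root mass \(m_\varnothing=\max\{n+\lambda_\varnothing,1\}\).

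First I apply Corollary~\ref{cor:acc-exp} with this schedule. It bounds the conditional expected Wasserstein error by
\[
\frac{\mathbb E|m-n|}{n}+\frac{4C_{\rm abs}}{n}\sum_{j=0}^{r-1}\sigma^{(s,r)}_{j+1}O_j(X)2^{-j/d}+\delta .
\]
For the root term, \(|m-n|\le|\lambda_\varnothing|\) and Assumption~\ref{ass:noise} give \(\mathbb E|m-n|\le C_{\rm abs}\varepsilon_{\rm main}^{-1}\). For the resolution term, Proposition~\ref{prop:diam} gives \(\delta\le 2\cdot2^{-r/d}\). The three terms then match the three terms of the score, so
\[
\mathbb E\bigl[W_1(\mu_X,\mu_Y)\mid\widehat a=a\bigr]\le B_a(X)\qquad\text{for every }a\in\mathcal A .
\]
One point to check is that Corollary~\ref{cor:acc-exp} uses only the flux bound of Lemma~\ref{lem:fluxnoise}, the empty-node Lemma~\ref{lem:empty}, and the occupied-cell count. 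It does not use any optimality of the schedule, so it applies verbatim to an arbitrary fixed public schedule.

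Next, by the tower property, \(\mathbb E W_1(\mu_X,\mu_Y)\le \mathbb E B_{\widehat a}(X)\), where the outer expectation is over the selection stage only. If \(\Delta_{\mathcal A}>0\), the selection is a Gibbs distribution with scores \(G_a:=B_a(X)\) and inverse temperature \(\alpha=\varepsilon_{\rm sel}/(2\Delta_{\mathcal A})\). Lemma~\ref{lem:gibbs-selection} then gives
\[
\mathbb E B_{\widehat a}(X)\le\min_{a}B_a(X)+\frac{2\Delta_{\mathcal A}}{\varepsilon_{\rm sel}}\log|\mathcal A| .
\]
If \(\Delta_{\mathcal A}=0\), the rule picks an exact minimizer of \(B_a(X)\). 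Then \(\mathbb E B_{\widehat a}(X)=\min_a B_a(X)\), and the penalty term vanishes, so the bound holds trivially.

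I expect no real obstacle. The one delicate point is the conditioning step: the pair \((\widehat r,\sigma)\) is chosen from \(X\), so the fixed-schedule bound has to be applied conditionally on the selection. This is legitimate because \(B_a(X)\) is a deterministic function of the fixed dataset \(X\), and the synthesis randomness is fresh and independent of \(\widehat a\). No uniformity over datasets is needed, since \(X\) is held fixed throughout.
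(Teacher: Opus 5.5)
Your proposal is correct and follows essentially the same route as the paper's proof. You condition on $\widehat a=a$ to get $\mathbb E[W_1(\mu_X,\mu_Y)\mid \widehat a=a]\le B_a(X)$ from the fixed-schedule bound, average over the selection stage, and apply Lemma~\ref{lem:gibbs-selection} with $\alpha=\varepsilon_{\rm sel}/(2\Delta_{\mathcal A})$, treating $\Delta_{\mathcal A}=0$ separately; your term-by-term check that Corollary~\ref{cor:acc-exp} yields exactly $B_a(X)$ for an arbitrary public schedule is a detail the paper leaves implicit.
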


\begin{proof}
For accuracy, condition on \(\widehat a=a\). Since the selected schedule is fixed after selection
and synthesis uses fresh independent randomness,
the fixed-depth Pruned-PMM upper bound gives
\[
\mathbb E_{\mathrm{main}}
\!\left[
W_1(\mu_X,\mu_Y)\mid \widehat a=a
\right]
\le
B_a(X).
\]
Taking expectation over the selection stage gives
\[
\mathbb E W_1(\mu_X,\mu_Y)
\le
\mathbb E_{\mathrm{sel}}B_{\widehat a}(X).
\]
If \(\Delta_{\mathcal A}=0\), the score vector is unchanged under adjacent modifications, and the
deterministic selection rule has zero privacy cost and zero selection penalty. Otherwise,
\(\Delta_{\mathcal A}>0\), and applying Lemma~\ref{lem:gibbs-selection} with \(G_a=B_a(X)\) and
$
\alpha=\frac{\varepsilon_{\mathrm{sel}}}{2\Delta_{\mathcal A}}
$
gives the claim.
\end{proof}

\subsection{Explicit adaptive rates under packing-growth}

Assume the empirical support satisfies Assumption~\ref{ass:pack} with
packing-growth dimension \(k\in[d]\).
For each candidate \(s\in[d]\), define
$
q_s:=2^{(s-1)/(2d)}
$
and
\[
S_{s,r}:=\sum_{j=0}^{r-1}q_s^j
=
\begin{cases}
r, & s=1,\\[2mm]
\dfrac{q_s^r-1}{q_s-1}, & s>1.
\end{cases}
\]
This is the same \(S_r\) from the fixed-depth upper bound, written as \(S_{s,r}\) only because the
candidate growth exponent is \(s\).

\begin{corollary}[Explicit adaptive rate under packing-growth]
\label{cor:adaptive-explicit-rate}
Suppose the candidate set contains the candidate \(s=k\), where \(k\in[d]\), with either
the optimizer \(r_k^\star\) or the explicit optimized depth
$
r_1^\circ:=L
$
when \(k=1\), and
$
r_k^\circ
:=
\max\left\{
1,
\left\lfloor
\frac{d}{k}
\log_2\!\left(
1+\varepsilon_{\mathrm{main}}n
\left(\frac{k-1}{d}\right)^2
\right)
\right\rfloor
\right\}
$
when \(2\le k\le d\). Assume Assumption~\ref{ass:pack} holds with
packing-growth dimension \(k\). Then there exists a constant \(C>0\), depending only on
\(C_{\mathrm{abs}}\) and \(C_{\mathrm{pack}}\), such that
the following bounds hold.

If \(k=1\), then
\[
\mathbb E W_1(\mu_X,\mu_Y)
\le
C\frac{d^2\log_2^2(\varepsilon_{\mathrm{main}}n)}
{\varepsilon_{\mathrm{main}}n}
+
\frac{2\Delta_{\mathcal A}}{\varepsilon_{\mathrm{sel}}}
\log|\mathcal A|.
\]

If \(2\le k\le d\), then
\[
\mathbb E W_1(\mu_X,\mu_Y)
\le
C\left(\frac{d}{k-1}\right)^{2/k}
(\varepsilon_{\mathrm{main}}n)^{-1/k}
+
\frac{2\Delta_{\mathcal A}}{\varepsilon_{\mathrm{sel}}}
\log|\mathcal A|.
\]
\end{corollary}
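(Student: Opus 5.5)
The plan is to start from the best-candidate guarantee of Theorem~\ref{thm:adaptive-library}:
\[
\mathbb E W_1(\mu_X,\mu_Y)\le \min_{a\in\mathcal A}B_a(X)+\frac{2\Delta_{\mathcal A}}{\varepsilon_{\rm sel}}\log|\mathcal A|.
\]
Since the selection penalty already appears verbatim in both displayed bounds, it suffices to show that for the particular candidate $a_k=(k,r_k,\sigma^{(k,r_k)})\in\mathcal A$ we have $B_{a_k}(X)\le C(d^2\log^2 N_{\rm main})/N_{\rm main}$ when $k=1$ and $B_{a_k}(X)\le C(d/(k-1))^{2/k}N_{\rm main}^{-1/k}$ when $k\ge2$. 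Here $r_k$ denotes either $r_k^\circ$ or $r_k^\star$, and $N_{\rm main}=\varepsilon_{\rm main}n$. Then $\min_a B_a(X)\le B_{a_k}(X)$ finishes the argument.

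\textbf{Step 1: reduce the score to the public model bound.} Using Lemma~\ref{lem:pack2occ}, $O_j(X)\le C_{\rm pack}2^{jk/d}$. Trivially $O_j\le n$, so $O_j\le C_{\rm pack}P_j^{(k)}$, taking $C_{\rm pack}\ge1$. Substituting the schedule $\sigma^{(k,r)}_{j+1}=A_{k,r}/(\varepsilon_{\rm main}\sqrt{P_j^{(k)}}2^{-j/(2d)})$ into the score gives
\[
\frac{4C_{\rm abs}}{n}\sum_{j<r}\sigma_{j+1}O_j2^{-j/d}\le \frac{4C_{\rm abs}C_{\rm pack}}{\varepsilon_{\rm main}n}A_{k,r}\sum_{j<r}\sqrt{P_j^{(k)}}2^{-j/(2d)}=\frac{2C_{\rm abs}C_{\rm pack}}{\varepsilon_{\rm main}n}A_{k,r}^2 .
\]
Hence $B_{a_k}(X)\le C_{\rm pack}B_{k,r}$. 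In the optimizer case $r=r_k^\star$, we have $B_{k,r_k^\star}\le B_{k,r_k^\circ}$ because $r_k^\circ\in\{1,\dots,L\}$, so it is enough to bound $B_{k,r_k^\circ}$.

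\textbf{Step 2: bound $B_{k,r}$ at the explicit depth.} Using $P_j^{(k)}\le 2^{jk/d}$ gives $A_{k,r}\le 2S_{k,r}$, so
\[
B_{k,r}\le \frac{C_{\rm abs}}{N_{\rm main}}+\frac{8C_{\rm abs}}{N_{\rm main}}S_{k,r}^2+2\cdot2^{-r/d}.
\]
This is the same expression analysed in Corollary~\ref{cor:rates}, with $\varepsilon$ replaced by $\varepsilon_{\rm main}$.

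For $k=1$ we have $S=r=L\le d\log_2N_{\rm main}$. Also $2^{-L/d}\le 2^{1/d}N_{\rm main}^{-1}$, since $L\ge d\log_2 N_{\rm main}-1$. This gives the $d^2\log^2$ rate.

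For $k\ge2$ the only change from Corollary~\ref{cor:rates} is that $r_k^\circ$ uses $M'=N_{\rm main}((k-1)/d)^2$ in place of $M=N_{\rm main}\alpha^2$ with $\alpha=2^{(k-1)/(2d)}-1$. I will redo that computation with $M'$, treating $M'<1$ trivially because $W_1\le1$. Using $S\le q^r/\alpha$, one gets $S^2/N_{\rm main}\le (M'+1)^{(k-1)/k}/(\alpha^2N_{\rm main})$. The resolution term satisfies $2^{-r/d}\lesssim (M')^{-1/k}$.

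\textbf{Step 3: the main obstacle, comparing $M$ and $M'$.} I expect the only real work to be the mismatch between $\alpha^2$ and $((k-1)/d)^2$. This can be handled by the elementary bounds $x\ln2\le 2^x-1\le x$ on $[0,1/2]$, which give
\[
\alpha\ge \frac{(k-1)\ln2}{2d}, \qquad\text{so}\qquad \alpha^2\ge c\,((k-1)/d)^2 .
\]
Then $S^2/N_{\rm main}\lesssim (M')^{(k-1)/k}/M'=(M')^{-1/k}$, and the root term $1/N_{\rm main}\le (M')^{-1}\le (M')^{-1/k}$ when $M'\ge1$.

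Finally, $(M')^{-1/k}=(d/(k-1))^{2/k}N_{\rm main}^{-1/k}$, and all constants depend only on $C_{\rm abs}$ and $C_{\rm pack}$. Adding back the selection penalty yields both displayed bounds.
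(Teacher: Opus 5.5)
Your proposal is correct and follows essentially the same route as the paper. You reduce the score of the $s=k$ candidate to $C_{\mathrm{pack}}B_{k,r}$ using the occupancy bound, pass from $r_k^\star$ to $r_k^\circ$ by optimality, bound $B_{k,r_k^\circ}$ using $2^{(k-1)/(2d)}-1\asymp (k-1)/d$, and add the penalty from Theorem~\ref{thm:adaptive-library}. The only minor difference is that for $M'<1$ you invoke $W_1\le 1$ directly, whereas the paper also bounds $B_{k,r_k^\circ}\le C$ in that regime, a model-bound estimate it later reuses in Lemma~\ref{lem:selection-sensitivity-simple}.
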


\begin{proof}
By Assumption~\ref{ass:pack} and Lemma~\ref{lem:pack2occ}, and since
\(O_j(X)\le n\), the packing-growth condition implies
$
O_j(X)
\le
C_{\mathrm{pack}}\min\{n,2^{jk/d}\}
=
C_{\mathrm{pack}}P_j^{(k)}.
$
Fix a depth \(r\), and let \(a=(k,r,\sigma^{(k,r)})\). Using the definition of
\(\sigma^{(k,r)}\), we have
\[
\frac{4C_{\mathrm{abs}}}{n}
\sum_{j=0}^{r-1}
\sigma_{j+1}^{(k,r)}O_j(X)2^{-j/d}
\le
\frac{4C_{\mathrm{abs}}C_{\mathrm{pack}}}{n}
\sum_{j=0}^{r-1}
\sigma_{j+1}^{(k,r)}P_j^{(k)}2^{-j/d}
=
\frac{2C_{\mathrm{abs}}C_{\mathrm{pack}}}{\varepsilon_{\mathrm{main}}n}
A_{k,r}^2.
\]
Since \(C_{\mathrm{pack}}\ge1\), the root and resolution terms are also bounded by
\(C_{\mathrm{pack}}\) times the corresponding terms in \(B_{k,r}\). Hence
$
B_a(X)\le C_{\mathrm{pack}}B_{k,r}.
$
If the candidate set contains \(r_k^\circ\), evaluate this bound at \(r_k^\circ\). If the candidate set contains
\(r_k^\star\), then \(B_{k,r_k^\star}\le B_{k,r_k^\circ}\), so the same upper bound follows.

We now bound \(B_{k,r_k^\circ}\). If \(k=1\), then \(q_1=1\), so \(S_{1,r}=r\), and
$
A_{1,r_1^\circ}
\le
2S_{1,r_1^\circ}
=
2r_1^\circ
\le
2d\log_2(\varepsilon_{\mathrm{main}}n).
$
Also,
$
2^{-r_1^\circ/d}
\le
2^{1/d}(\varepsilon_{\mathrm{main}}n)^{-1}.
$
Since \(\varepsilon_{\mathrm{main}}n\ge2\), we may upper bound the root and resolution terms by the
same logarithmic scale. Hence
\[
B_{1,r_1^\circ}
=
O\!\left(
\frac{d^2\log_2^2(\varepsilon_{\mathrm{main}}n)}
{\varepsilon_{\mathrm{main}}n}
\right).
\]

Now suppose \(2\le k\le d\). Write \(N:=\varepsilon_{\mathrm{main}}n\),
\(a:=(k-1)/d\), \(M:=Na^2\), and \(q:=2^{(k-1)/(2d)}\).
Then \(q-1\asymp a\), with universal comparison constants, and
\[
\frac{d}{k}\log_2(1+M)-1
\le r_k^\circ
\le 1+\frac{d}{k}\log_2(1+M).
\]
If \(M\le1\), then \(q^{r_k^\circ}\le2\) and
\(r_k^\circ\le1+dM/(k\ln2)\). Since
\(A_{k,r}\le2\sum_{j<r}q^j\le2rq^r\),
\[
\frac{A_{k,r_k^\circ}^2}{N}
\le C\frac{(r_k^\circ)^2}{N}
\le C\left(\frac1N+\frac{d^2M^2}{k^2N}\right)
=C\left(\frac1N+\frac{(k-1)^2}{k^2}M\right)
\le C.
\]
The root and resolution terms are also bounded, so
\(B_{k,r_k^\circ}\le C\le CM^{-1/k}\).

If \(M>1\), then
\[
A_{k,r_k^\circ}
\le\frac{2q^{r_k^\circ}}{q-1}
\le Ca^{-1}M^{(k-1)/(2k)},
\qquad
2^{-r_k^\circ/d}
\le2^{1/d}(1+M)^{-1/k}\le CM^{-1/k}.
\]
Thus \(A_{k,r_k^\circ}^2/N\le CM^{-1/k}\); also
\(N^{-1}\le M^{-1/k}\), since \(a\le1\) and \(N\ge2\).
In both cases we get
\[
B_{k,r_k^\circ}
=
O\!\left(
\left(\frac{d}{k-1}\right)^{2/k}
(\varepsilon_{\mathrm{main}}n)^{-1/k}
\right).
\]
Combining these estimates with Theorem~\ref{thm:adaptive-library} proves the result.
\end{proof}

\subsection{Making the selection penalty explicit}

The finite-candidate guarantee depends on the score sensitivity
\(\Delta_{\mathcal A}\). We now give a uniform bound for the standard candidate
set \(\mathcal A^\circ\), and also for the enlarged candidate set obtained by
adding the optional public optimizing depths \(r_s^\star\). Assume \(d\ge2\) and
\(\varepsilon_{\mathrm{main}}n\ge2\). Constants in this subsection may depend on
\(C_{\mathrm{abs}}\), but not on \(n\), \(\varepsilon_{\mathrm{main}}\),
\(\varepsilon_{\mathrm{sel}}\), \(X\), or \(|\mathcal A|\).

For the explicit depths, we use
\[
r_1^\circ:=L,
\qquad
r_s^\circ
:=
\max\left\{
1,
\left\lfloor
\frac{d}{s}
\log_2\!\left(
1+\varepsilon_{\mathrm{main}}n
\left(\frac{s-1}{d}\right)^2
\right)
\right\rfloor
\right\},
\qquad s\ge2.
\]

\begin{lemma}[Uniform selection-sensitivity bound]
\label{lem:selection-sensitivity-simple}
For the standard candidate set \(\mathcal A^\circ\), and also for the enlarged
candidate set obtained by adding the optional public optimizing depths \(r_s^\star\), one has
\[
\Delta_{\mathcal A}
\le
C_{\mathrm{sel}}d^2
(\varepsilon_{\mathrm{main}}n)^{-(d+1)/(2d)},
\]
where \(C_{\mathrm{sel}}\) depends only on the noise convention.
\end{lemma}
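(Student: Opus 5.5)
The plan is to bound each $\Delta_a$ directly by substituting the explicit schedule. This factors $\Delta_a$ into the quantity $A_{s,r}$ times a convergent geometric sum. For $a=(s,r,\sigma^{(s,r)})$, inserting $\sigma^{(s,r)}_{j+1}=A_{s,r}/(\varepsilon_{\rm main}\sqrt{P^{(s)}_j}\,2^{-j/(2d)})$ gives
\[
\Delta_a=\frac{4C_{\rm abs}A_{s,r}}{\varepsilon_{\rm main}n}\sum_{j=1}^{r-1}\frac{2^{-j/(2d)}}{\sqrt{P^{(s)}_j}} .
\]
The proof then has two steps: bound the sum by $C d$ uniformly in $s$ and $r$, and bound $A_{s,r}$ by $C d\,N_{\rm main}^{(d-1)/(2d)}$ for every candidate depth. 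Multiplying gives $\Delta_a\le C d^2 N_{\rm main}^{(d-1)/(2d)-1}=Cd^2N_{\rm main}^{-(d+1)/(2d)}$, and taking the maximum over $a$ gives the lemma.

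\textbf{Bounding the sum.} Split the indices according to whether $P^{(s)}_j=2^{js/d}$ or $P^{(s)}_j=n$.
\begin{itemize}
\item On the first set, the terms are $2^{-j(s+1)/(2d)}$. Their sum over $j\ge1$ is at most $1/(2^{(s+1)/(2d)}-1)\le C d/(s+1)\le Cd$.
\item On the second set, the terms are $2^{-j/(2d)}/\sqrt n$. Their sum is at most $C d/\sqrt n\le Cd$.
\end{itemize}
This step is routine.

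\textbf{Bounding $A_{s,r}$ at the explicit depths $r_s^\circ$.} Use $P^{(s)}_j\le2^{js/d}$ to get $A_{s,r}\le2\sum_{j<r}q_s^j$ with $q_s=2^{(s-1)/(2d)}$.
\begin{itemize}
\item For $s=1$: $A_{1,L}\le2L\le2d\log_2N_{\rm main}$. Since $d\ge2$ means $(d-1)/(2d)\ge1/4$, we have $\log_2 N_{\rm main}\le C N_{\rm main}^{1/4}\le C N_{\rm main}^{(d-1)/(2d)}$.
\item For $s\ge2$: use $A_{s,r}\le 2q_s^r/(q_s-1)$, together with $q_s-1\asymp (s-1)/d$ and $r_s^\circ\le \frac ds\log_2(1+N_{\rm main})$. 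This gives $A_{s,r_s^\circ}\le C\frac{d}{s-1}(1+N_{\rm main})^{(s-1)/(2s)}$. The exponent $(s-1)/(2s)$ is increasing in $s$, so it is at most $(d-1)/(2d)$, and $d/(s-1)\le d$.
\end{itemize}

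\textbf{Main obstacle: the optional optimizing depths $r_s^\star$.} For these we only know $r_s^\star\le L=\lfloor d\log_2N_{\rm main}\rfloor$. The crude bound $q_s^{L}\approx N_{\rm main}^{(s-1)/2}$ is far too large, so I would not bound $A_{s,r_s^\star}$ through the depth at all. Instead I would exploit optimality. Since $B_{s,r}\ge\frac{2C_{\rm abs}}{N_{\rm main}}A_{s,r}^2$ and $r_s^\star$ minimizes $B_{s,\cdot}$ over $\{1,\dots,L\}\ni r_s^\circ$,
\[
A_{s,r_s^\star}^2\le\frac{N_{\rm main}}{2C_{\rm abs}}B_{s,r_s^\star}\le \frac{N_{\rm main}}{2C_{\rm abs}}B_{s,r_s^\circ}.
\]
The bound on $B_{s,r_s^\circ}$ already obtained in the proof of Corollary~\ref{cor:adaptive-explicit-rate} (with $C_{\rm pack}=1$, since $P^{(s)}$ is the model profile itself) gives
\begin{itemize}
\item $B_{s,r_s^\circ}\le C(d/(s-1))^{2/s}N_{\rm main}^{-1/s}$ for $s\ge2$, hence $A_{s,r_s^\star}\le C\,d\,N_{\rm main}^{(s-1)/(2s)}$;
\item $B_{1,L}\le C d^2\log^2N_{\rm main}/N_{\rm main}$ for $s=1$, hence $A_{1,r_1^\star}\le Cd\log N_{\rm main}$.
\end{itemize}
These are the same bounds as at the explicit depths, so the argument closes identically. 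The remaining care is checking that the constants depend only on $C_{\rm abs}$ (the noise convention), which holds because $C_{\rm pack}$ never enters the model bounds $B_{s,r}$.
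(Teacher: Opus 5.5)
Your proposal is correct and follows the paper's proof essentially step for step. You factor $\Delta_a\le C d\,A_{s,r}/N_{\rm main}$ via the explicit schedule, bound $A_{s,r}\le C d\,N_{\rm main}^{(d-1)/(2d)}$ at the explicit depths $r_s^\circ$, and handle the optional depths $r_s^\star$ through $B_{s,r_s^\star}\le B_{s,r_s^\circ}$ combined with $B_{s,r}\ge \frac{2C_{\rm abs}}{N_{\rm main}}A_{s,r}^2$. The only cosmetic difference is that you split the geometric sum according to which branch of $P_j^{(s)}=\min\{n,2^{js/d}\}$ is active, whereas the paper simply uses $P_j^{(s)}\ge 1$, which already yields the $Cd$ bound.
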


\begin{proof}
The reason this sensitivity is small is that the contribution of deeper levels is
discounted by \(2^{-j/d}\). We combine this with the candidate-specific depth choices to
bound the schedule normalizations uniformly.

For any candidate \(a=(s,r,\sigma^{(s,r)})\),
$
\Delta_a
=
\frac{4C_{\mathrm{abs}}}{n}
\sum_{j=1}^{r-1}
\sigma_{j+1}^{(s,r)}2^{-j/d}.
$
Using the definition of \(\sigma^{(s,r)}\) and \(P_j^{(s)}\ge1\),
\[
\sigma_{j+1}^{(s,r)}2^{-j/d}
=
\frac{A_{s,r}}{\varepsilon_{\mathrm{main}}}
\frac{2^{-j/(2d)}}{\sqrt{P_j^{(s)}}}
\le
\frac{A_{s,r}}{\varepsilon_{\mathrm{main}}}2^{-j/(2d)}.
\]
Since
$
\sum_{j=1}^{\infty}2^{-j/(2d)}
=
\frac{1}{2^{1/(2d)}-1}
\le
\frac{2d}{\ln2},
$
we get
$
\Delta_a
\le
C d A_{s,r}(\varepsilon_{\mathrm{main}}n)^{-1}.
$
It remains to bound \(A_{s,r}\) uniformly over the candidates.

First, consider the explicit depths. If \(s=1\), then
\(A_{1,r_1^\circ}\le2r_1^\circ\le C d\log_2(\varepsilon_{\mathrm{main}}n)\). Since \(d\ge2\),
$
\log_2 x\le Cx^{(d-1)/(2d)},\,\, x\ge1,
$
and hence
$
A_{1,r_1^\circ}
\le
C d (\varepsilon_{\mathrm{main}}n)^{(d-1)/(2d)}.
$

Now let \(s\ge2\), and set \(q_s:=2^{(s-1)/(2d)}\) and
\(a_s:=(s-1)/d\). Since \(q_s-1\ge c a_s\) and
$
r_s^\circ
\le
1+\frac{d}{s}\log_2(1+(\varepsilon_{\mathrm{main}}n)a_s^2),
$
we have
\[
A_{s,r_s^\circ}
\le
2\sum_{j=0}^{r_s^\circ-1}q_s^j
\le
\frac{2q_s^{r_s^\circ}}{q_s-1}
\le
C d(1+(\varepsilon_{\mathrm{main}}n)a_s^2)^{(s-1)/(2s)}.
\]
Because \(s\le d\), the exponent satisfies
$
\frac{s-1}{2s}\le\frac{d-1}{2d}.
$
Also \(1+(\varepsilon_{\mathrm{main}}n)a_s^2\le 1+(\varepsilon_{\mathrm{main}}n)\le 2(\varepsilon_{\mathrm{main}}n)\), since
\((\varepsilon_{\mathrm{main}}n)\ge2\). Therefore
\[
A_{s,r_s^\circ}
\le
C d(\varepsilon_{\mathrm{main}}n)^{(d-1)/(2d)}.
\]

For an optional optimizer \(r_s^\star\), use optimality:
\(B_{s,r_s^\star}\le B_{s,r_s^\circ}\).
The model-bound calculation in the proof of
Corollary~\ref{cor:adaptive-explicit-rate} gives, for \(s\ge2\),
\[
B_{s,r_s^\circ}
\le C\left(\frac{d}{s-1}\right)^{2/s}
(\varepsilon_{\mathrm{main}}n)^{-1/s}
\le C d^2(\varepsilon_{\mathrm{main}}n)^{-1/d}.
\]
For \(s=1\), the same conclusion follows from
\(B_{1,r_1^\circ}\le C d^2\log_2^2(N_{\mathrm{main}})/N_{\mathrm{main}}\)
and \(\log_2^2 x\le Cx^{1-1/d}\) for \(x\ge2\), \(d\ge2\).
By optimality,
\(B_{s,r_s^\star}\le B_{s,r_s^\circ}\), while
\(B_{s,r_s^\star}\ge (2C_{\mathrm{abs}}/(\varepsilon_{\mathrm{main}}n))A_{s,r_s^\star}^2\).
Therefore \(A_{s,r_s^\star}\le C d(\varepsilon_{\mathrm{main}}n)^{(d-1)/(2d)}\). Combining
this with the same bound for the explicit-depth candidates, every candidate in the standard or enlarged candidate set satisfies
\(A_{s,r}\le C d(\varepsilon_{\mathrm{main}}n)^{(d-1)/(2d)}\). Since
\(\Delta_a\le CdA_{s,r}(\varepsilon_{\mathrm{main}}n)^{-1}\), we get
\(\Delta_a\le C_{\mathrm{sel}}d^2(\varepsilon_{\mathrm{main}}n)^{-(d+1)/(2d)}\) for every
candidate \(a\), and hence
\(\Delta_{\mathcal A}\le C_{\mathrm{sel}}d^2(\varepsilon_{\mathrm{main}}n)^{-(d+1)/(2d)}\).
\end{proof}

\begin{corollary}[Adaptive Pruned-PMM rate for the standard candidate set]
\label{cor:adaptive-final-rate}
Assume \(d\ge2\), \(\varepsilon_{\mathrm{main}}n\ge2\), and run Adaptive
Pruned-PMM with the standard candidate set \(\mathcal A^\circ\) indexed by \(s\in[d]\).
Suppose the empirical support \(S_X\) satisfies Assumption~\ref{ass:pack} with
packing-growth dimension \(k\in[d]\). Then, Adaptive Pruned-PMM is
\((\varepsilon_{\mathrm{sel}}+\varepsilon_{\mathrm{main}})\)-differentially
private and satisfies the following bounds.

If \(k=1\), then
\[
\mathbb E W_1(\mu_X,\mu_Y)
\le
O\!\left(
\frac{d^2\log_2^2(\varepsilon_{\mathrm{main}}n)}
{\varepsilon_{\mathrm{main}}n}
\right)
+
C_{\mathrm{sel}}
\frac{d^2\log d}{\varepsilon_{\mathrm{sel}}}
(\varepsilon_{\mathrm{main}}n)^{-(d+1)/(2d)}.
\]

If \(2\le k\le d\), then
\[
\mathbb E W_1(\mu_X,\mu_Y)
\le
O\!\left(
\left(\frac{d}{k-1}\right)^{2/k}
(\varepsilon_{\mathrm{main}}n)^{-1/k}
\right)
+
C_{\mathrm{sel}}
\frac{d^2\log d}{\varepsilon_{\mathrm{sel}}}
(\varepsilon_{\mathrm{main}}n)^{-(d+1)/(2d)}.
\]
Here the constants in the \(O(\cdot)\) terms depend only on
\(C_{\mathrm{abs}}\) and \(C_{\mathrm{pack}}\), while \(C_{\mathrm{sel}}\)
depends only on the noise convention.
\end{corollary}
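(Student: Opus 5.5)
This follows by combining three earlier results: Theorem~\ref{thm:adaptive-privacy} for privacy, Corollary~\ref{cor:adaptive-explicit-rate} for the best-candidate term, and Lemma~\ref{lem:selection-sensitivity-simple} for the selection penalty.

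For privacy, I would observe that every candidate in \(\mathcal A^\circ\) is a public triple \((s,r_s^\circ,\sigma^{(s,r_s^\circ)})\). Each satisfies \(1\le r_s^\circ\le L\), as verified when \(\mathcal A^\circ\) was defined, and each satisfies \(2\sum_i 1/\sigma_i^{(s,r_s^\circ)}=\varepsilon_{\mathrm{main}}\). Hence Theorem~\ref{thm:adaptive-privacy} applies verbatim, giving \((\varepsilon_{\mathrm{sel}}+\varepsilon_{\mathrm{main}})\)-differential privacy.

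For accuracy, the packing-growth dimension satisfies \(k\in[d]\), so \(\mathcal A^\circ\) contains the candidate \(s=k\) with the explicit depth \(r_k^\circ\), where \(r_1^\circ=L\). This is exactly the hypothesis of Corollary~\ref{cor:adaptive-explicit-rate}. Applying it gives the leading term in each of the two cases, plus the penalty \(\frac{2\Delta_{\mathcal A^\circ}}{\varepsilon_{\mathrm{sel}}}\log|\mathcal A^\circ|\).

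It remains to make this penalty explicit. Since \(|\mathcal A^\circ|=d\), we have \(\log|\mathcal A^\circ|=\log d\). Lemma~\ref{lem:selection-sensitivity-simple}, which uses \(d\ge2\) and \(\varepsilon_{\mathrm{main}}n\ge2\), bounds
\[
\Delta_{\mathcal A^\circ}\le C_{\mathrm{sel}}d^2(\varepsilon_{\mathrm{main}}n)^{-(d+1)/(2d)}.
\]
Multiplying this by \(2\log d/\varepsilon_{\mathrm{sel}}\) and absorbing the factor \(2\) into \(C_{\mathrm{sel}}\) gives the stated penalty term. \(C_{\mathrm{sel}}\) still depends only on \(C_{\mathrm{abs}}\), i.e.\ on the noise convention.

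The degenerate case \(\Delta_{\mathcal A^\circ}=0\) is covered by Theorem~\ref{thm:adaptive-library}, where the penalty vanishes. Since \(C_{\mathrm{pack}}\ge1\), the constants in the \(O(\cdot)\) terms depend only on \(C_{\mathrm{abs}}\) and \(C_{\mathrm{pack}}\).

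The main point to check is bookkeeping rather than analysis. I need to confirm that \(\mathcal A^\circ\) meets the admissibility requirements of the general finite-candidate results (depth range and exact budget usage) and that the constants track as claimed. There is one further subtlety to verify: the bound \(B_a(X)\le C_{\mathrm{pack}}B_{k,r}\) in Corollary~\ref{cor:adaptive-explicit-rate} used \(O_j\le C_{\mathrm{pack}}\min\{n,2^{jk/d}\}\). This holds because \(O_j\le n\) and \(C_{\mathrm{pack}}\ge1\), so no extra condition is needed.
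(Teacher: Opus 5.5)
Your proposal is correct and follows the paper's own proof. The route is the same: privacy from Theorem~\ref{thm:adaptive-privacy}, the best-candidate term from Corollary~\ref{cor:adaptive-explicit-rate} via the candidate \(s=k\in\mathcal A^\circ\), and the penalty made explicit through \(|\mathcal A^\circ|=d\) and Lemma~\ref{lem:selection-sensitivity-simple}. The only cosmetic difference is that the paper invokes Theorem~\ref{thm:adaptive-library} directly and uses the corollary only to bound \(\min_a B_a(X)\), whereas you apply the corollary, which already includes that selection penalty.
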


\begin{proof}
Privacy follows from Theorem~\ref{thm:adaptive-privacy}. The finite-candidate set
guarantee, Theorem~\ref{thm:adaptive-library}, gives
\[
\mathbb E W_1(\mu_X,\mu_Y)
\le
\min_{a\in\mathcal A^\circ}B_a(X)
+
\frac{2\Delta_{\mathcal A^\circ}}{\varepsilon_{\mathrm{sel}}}
\log|\mathcal A^\circ|.
\]
Since \(|\mathcal A^\circ|=d\), Lemma~\ref{lem:selection-sensitivity-simple}
bounds the selection term by
\[
C_{\mathrm{sel}}
\frac{d^2\log d}{\varepsilon_{\mathrm{sel}}}
(\varepsilon_{\mathrm{main}}n)^{-(d+1)/(2d)}.
\]
The best-candidate term is bounded by Corollary~\ref{cor:adaptive-explicit-rate},
because the standard candidate set contains the candidate \(s=k\). Combining these
bounds proves the result.
\end{proof}

%% file: lower.tex
\section{Main lower bound}\label{subsec:lb-main}

We now prove a minimax lower bound for private synthetic data under an intrinsic packing-growth
assumption on the support set \(S\). For \(k>1\), the result establishes the optimal exponent
\(1/k\) under matching upper and lower growth assumptions; Proposition~\ref{prop:matching-coordinate-supports}
gives an explicit class on which the rates match.

\begin{theorem}[DP synthetic-data lower bound]\label{thm:lb}
Let \(T=[0,1]^d\) with \(\rho(x,y)=\|x-y\|_\infty\), and let \(S\subset T\) satisfy the packing-growth
assumption: there exist constants \(c_0>0\), \(t_0\in(0,1]\), and an integer \(k\in\{1,\dots,d\}\) such that
\begin{equation}\label{eq:pack-growth}
N_{\mathrm{pack}}(S,\|\cdot\|_\infty,t)\ \ge\ c_0\,t^{-k}
\qquad\text{for all }0<t\le t_0.
\end{equation}
Fix \(\varepsilon\in(0,1]\). Then there exist constants \(c>0\) and \(n_0\in\mathbb N\),
depending only on \(c_0,t_0,k\), such that for all \(n\ge n_0\) and $\varepsilon n \ge 2$, every
\(\varepsilon\)-differentially private mechanism
\[
\mathcal M:S^n\to \bigsqcup_{m\ge 1}T^m
\]
satisfies
\[
\sup_{X\in S^n}\ \mathbb E\,W_1\!\bigl(\mu_X,\mu_{\mathcal M(X)}\bigr)
\ \ge\
c\,\min\bigl\{t_0,(\varepsilon n)^{-1/k}\bigr\}.
\]
\end{theorem}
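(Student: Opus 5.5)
The plan is to follow the packing argument sketched after Theorem~\ref{thm:packing-lower-bound}: build a large finite family of datasets in \(S^n\) whose empirical measures are well separated in \(W_1\), and then show that no \(\varepsilon\)-DP mechanism can localize all of them.

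\textbf{Construction.} Set \(M:=\lceil \beta\varepsilon n\rceil\) for a small absolute constant \(\beta\), to be fixed later. Set \(t:=\min\{t_0,(c_0/(2(M+1)))^{1/k}\}\), so that \(N_{\mathrm{pack}}(S,t)\ge c_0t^{-k}\ge 2(M+1)\) whenever \(t<t_0\). In the case \(t=t_0\) I will only need \(M+1\) points, so I may shrink \(M\) to \(\min\{M,\lfloor c_0t_0^{-k}\rfloor-1\}\); the constant \(n_0\) absorbs the degenerate cases. Fix \(t\)-separated points \(p_0,p_1,\dots,p_{2M}\in S\) and use \(p_0\) as the filler.

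Next, by a Gilbert--Varshamov bound for constant-weight codes, choose subsets \(I\subset[2M]\) with \(|I|=M\) and pairwise symmetric differences at least \(M/4\). The number of such subsets is at least \(e^{c_1M}\) for an absolute \(c_1\).

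For each codeword \(I\), let \(q:=\lfloor n/(2M)\rfloor\ge1\), which holds once \(\beta\le1/4\) since \(\varepsilon\le1\). Define \(X_I\) to place \(q\) copies of each \(p_i\), \(i\in I\), and the remaining \(n-qM\ge n/2\) copies at \(p_0\).

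\textbf{Separation.} For \(I\neq J\), at least \(M/4\) indices lie in \(I\setminus J\). Each carries mass \(q/n\ge 1/(4M)\) in \(\mu_{X_I}\) and nothing in \(\mu_{X_J}\). Since all points are \(t\)-separated, any coupling must move this mass a distance at least \(t\). Hence \(W_1(\mu_{X_I},\mu_{X_J})\ge (M/4)\cdot (1/(4M))\,t=t/16\).

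Any two datasets \(X_I,X_J\) differ in at most \(2qM\le n\) coordinates. By group privacy, for every event \(E\),
\[
\Pr(\mathcal M(X_I)\in E)\le e^{2\varepsilon qM}\Pr(\mathcal M(X_J)\in E)\le e^{\varepsilon n}\Pr(\mathcal M(X_J)\in E).
\]
This factor is too large to beat \(e^{c_1M}\) codewords, which is why the weights above are wrong in detail. I should instead make codewords differ only on the \(\asymp\varepsilon n\) \emph{data points} that are actually moved.

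\textbf{Corrected scaling.} Let each selected point carry exactly one sample, so \(q=1\), with the filler taking the rest. Then the Hamming distance between \(X_I\) and \(X_J\) is at most \(2M\), and group privacy gives the factor \(e^{2\varepsilon M}\le e^{2\beta\varepsilon^2 n}\). However, the \(W_1\) separation becomes \((M/4)(1/n)t\asymp \varepsilon t\), which is too small.

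The resolution is the standard one used by Boedihardjo--Strohmer--Vershynin. Take \(M\asymp \varepsilon n\) selected points, each carrying \(q\asymp 1/\varepsilon\) copies, so the selected points carry a constant fraction of the mass and the separation is \(\Omega(t)\). The Hamming distance is then \(\asymp n\), giving the privacy factor \(e^{C\varepsilon n}=e^{C'M/\beta}\). This is what the sketch calls \((\varepsilon n)\)-metric privacy, and it is fine: the code size \(e^{c_1M}\) with \(M=\lceil \varepsilon n/\beta'\rceil\) exceeds \(2e^{\varepsilon n}\) once \(\beta'\) is small relative to \(c_1\). The requirement \(\varepsilon n\ge2\) makes \(M\ge1\), and \(n\ge n_0\) guarantees that the code exists, which is where the constants enter.

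\textbf{Conclusion.} Post-process the output to its empirical measure. Applying the packing lower bound of \citet{BoedihardjoStrohmerVershynin2024PrivateMeasures}, some \(I\) satisfies \(\mathbb E W_1(\mu_{X_I},\mu_{\mathcal M(X_I)})\ge t/64\). Concretely: if every \(X_I\) had \(\Pr(W_1<t/32)\ge 1/2\), then the disjoint balls of radius \(t/32\) would give
\[
1\ge \sum_J \Pr\bigl(\mathcal M(X_I)\in B_J\bigr)\ge N e^{-\varepsilon n}/2>1,
\]
a contradiction, where \(N\) is the code size. Since \(M^{-1/k}\asymp(\varepsilon n)^{-1/k}\), this yields \(c\min\{t_0,(\varepsilon n)^{-1/k}\}\).

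I expect the main obstacle to be the bookkeeping among three quantities: the multiplicity \(q\asymp1/\varepsilon\), the code rate \(c_1M\) against the group-privacy exponent \(\varepsilon n\), and the availability of \(2M+1\) separated points at scale \(t\) (including the \(t=t_0\) truncation). Together these must make the inequality \(N>2e^{\varepsilon n}\) hold with constants depending only on \(c_0,t_0,k\).
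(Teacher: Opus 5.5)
Your architecture is the paper's: a balanced constant-weight code, $t$-separated points with a common filler, group privacy, and a disjoint $W_1$-balls argument. The gap is in the step you defer as ``bookkeeping''. It is the crux of the proof, and with your parameters it fails for $\varepsilon$ near $1$.

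The constraint is as follows. Every selected point needs at least one sample, so at most $n$ points can be selected. Because the selected mass must be a constant fraction, the datasets differ in up to $\asymp n$ coordinates, so group privacy costs $e^{\varepsilon n}$. You therefore need a code of size $>2e^{\varepsilon n}$ whose rate per selected point exceeds $\varepsilon$, uniformly up to $\varepsilon=1$.

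Your prescription ``$M=\lceil\varepsilon n/\beta'\rceil$ with $\beta'$ small relative to $c_1$'' gives $M>n$, hence $q=0$, as soon as $\varepsilon>\beta'$. Concretely:
\begin{itemize}
\item With $q=\lfloor n/(2M)\rfloor$ you need $2M\le n$. Then the whole family has size at most $\binom{2M}{M}<4^M\le 2^n<2e^{\varepsilon n}$ once $\varepsilon\ge\ln 2$, whatever code you pick.
\item Even allowing $M\le n$, your requirement $|I\triangle J|\ge M/4$ gives, via Gilbert--Varshamov, only $c_1=2(\ln 2-H(1/8))\approx 0.63<1$.
\end{itemize}
As written, the argument proves the bound only for $\varepsilon$ below an absolute constant, not for all $\varepsilon\in(0,1]$.

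The missing idea is to give up separation to gain rate. The $W_1$ separation only needs to be a fixed fraction of $t$, so you can use a balanced code with small relative distance. The paper takes $\mathcal U\subset\binom{[M]}{M/2}$ with $|U\triangle V|\ge M/25$. Since $\ln 2-H(1/25)>1/2$, this gives $|\mathcal U|\ge e^{\beta M}$ with $\beta>1/2$, i.e.\ rate $2\beta>1$ per selected point. The paper then chooses $M$ even with $\beta M\ge\varepsilon n+\ln 2+1$. The condition $\beta>1/2$, together with $\varepsilon\le 1$, is exactly what yields $M\le 2n$ for $n\ge n_0$. Hence $q=\lfloor 2n/M\rfloor\ge 1$, the selected mass is $p\ge 1/2$, and the separation is $W_1\ge p\,t/25\ge t/50$. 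From there your disjoint-balls argument goes through with the factor $e^{\varepsilon n}$. Sparse codes, whose weight is a small fraction of the ground set, would also give rate above $1$ per selected point.

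Two minor corrections:
\begin{itemize}
\item $|I|=|J|$ and $|I\triangle J|\ge M/4$ give $|I\setminus J|\ge M/8$, not $M/4$.
\item The case $t=t_0$ needs no shrinking of $M$, which would in any case break the code-size inequality. The definition $t=\min\{t_0,(c_0/(2(M+1)))^{1/k}\}$ already gives $c_0t^{-k}\ge 2(M+1)$ in both branches.
\end{itemize}
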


The proof is based on a packing argument in Wasserstein space. Starting from the lower bound
\eqref{eq:pack-growth}, we first extract many well-separated points in \(S\) at an appropriate
scale \(t\). We then use a large constant-weight code to construct a family of datasets
\(\{X_U\}_{U\in\mathcal U}\subset S^n\) whose empirical measures \(\mu_{X_U}\) form a large
\(W_1\)-packing. Finally, after post-processing the output of the synthetic-data mechanism into an
empirical measure, group privacy turns ordinary \(\varepsilon\)-differential privacy on datasets into
metric privacy on this finite family, and Proposition~\ref{prop:master} then forces the expected
error to be at least a constant multiple of the packing scale.

For clarity, the proof is organized into four steps. In Section~\ref{subsec:lb-tools} we collect the
three ingredients needed later: the abstract master lower bound
(Proposition~\ref{prop:master}), a simple lemma relating \(W_1\) and total variation on a
separated support, and a constant-weight coding lemma. In
Section~\ref{subsec:lb-construction} we construct the hard family of datasets.
In Section~\ref{subsec:lb-separation} we prove that the associated empirical measures are
pairwise separated in \(W_1\). In Section~\ref{subsec:lb-proof} we combine this packing with the
metric-privacy reduction to conclude the theorem.

To streamline the presentation, the main proof below is written in the regime
\(\varepsilon n\ge 2\).

\subsection{Tools for the lower bound}\label{subsec:lb-tools}

We collect here the four ingredients used in the proof of the lower bound.
The first is an abstract packing lower bound for metrically private algorithms, which is the master lower-bound principle of
\citep{BoedihardjoStrohmerVershynin2024PrivateMeasures}.
The second converts separation of support into a Wasserstein lower bound through
total variation. The third is the standard group-privacy reduction, which turns
ordinary \(\varepsilon\)-differential privacy on datasets into metric privacy on a
finite family of hard instances. The fourth is a constant-weight coding lemma
that provides exponentially many well-separated subsets of \([M]\).

\paragraph{Convention on total variation.}
Throughout this section we use
\[
\mathrm{TV}(\mu,\nu):=\frac12\|\mu-\nu\|_1
\]
for probability measures on a finite set.

\begin{proposition}[Master lower bound: metric privacy{\citep{BoedihardjoStrohmerVershynin2024PrivateMeasures}}]\label{prop:master}
Let \((M_1,\rho_1)\) be a metric space and let \(M_0\subset M_1\).
Let \(\rho_0\) be a metric on \(M_0\) with \(\mathrm{diam}(M_0,\rho_0)\le 1\).
Assume that for some \(t>0\),
\[
N_{\mathrm{pack}}(M_0,\rho_1,t)\ >\ 2e^\alpha.
\]
Then for any randomized algorithm \(A:M_0\to M_1\) that is
\(\alpha\)-metrically private with respect to \(\rho_0\), there exists
\(x\in M_0\) such that
\[
\mathbb{E}\,\rho_1(A(x),x)\ >\ \frac{t}{4}.
\]
\end{proposition}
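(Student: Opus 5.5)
The plan is the standard packing (``many disjoint balls'') argument. I will argue by contradiction: assume $\mathbb E\,\rho_1(A(x),x)\le t/4$ for every $x\in M_0$, and show that the output distribution at a single fixed input would then have to put total mass larger than one on a family of disjoint sets. I take $\alpha$-metric privacy with respect to $\rho_0$ to mean
\[
\Pr(A(x)\in E)\le e^{\alpha\rho_0(x,y)}\Pr(A(y)\in E)
\]
for all $x,y\in M_0$ and every measurable $E\subseteq M_1$. Since $\operatorname{diam}(M_0,\rho_0)\le1$, this gives the uniform comparison $\Pr(A(x)\in E)\ge e^{-\alpha}\Pr(A(y)\in E)$ for all $x,y\in M_0$.

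First, using $N_{\mathrm{pack}}(M_0,\rho_1,t)>2e^\alpha$, I fix points $x_1,\dots,x_N\in M_0$ with $N>2e^\alpha$ and $\rho_1(x_i,x_j)\ge t$ for $i\neq j$. Let $B_i:=\{z\in M_1:\rho_1(z,x_i)<t/2\}$ be the open $\rho_1$-balls of radius $t/2$ around these points. By the triangle inequality the $B_i$ are pairwise disjoint, and each is open, hence measurable for the Borel structure.

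Second, under the contradiction hypothesis, Markov's inequality gives
\[
\Pr\bigl(\rho_1(A(x_i),x_i)\ge t/2\bigr)\le \frac{\mathbb E\,\rho_1(A(x_i),x_i)}{t/2}\le \frac12,
\]
so $\Pr(A(x_i)\in B_i)\ge 1/2$ for every $i$.

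Third, I fix the single input $x_1$ and transfer each of these bounds to it using the uniform metric-privacy comparison:
\[
\Pr(A(x_1)\in B_i)\ \ge\ e^{-\alpha}\Pr(A(x_i)\in B_i)\ \ge\ \tfrac12 e^{-\alpha}
\qquad\text{for each } i.
\]
Summing over the disjoint $B_i$ then yields
\[
1\ \ge\ \sum_{i=1}^N\Pr(A(x_1)\in B_i)\ \ge\ \frac{N}{2e^\alpha}\ >\ 1,
\]
which is a contradiction. Hence some $x\in M_0$ satisfies $\mathbb E\,\rho_1(A(x),x)>t/4$, as claimed.

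There is no real obstacle in this argument. The only points needing care are the strictness of the inequalities and the choice of balls. Taking \emph{open} balls of radius $t/2$ guarantees disjointness when the separation is only $\ge t$. Pairing the non-strict Markov bound with the strict packing hypothesis $N>2e^\alpha$ is exactly what produces the strict conclusion $>t/4$. I would also state explicitly the convention that metric privacy is imposed on Borel events of $M_1$, so that the events $\{A(x)\in B_i\}$ are measurable.
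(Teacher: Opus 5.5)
Your proof is correct and is the standard packing argument. Markov's inequality gives each output law mass at least $1/2$ on the open radius-$t/2$ ball around its own packing point. Metric privacy with $\operatorname{diam}(M_0,\rho_0)\le 1$ then gives the output law at a single fixed input mass at least $e^{-\alpha}/2$ on every such ball, and disjointness of more than $2e^\alpha$ balls forces total mass above one. The paper states this proposition without proof, citing Boedihardjo, Strohmer, and Vershynin, and this is the argument used there. Your choice of open balls fits the paper's non-strict ($\ge t$) packing convention, which is the form used when the proposition is applied to the hard family.
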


\begin{lemma}[\(t\)-separated support implies \(W_1\ge t\,\mathrm{TV}\)]
\label{lem:sep-TV}
Let \(\{z_1,\dots,z_M\}\subset T\) be \(t\)-separated in \(\|\cdot\|_\infty\), meaning
$
\|z_i-z_j\|_\infty\ge t
\,\text{for all }i\neq j.
$
If \(\mu,\nu\) are probability measures supported on \(\{z_1,\dots,z_M\}\), then
\[
W_1(\mu,\nu)\ \ge\ t\,\mathrm{TV}(\mu,\nu).
\]
\end{lemma}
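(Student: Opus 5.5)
The proof goes through the dual (Kantorovich--Rubinstein) formulation of \(W_1\) stated in the preliminaries, using a test function adapted to the separated support. Since \(\mu\) and \(\nu\) live on the finite set \(\{z_1,\dots,z_M\}\), write \(\mu=\sum_i p_i\delta_{z_i}\) and \(\nu=\sum_i q_i\delta_{z_i}\). Let \(I:=\{i:p_i>q_i\}\), so that \(\mathrm{TV}(\mu,\nu)=\sum_{i\in I}(p_i-q_i)\). It then suffices to exhibit a single \(1\)-Lipschitz function \(f\) with \(\int f\,d\mu-\int f\,d\nu\ge t\,\mathrm{TV}(\mu,\nu)\).

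Take \(f(x):=\min\bigl\{t,\ \operatorname{dist}_\infty(x,\{z_i:i\in I\})\bigr\}\), and set \(f\equiv t\) if \(I=\emptyset\). This \(f\) is \(1\)-Lipschitz for \(\|\cdot\|_\infty\), being the minimum of a constant and a distance function. For \(i\in I\) it gives \(f(z_i)=0\). For \(i\notin I\), \(t\)-separation gives \(\operatorname{dist}_\infty(z_i,\{z_l:l\in I\})\ge t\), hence \(f(z_i)=t\). Therefore
\[
\int f\,d\nu-\int f\,d\mu=t\sum_{i\notin I}(q_i-p_i)=t\,\mathrm{TV}(\mu,\nu),
\]
where the last equality uses \(\sum_i(p_i-q_i)=0\), so the positive and negative parts have equal mass. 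Because the dual form takes the absolute value of the difference, this yields \(W_1(\mu,\nu)\ge t\,\mathrm{TV}(\mu,\nu)\).

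There is no real obstacle here. The only care points are the truncation at \(t\), which makes the value on the non-\(I\) points exactly \(t\) rather than merely at least \(t\), and the balancing identity for \(\mathrm{TV}\). As an alternative, one could argue via couplings: in any coupling, all mass not kept at the same atom moves distance at least \(t\), and at least \(\mathrm{TV}(\mu,\nu)\) mass must move. That route is equally short, but the dual argument matches the paper's stated definition of \(W_1\).
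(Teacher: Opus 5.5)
Your proof is correct, but it works on the dual side, whereas the paper argues on the primal (coupling) side.

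The paper fixes an arbitrary coupling \(\pi\) of \((\mu,\nu)\). It notes that mass kept on the diagonal is free, while every off-diagonal unit travels distance at least \(t\). It then uses that every coupling has off-diagonal mass at least \(\mathrm{TV}(\mu,\nu)\), because \(\pi(\{x=y\})\le\sum_i\min\{p_i,q_i\}=1-\mathrm{TV}(\mu,\nu)\). Taking the infimum over couplings gives the claim.

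You instead exhibit one explicit \(1\)-Lipschitz witness, \(f=\min\{t,\operatorname{dist}_\infty(\cdot,\{z_i:i\in I\})\}\). It equals \(0\) where \(\mu\) exceeds \(\nu\) and \(t\) elsewhere on the support, and you read the bound off the supremum.

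Each route buys something different. The paper's preliminaries define \(W_1\) through the dual formula, so your argument uses the definition directly and needs no duality theorem. If \(W_1\) were instead defined via couplings, you would need only weak duality, \(\int f\,d\nu-\int f\,d\mu=\int(f(y)-f(x))\,d\pi(x,y)\le\int\|x-y\|_\infty\,d\pi\). The paper's coupling bound, read against its own dual definition, tacitly relies on the nontrivial direction of Kantorovich--Rubinstein duality (here just finite-dimensional LP duality). It also uses the coupling characterization of total variation, which it states without proof. In exchange, the paper's route gives the more transparent transport picture (how much mass must move, and how far) without constructing any test function.

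One minor point: the truncation at \(t\) is convenient but not necessary. Since \(q_i-p_i\ge0\) for \(i\notin I\), the bound \(f(z_i)\ge t\) there already yields \(\int f\,d\nu-\int f\,d\mu\ge t\,\mathrm{TV}(\mu,\nu)\).
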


\begin{proof}
Let \(\pi\) be any coupling of \((\mu,\nu)\). Any mass matched on the diagonal
\((z_i,z_i)\) has zero transport cost. Any mass transported from \(z_i\) to
\(z_j\) with \(i\neq j\) travels distance at least \(t\), by the
\(t\)-separation assumption. Therefore
\[
\int \|x-y\|_\infty\,d\pi(x,y)
\ \ge\
t\cdot \pi\bigl(\{(x,y):x\neq y\}\bigr).
\]
Now the smallest possible off-diagonal mass over all couplings equals
\(\mathrm{TV}(\mu,\nu)\). Hence
\[
\int \|x-y\|_\infty\,d\pi(x,y)\ \ge\ t\,\mathrm{TV}(\mu,\nu).
\]
Taking the infimum over all couplings \(\pi\) proves the claim.
\end{proof}

\paragraph{A constant-weight code.}
For \(u\in(0,1)\), let
$
H(u):=-u\ln u-(1-u)\ln(1-u)
$
denote the binary entropy function, with natural logarithms, extended continuously by \(H(0)=H(1)=0\).

\begin{lemma}[Large constant-weight family]\label{lem:cw}
Fix
$
\delta:=\frac1{25}.
$
Then there exist constants \(\beta>1/2\) and \(M_\star\in\mathbb N\) such that
for every even integer \(M\ge M_\star\), there exists a family
\[
\mathcal U\subset \binom{[M]}{M/2}
\]
satisfying
\[
|\mathcal U|\ \ge\ e^{\beta M},
\qquad
|U\triangle V|\ \ge\ \delta M
\quad\text{for all }U\neq V\text{ in }\mathcal U.
\]
\end{lemma}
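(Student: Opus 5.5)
I will prove this with the probabilistic method, using a random constant-weight family with expurgation (a Gilbert–Varshamov style argument). Fix \(\delta=1/25\) and an even \(M\). Let \(K:=\binom{M}{M/2}\); by Stirling, \(K\ge e^{M\ln 2}/(\sqrt{2M})\), that is, \(\ln K\ge M\ln2-\tfrac12\ln(2M)\).

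For a fixed \(U\in\binom{[M]}{M/2}\), I count the sets \(V\) of the same weight with \(|U\triangle V|<\delta M\). Such a \(V\) satisfies \(|U\triangle V|=2i\), where \(i\) elements of \(U\) are swapped for \(i\) elements outside \(U\) and \(i<\delta M/2\). The number of these sets is
\[
\sum_{i<\delta M/2}\binom{M/2}{i}^2\ \le\ \Bigl(\sum_{i\le \delta M/2}\binom{M/2}{i}\Bigr)^2\ \le\ e^{2\cdot (M/2) H(\delta)}=e^{M H(\delta)}.
\]
The last step is the standard entropy bound \(\sum_{i\le \gamma N}\binom{N}{i}\le e^{N H(\gamma)}\), valid for \(\gamma\le1/2\). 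Here it is applied with \(N=M/2\) and \(\gamma=\delta\).

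Next I build the family greedily. Pick any \(U\), delete every set within symmetric distance \(<\delta M\) of it, and repeat. Each step removes at most \(e^{MH(\delta)}\) sets, so the procedure produces a family \(\mathcal U\) that is pairwise \(\delta M\)-separated and has
\[
|\mathcal U|\ \ge\ \frac{K}{e^{MH(\delta)}}\ \ge\ \frac{e^{M(\ln2-H(\delta))}}{\sqrt{2M}} .
\]

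The main obstacle is verifying the numerics, namely that \(\ln 2-H(1/25)>1/2\). Compute
\[
H(0.04)=0.04\ln 25+0.96\ln(1/0.96)\approx 0.04(3.219)+0.96(0.0408)\approx 0.1288+0.0392=0.168 .
\]
Hence \(\ln2-H(\delta)\approx 0.693-0.168=0.525>1/2\).

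Now choose \(\beta:=0.51\). The requirement \(|\mathcal U|\ge e^{\beta M}\) holds as soon as \(e^{0.015M}\ge \sqrt{2M}\), and this is true for all \(M\) beyond some absolute threshold \(M_\star\). Both \(\beta\) and \(M_\star\) are absolute constants, which completes the lemma.
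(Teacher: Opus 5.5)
Your proof is correct and follows essentially the same route as the paper: a greedy (Gilbert--Varshamov) packing in $\binom{[M]}{M/2}$, with the ball size bounded by $\sum_{i<\delta M/2}\binom{M/2}{i}^2\le e^{MH(\delta)}$, and the numerical check $\ln 2-H(1/25)\approx 0.525>1/2$. The differences are cosmetic: the cumulative entropy bound drops the paper's $(\delta M/2+1)$ prefactor, you use an explicit Stirling estimate and $\beta=0.51$, and your opening sentence mislabels the argument, which is the greedy construction rather than a random family with expurgation.
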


\begin{proof}
Consider the graph whose vertex set is \(\binom{[M]}{M/2}\), and where two
vertices \(U,V\) are adjacent if
\[
|U\triangle V|<\delta M.
\]
A greedy maximal independent-set argument yields a family \(\mathcal U\) of size at least
\[
|\mathcal U|
\ \ge\
\frac{\binom{M}{M/2}}{B(M,\delta)},
\]
where
\[
B(M,\delta)
:=
\max_{U\in\binom{[M]}{M/2}}
\#\Bigl\{
V\in\binom{[M]}{M/2}: |U\triangle V|<\delta M
\Bigr\}.
\]

If \(|U|=|V|=M/2\), then \(|U\triangle V|=2s\), where
$
s:=|U\setminus V|=|V\setminus U|.
$
Hence the condition \(|U\triangle V|<\delta M\) is equivalent to \(s<\delta M/2\), and therefore
\[
B(M,\delta)\le \sum_{s<\delta M/2}\binom{M/2}{s}^2.
\]
Writing \(s=\alpha M/2\), the standard entropy bound
\[
\binom{m}{\alpha m}\le e^{mH(\alpha)}
\]
gives
\[
\binom{M/2}{s}^2 \le e^{M H(\alpha)}
\qquad\text{for }\alpha\in[0,\delta).
\]
Since \(H(\alpha)\) is increasing on \([0,1/2]\), it follows that
\[
B(M,\delta)
\le
\left(\frac{\delta M}{2}+1\right)e^{M H(\delta)}.
\]
On the other hand,
\[
\binom{M}{M/2}\ge e^{M\ln 2-o(M)}.
\]
Therefore
$
\ln |\mathcal U|
\ge
M\bigl(\ln 2 - H(\delta)\bigr)-o(M).
$
For \(\delta=1/25\), one checks that
$
\ln 2 - H(1/25) > \frac12.
$
Choose any
$
\beta\in\Bigl(\frac12,\ \ln 2 - H(1/25)\Bigr).
$
Then for all sufficiently large even \(M\), namely for all \(M\ge M_\star\),
the \(o(M)\)-term is dominated and we obtain
$
\ln |\mathcal U|\ge \beta M,
$
equivalently,
$
|\mathcal U|\ge e^{\beta M}.
$
By construction, distinct members of \(\mathcal U\) satisfy
$
|U\triangle V|\ge \delta M.
$
This completes the proof.
\end{proof}

\subsection{Construction of the hard family}\label{subsec:lb-construction}

We now construct a finite family of datasets in \(S^n\) whose empirical measures will later be shown
to form a large packing in the \(W_1\)-metric. The construction has three ingredients:

\begin{enumerate}
\item a combinatorial parameter \(M\), chosen so that the constant-weight family from
Lemma~\ref{lem:cw} has cardinality larger than \(2e^{\varepsilon n}\);
\item a set of \(M+1\) well-separated points in \(S\), extracted from the packing-growth assumption;
\item a dataset \(X_U\in S^n\) associated to each codeword \(U\in\mathcal U\).
\end{enumerate}

\paragraph{Choice of the combinatorial size \(M\).}
Fix the constants \(\beta>1/2\) and \(M_\star\in\mathbb N\) from Lemma~\ref{lem:cw}. Define
\begin{equation}\label{eq:M-def}
M
:=
2\left\lceil
\frac12
\max\left\{
M_\star,\,
\frac{\varepsilon n+\ln 2+1}{\beta}
\right\}
\right\rceil .
\end{equation}
Then \(M\) is even, and the following two properties hold:
\begin{equation}\label{eq:M-basic}
M\ge M_\star,
\qquad
\beta M\ge \varepsilon n+\ln 2+1.
\end{equation}
Moreover, by the elementary bound \(2\lceil a/2\rceil\le a+2\),
\begin{equation}\label{eq:M-upper}
M
\le
\max\left\{
M_\star,\,
\frac{\varepsilon n+\ln 2+1}{\beta}
\right\}+2.
\end{equation}

The next observation is the only place where we need \(n\) to be sufficiently large.

\begin{lemma}[A convenient upper bound on \(M\)]\label{lem:Mle2n}
There exists \(n_0\in\mathbb N\), depending only on \(\beta\) and \(M_\star\), such that for all
\(n\ge n_0\),
$\,
M\le 2n.
$
\end{lemma}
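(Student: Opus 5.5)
The plan is to bound the right-hand side of \eqref{eq:M-upper} by $2n$ for all large $n$, handling each branch of the maximum separately. The only inputs are $\varepsilon\le 1$ (so $\varepsilon n\le n$) and $\beta>1/2$ from Lemma~\ref{lem:cw}.

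First, since $\varepsilon\in(0,1]$, we have
\[
\frac{\varepsilon n+\ln2+1}{\beta}\le \frac{n+\ln 2+1}{\beta}.
\]
Because $\beta>1/2$, we have $1/\beta<2$. We then need
\[
\frac{n+\ln2+1}{\beta}+2\le 2n,
\]
which rearranges to
\[
n\Bigl(2-\frac1\beta\Bigr)\ge \frac{\ln 2+1}{\beta}+2.
\]
Since $2-1/\beta>0$, this holds for all
\[
n\ge n_1:=\left\lceil\frac{(\ln2+1)/\beta+2}{2-1/\beta}\right\rceil .
\]

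Second, the other branch $M_\star+2\le 2n$ holds for all $n\ge n_2:=\lceil (M_\star+2)/2\rceil$.

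Setting $n_0:=\max\{n_1,n_2\}$, which depends only on $\beta$ and $M_\star$, both branches of the maximum in \eqref{eq:M-upper} plus $2$ are at most $2n$ whenever $n\ge n_0$. Hence $M\le 2n$.

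There is no real obstacle. The one point to get right is to use the strict inequality $\beta>1/2$: it makes the coefficient $2-1/\beta$ positive, so the linear-in-$n$ term dominates the additive constants. Note also that the constraint $\varepsilon n\ge2$ is not needed for this step.
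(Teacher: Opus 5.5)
Your proof is correct and matches the paper's argument: it bounds each branch of the maximum in \eqref{eq:M-upper}, plus $2$, by $2n$, using only $\varepsilon\le 1$ and $\beta>1/2$. Your threshold $n_1$ is the paper's condition $(2\beta-1)n\ge \ln 2+1+2\beta$ after multiplying through by $\beta$, so you arrive at the same $n_0$.
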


\begin{proof}
By \eqref{eq:M-upper}, it is enough to control the two terms inside the maximum.

First, if
$
n\ge \frac{M_\star+2}{2},
$
then \(M_\star+2\le 2n\).

Second, since \(\varepsilon\le 1\), we have
$
\frac{\varepsilon n+\ln 2+1}{\beta}+2
\le
\frac{n+\ln 2+1}{\beta}+2.
$
Thus it is enough that
$
\frac{n+\ln 2+1}{\beta}+2\le 2n,
$
which is equivalent to
\[
(2\beta-1)n\ge \ln 2+1+2\beta.
\]
Because \(\beta>1/2\), this holds for all sufficiently large \(n\).

Therefore the conclusion follows for
\[
n_0:=
\left\lceil
\max\left\{
\frac{M_\star+2}{2},\,
\frac{\ln 2+1+2\beta}{2\beta-1}
\right\}
\right\rceil .
\]
\end{proof}

Henceforth we assume \(n\ge n_0\), so that \(M\le 2n\). In particular,
$
\left\lfloor \frac{2n}{M}\right\rfloor \ge 1.
$

\paragraph{The code family.}
Since \(M\) is even and \(M\ge M_\star\), Lemma~\ref{lem:cw} yields a family
\[
\mathcal U\subset \binom{[M]}{M/2}
\]
such that
\begin{equation}\label{eq:U-family}
|\mathcal U|\ge e^{\beta M},
\qquad
|U\triangle V|\ge \delta M
\quad\text{for all }U\neq V\text{ in }\mathcal U,
\end{equation}
where \(\delta=1/25\).
By \eqref{eq:M-basic},
$
|\mathcal U|
\ge
e^{\beta M}
\ge
e^{\varepsilon n+\ln 2+1}
>
2e^{\varepsilon n}.
$
Thus the code family is already exponentially larger than the privacy threshold appearing in
Proposition~\ref{prop:master}.

\paragraph{Separated points in \(S\).}
Next define the geometric scale
\begin{equation}\label{eq:t-def}
t
:=
\min\left\{
t_0,\,
\left(\frac{c_0}{2(M+1)}\right)^{1/k}
\right\}.
\end{equation}
Since \(t\le t_0\), the packing-growth assumption \eqref{eq:pack-growth} implies
\[
N_{\mathrm{pack}}(S,\|\cdot\|_\infty,t)
\ge
c_0 t^{-k}
\ge
2(M+1).
\]
Therefore there exist points
$
z_1,\dots,z_M,z_\star\in S
$
which are pairwise \(t\)-separated:
$
\|z_i-z_j\|_\infty\ge t
\,\,\text{for all }i\neq j.
$

\paragraph{Datasets indexed by codewords.}
For each \(U\in\mathcal U\), write
\[
U=\{i_1<\cdots<i_{M/2}\}.
\]
We define the ordered dataset \(X_U\in S^n\) by
\begin{equation}\label{eq:XU-def}
X_U
:=
\big(
\underbrace{z_{i_1},\dots,z_{i_1}}_{q},
\underbrace{z_{i_2},\dots,z_{i_2}}_{q},
\dots,
\underbrace{z_{i_{M/2}},\dots,z_{i_{M/2}}}_{q},
\underbrace{z_\star,\dots,z_\star}_{\,n-(M/2)q\,}
\big),
\end{equation}
where
$
q:=\left\lfloor \frac{2n}{M}\right\rfloor .
$
The ordering in \eqref{eq:XU-def} is fixed once and for all; this will be important later when we
define Hamming distance between two datasets \(X_U\) and \(X_V\).

Define also
\begin{equation}\label{eq:p-def}
p:=\frac{(M/2)q}{n}.
\end{equation}
Thus \(p\) is the fraction of the empirical mass of \(X_U\) that sits on the selected points
\(\{z_i:i\in U\}\), while the remaining mass \(1-p\) is concentrated at \(z_\star\).

Finally, define the probability measure
$
\mu_U
:=
\frac{2}{M}\sum_{i\in U}\delta_{z_i},
\,\, U\in\mathcal U.
$
This is simply the uniform probability measure on the support points indexed by \(U\).

The next lemma records the basic properties of this construction.

\begin{lemma}[Structure of the hard family]\label{lem:hard-family}
For every \(U\in\mathcal U\), the dataset \(X_U\) and the measure \(\mu_U\) satisfy:

\begin{enumerate}
\item \label{eq:p-half}
$
p\ge \frac12;
$

\item \label{eq:muXU-decomp} the empirical measure of \(X_U\) decomposes as
$
\mu_{X_U}
=
p\,\mu_U+(1-p)\delta_{z_\star};
$

\item the map \(U\mapsto \mu_{X_U}\) is injective.
\end{enumerate}
\end{lemma}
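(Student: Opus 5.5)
The three claims follow from the definitions in \eqref{eq:XU-def} and \eqref{eq:p-def}, using only the bound \(M\le 2n\) from Lemma~\ref{lem:Mle2n}. I would prove them in order.

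\emph{Item 1 (\(p\ge 1/2\)).} Set \(a:=2n/M\ge 1\), so that \(q=\lfloor a\rfloor\ge 1\) and \(p=q/a\). Since \(\lfloor a\rfloor\ge a/2\) for every \(a\ge1\), we get \(p\ge 1/2\). To check the floor inequality, split into two cases. If \(1\le a<2\), then \(\lfloor a\rfloor=1>a/2\). If \(a\ge 2\), then \(\lfloor a\rfloor> a-1\ge a/2\). The same computation gives \(q\le a\), hence \(p\le 1\). This confirms that the number of filler copies \(n-(M/2)q\) is a nonnegative integer, so \(X_U\) is a well-defined element of \(S^n\).

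\emph{Item 2 (decomposition).} Count multiplicities in \(X_U\). Each \(z_i\) with \(i\in U\) appears exactly \(q\) times, and \(z_\star\) appears \(n-(M/2)q\) times. The points \(z_1,\dots,z_M,z_\star\) are pairwise \(t\)-separated with \(t>0\), so they are distinct and no multiplicities merge. Therefore
\[
\mu_{X_U}=\frac{q}{n}\sum_{i\in U}\delta_{z_i}+\frac{n-(M/2)q}{n}\,\delta_{z_\star}.
\]
Now \(q/n=p\cdot\frac{2}{M}\) and \((n-(M/2)q)/n=1-p\), which gives \(\mu_{X_U}=p\,\mu_U+(1-p)\delta_{z_\star}\).

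\emph{Item 3 (injectivity).} Suppose \(\mu_{X_U}=\mu_{X_V}\). Evaluate both measures at each point \(z_i\), \(i\in[M]\). Because \(z_\star\ne z_i\), item 2 gives mass \(2p/M>0\) at \(z_i\) if \(i\in U\) and mass \(0\) otherwise. Note that \(p\) does not depend on \(U\). Hence \(U=\{i:\mu_{X_U}(\{z_i\})>0\}=V\).

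None of these steps is a real obstacle. The only point needing care is item 1, where the floor must be handled without assuming \(2n/M\) is large. This is exactly where Lemma~\ref{lem:Mle2n} (\(M\le 2n\), so \(q\ge1\)) is used, and it is also what makes the dataset well defined.
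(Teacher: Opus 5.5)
Your proposal is correct and follows the paper's proof essentially step for step. Your split $1\le a<2$ versus $a\ge 2$ is exactly the paper's split $M>n$ versus $M\le n$, and items 2 and 3 are the same multiplicity count and point-mass recovery, with $q/n=2p/M$. Your added remarks are correct details the paper leaves implicit: that $p\le 1$, so the filler count is nonnegative, and that the $t$-separated points are distinct.
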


\begin{proof}
We first prove \eqref{eq:p-half}.

If \(M>n\), then
$
1\le \frac{2n}{M}<2,
$
so \(q=1\), and therefore
$
p=\frac{M}{2n}\ge \frac12.
$

If \(M\le n\), then
$
q=\left\lfloor \frac{2n}{M}\right\rfloor
\ge \frac{2n}{M}-1
\ge \frac{n}{M},
$
and hence
$
p=\frac{(M/2)q}{n}
\ge
\frac{(M/2)(n/M)}{n}
=
\frac12.
$
Thus \eqref{eq:p-half} holds in all cases.

Next we prove \eqref{eq:muXU-decomp}. By construction, \(X_U\) contains exactly \(q\) copies
of each \(z_i\) with \(i\in U\), and \(n-(M/2)q\) copies of \(z_\star\). Therefore the empirical
measure of \(X_U\) is
\[
\mu_{X_U}
=
\frac{q}{n}\sum_{i\in U}\delta_{z_i}
+
\frac{n-(M/2)q}{n}\delta_{z_\star}.
\]
Using the definitions of \(p\) and \(\mu_U\), we rewrite the first term as
\[
\frac{q}{n}\sum_{i\in U}\delta_{z_i}
=
\frac{(M/2)q}{n}\cdot \frac{2}{M}\sum_{i\in U}\delta_{z_i}
=
p\,\mu_U,
\]
while the second term is \((1-p)\delta_{z_\star}\). This proves
\eqref{eq:muXU-decomp}.

Finally, to prove injectivity, note that \(q\ge 1\), so for every \(i\in[M]\),
\[
\mu_{X_U}(\{z_i\})
=
\begin{cases}
q/n, & i\in U,\\[1mm]
0,   & i\notin U.
\end{cases}
\]
Thus the set \(U\) can be recovered uniquely from \(\mu_{X_U}\), and the map
\(U\mapsto \mu_{X_U}\) is injective.
\end{proof}

The family \(\{X_U:U\in\mathcal U\}\) constructed above will serve as the hard family in the proof of
the lower bound. The next subsection shows that the empirical measures \(\mu_{X_U}\) are pairwise
well separated in the \(W_1\)-metric.

\subsection{Wasserstein separation of the hard family}\label{subsec:lb-separation}

We now show that the empirical measures constructed in
Section~\ref{subsec:lb-construction} form a large packing in the \(W_1\)-metric.
This is the geometric core of the lower bound: different codewords \(U\in\mathcal U\)
produce empirical measures that are uniformly separated by a distance of order \(t\).

Recall that for each \(U\in\mathcal U\),
$
\mu_U=\frac{2}{M}\sum_{i\in U}\delta_{z_i},
\,\,
\mu_{X_U}=p\,\mu_U+(1-p)\delta_{z_\star},
$
where \(p\ge \frac12\) by Lemma~\ref{lem:hard-family}, and where the points
$
z_1,\dots,z_M,z_\star
$
are pairwise \(t\)-separated in \(\|\cdot\|_\infty\).

\begin{proposition}[Pairwise \(W_1\)-separation]\label{prop:W1-sep}
For all distinct \(U,V\in\mathcal U\),
$
W_1(\mu_{X_U},\mu_{X_V})
\ \ge\
\frac{\delta}{2}\,t.
$
Equivalently, setting
\begin{equation}\label{eq:t-sep-def}
t_{\mathrm{sep}}:=\frac{\delta}{2}\,t,
\end{equation}
the family
$
M_0:=\{\mu_{X_U}:U\in\mathcal U\}\subset \mathcal P(T)
$
is \(t_{\mathrm{sep}}\)-separated in the \(W_1\)-metric.
\end{proposition}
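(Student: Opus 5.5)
The plan is to reduce the Wasserstein separation to a total-variation computation on the finite separated support, using Lemma~\ref{lem:sep-TV}. All the measures $\mu_{X_U}$ are supported on the set $\{z_1,\dots,z_M,z_\star\}$, which is pairwise $t$-separated in $\|\cdot\|_\infty$ by the construction following \eqref{eq:t-def}. Lemma~\ref{lem:sep-TV} therefore applies directly and gives
\[
W_1(\mu_{X_U},\mu_{X_V})\ \ge\ t\,\mathrm{TV}(\mu_{X_U},\mu_{X_V}).
\]
So it suffices to show that $\mathrm{TV}(\mu_{X_U},\mu_{X_V})\ge \delta/2$ for distinct $U,V\in\mathcal U$.

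Next I would use the decomposition from Lemma~\ref{lem:hard-family}, namely $\mu_{X_U}=p\,\mu_U+(1-p)\delta_{z_\star}$ with the same $p$ and the same filler mass for every $U$. The filler term then cancels in the difference:
\[
\mu_{X_U}-\mu_{X_V}=p(\mu_U-\mu_V).
\]
It follows that $\mathrm{TV}(\mu_{X_U},\mu_{X_V})=p\,\mathrm{TV}(\mu_U,\mu_V)$.

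Since $\mu_U$ puts mass $2/M$ on each $z_i$ with $i\in U$ and the points $z_i$ are distinct, we get
\[
\|\mu_U-\mu_V\|_1=\frac{2}{M}\,|U\triangle V|.
\]
Hence $\mathrm{TV}(\mu_U,\mu_V)=|U\triangle V|/M\ge\delta$ by the code property \eqref{eq:U-family}. Combining this with $p\ge 1/2$ from Lemma~\ref{lem:hard-family} gives $\mathrm{TV}(\mu_{X_U},\mu_{X_V})\ge \delta/2$, and therefore $W_1\ge \delta t/2=t_{\mathrm{sep}}$.

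There is no real obstacle here; the only point needing care is checking that distinctness of the support points makes the $\ell_1$ computation exact. This holds because $t>0$ and the points are $t$-separated, so they are pairwise distinct.
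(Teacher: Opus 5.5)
Your proof is correct and uses the same ingredients as the paper: the filler cancellation, Lemma~\ref{lem:sep-TV}, the exact identity $\mathrm{TV}(\mu_U,\mu_V)=|U\triangle V|/M$, and $p\ge 1/2$. The only difference is the order: the paper first cancels the filler at the $W_1$ level via Kantorovich--Rubinstein duality, getting $W_1(\mu_{X_U},\mu_{X_V})=p\,W_1(\mu_U,\mu_V)$ without using where $z_\star$ lies, and then applies the lemma on $\{z_1,\dots,z_M\}$; you apply the lemma on $\{z_1,\dots,z_M,z_\star\}$ and cancel at the TV level, which avoids duality but needs $z_\star$ to be $t$-separated from the $z_i$, as the construction guarantees.
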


\begin{proof}
Fix distinct \(U,V\in\mathcal U\). By Lemma~\ref{lem:hard-family},
\begin{equation}\label{eq:decomp-UV}
\mu_{X_U}=p\,\mu_U+(1-p)\delta_{z_\star},
\qquad
\mu_{X_V}=p\,\mu_V+(1-p)\delta_{z_\star}.
\end{equation}

\paragraph{Step 1: removing the common filler mass.}
We claim that
\begin{equation}\label{eq:common-mass-cancel}
W_1(\mu_{X_U},\mu_{X_V})=p\,W_1(\mu_U,\mu_V).
\end{equation}
Indeed, both measures in \eqref{eq:decomp-UV} are probability measures on the bounded metric space
\(T=[0,1]^d\), so Kantorovich--Rubinstein duality gives
\[
W_1(\mu,\nu)=\sup_{\operatorname{Lip}(f)\le 1}\left(\int f\,d\mu-\int f\,d\nu\right).
\]
Applying this to \(\mu_{X_U}\) and \(\mu_{X_V}\), and using \eqref{eq:decomp-UV}, we obtain
\[
\int f\,d\mu_{X_U}-\int f\,d\mu_{X_V}
=
p\left(\int f\,d\mu_U-\int f\,d\mu_V\right)
+
(1-p)\bigl(f(z_\star)-f(z_\star)\bigr),
\]
so the second term vanishes and
\[
\int f\,d\mu_{X_U}-\int f\,d\mu_{X_V}
=
p\left(\int f\,d\mu_U-\int f\,d\mu_V\right).
\]
Taking the supremum over all \(1\)-Lipschitz \(f\) gives \eqref{eq:common-mass-cancel}.

\paragraph{Step 2: lower-bounding \(W_1(\mu_U,\mu_V)\) by total variation.}
The measures \(\mu_U\) and \(\mu_V\) are supported on the set
$
\{z_1,\dots,z_M\},
$
which is \(t\)-separated in \(\|\cdot\|_\infty\). Therefore
Lemma~\ref{lem:sep-TV} implies
\begin{equation}\label{eq:sep-TV-UV}
W_1(\mu_U,\mu_V)\ge t\,\mathrm{TV}(\mu_U,\mu_V).
\end{equation}

\paragraph{Step 3: computing the total variation exactly.}
For each \(i\in[M]\),
$
\mu_U(\{z_i\})=\frac{2}{M}\mathbf 1_{\{i\in U\}},
\,\,
\mu_V(\{z_i\})=\frac{2}{M}\mathbf 1_{\{i\in V\}}.
$
Hence \(\mu_U\) and \(\mu_V\) differ only on the symmetric difference \(U\triangle V\), and at
every point \(z_i\) with \(i\in U\triangle V\) the absolute difference of the masses equals \(2/M\).
Therefore
$
\|\mu_U-\mu_V\|_1=\frac{2}{M}|U\triangle V|.
$
Using the convention
$
\mathrm{TV}(\mu,\nu)=\frac12\|\mu-\nu\|_1,
$
we conclude that
\begin{equation}\label{eq:TV-code}
\mathrm{TV}(\mu_U,\mu_V)=\frac{|U\triangle V|}{M}.
\end{equation}

Since \(U\neq V\) and \(U,V\in\mathcal U\), the code separation property
\eqref{eq:U-family} gives
$
|U\triangle V|\ge \delta M.
$
Combining this with \eqref{eq:TV-code}, we get
$
\mathrm{TV}(\mu_U,\mu_V)\ge \delta.
$
Substituting into \eqref{eq:sep-TV-UV} yields
\[
W_1(\mu_U,\mu_V)\ge \delta\,t.
\]

\paragraph{Step 4: concluding the lower bound for \(\mu_{X_U}\) and \(\mu_{X_V}\).}
Combining the previous estimate with \eqref{eq:common-mass-cancel}, we obtain
\[
W_1(\mu_{X_U},\mu_{X_V})
=
p\,W_1(\mu_U,\mu_V)
\ge
p\,\delta\,t.
\]
Finally, Lemma~\ref{lem:hard-family} gives \(p\ge \frac12\), so
\[
W_1(\mu_{X_U},\mu_{X_V})
\ge
\frac{\delta}{2}\,t.
\]
This proves the proposition.
\end{proof}

As an immediate consequence, the hard family has exponentially large packing number in the
\(W_1\)-metric.

\begin{corollary}[Packing number of the hard family]\label{cor:M0-pack}
Let \(t_{\mathrm{sep}}\) be defined by \eqref{eq:t-sep-def}. Then
\[
N_{\mathrm{pack}}(M_0,W_1,t_{\mathrm{sep}})
\ \ge\
|M_0|
=
|\mathcal U|
>
2e^{\varepsilon n}.
\]
\end{corollary}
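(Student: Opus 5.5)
The corollary has two inequalities. The first, $N_{\mathrm{pack}}(M_0,W_1,t_{\mathrm{sep}})\ge|M_0|$, comes from Proposition~\ref{prop:W1-sep}. The equality $|M_0|=|\mathcal U|$ comes from the injectivity in Lemma~\ref{lem:hard-family}. The final strict inequality comes from the cardinality bound on the code family.

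\textbf{Packing.} By definition, $N_{\mathrm{pack}}(M_0,W_1,t_{\mathrm{sep}})$ is the largest cardinality of a subset of $M_0$ whose distinct elements are at $W_1$-distance at least $t_{\mathrm{sep}}$. Proposition~\ref{prop:W1-sep} shows that for distinct $U,V\in\mathcal U$ we have $W_1(\mu_{X_U},\mu_{X_V})\ge \tfrac{\delta}{2}t=t_{\mathrm{sep}}$. So the whole set $M_0$ is an admissible packing, and the packing number is at least $|M_0|$.

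\textbf{Cardinality.} By the injectivity part of Lemma~\ref{lem:hard-family}, the map $U\mapsto\mu_{X_U}$ is a bijection from $\mathcal U$ onto $M_0$. Hence $|M_0|=|\mathcal U|$. This step is needed because $M_0$ is defined as a set of measures, so distinct codewords must give distinct elements of $M_0$.

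\textbf{Strict inequality.} The choice of $M$ in \eqref{eq:M-def} gives $\beta M\ge\varepsilon n+\ln2+1$ by \eqref{eq:M-basic}. Combined with $|\mathcal U|\ge e^{\beta M}$ from \eqref{eq:U-family}, this yields
\[
|\mathcal U|\ge e^{\varepsilon n+\ln 2+1}=2e\cdot e^{\varepsilon n}>2e^{\varepsilon n}.
\]

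No step is a real obstacle. The only points needing care are that the separation in Proposition~\ref{prop:W1-sep} is non-strict ($\ge t_{\mathrm{sep}}$), which matches the $\ge t$ convention in the packing-number definition, and that injectivity is genuinely required to identify $|M_0|$ with $|\mathcal U|$.
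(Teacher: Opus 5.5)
Your proposal is correct and follows the paper's proof step for step. The separation from Proposition~\ref{prop:W1-sep} makes $M_0$ itself an admissible $t_{\mathrm{sep}}$-packing, injectivity from Lemma~\ref{lem:hard-family} gives $|M_0|=|\mathcal U|$, and \eqref{eq:U-family} with \eqref{eq:M-basic} gives $|\mathcal U|\ge e^{\varepsilon n+\ln 2+1}>2e^{\varepsilon n}$ (as a minor aside, injectivity already follows from the separation, since $t_{\mathrm{sep}}>0$).
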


\begin{proof}
By Proposition~\ref{prop:W1-sep}, distinct points of \(M_0\) are separated by at least
\(t_{\mathrm{sep}}\), so
\[
N_{\mathrm{pack}}(M_0,W_1,t_{\mathrm{sep}})\ge |M_0|.
\]
By Lemma~\ref{lem:hard-family}, the map \(U\mapsto \mu_{X_U}\) is injective, hence
$
|M_0|=|\mathcal U|.
$
Finally, from \eqref{eq:U-family} and the choice of \(M\), we have
\[
|\mathcal U|
\ge
e^{\beta M}
\ge
e^{\varepsilon n+\ln 2+1}
>
2e^{\varepsilon n}.
\]
This completes the proof.
\end{proof}

The role of Proposition~\ref{prop:W1-sep} is simple but fundamental: it converts the combinatorial
separation of the code family \(\mathcal U\) into geometric separation in Wasserstein distance.
The next subsection uses group privacy to turn the original \(\varepsilon\)-differentially private
mechanism into a metrically private algorithm on the finite set \(M_0\), at which point
Proposition~\ref{prop:master} applies directly.

\subsection{From differential privacy to metric privacy on the hard family}
\label{subsec:lb-privacy-reduction}

We now restrict the original mechanism to the finite hard family constructed in
Section~\ref{subsec:lb-construction}. The point of this step is to convert ordinary
\(\varepsilon\)-differential privacy on datasets into metric privacy on the set
\[
M_0=\{\mu_{X_U}:U\in\mathcal U\},
\]
so that Proposition~\ref{prop:master} can be applied.

\paragraph{The ambient output space.}
Let
$
M_1:=\mathcal P(T),
\,\,
\rho_1:=W_1,
$
where \(\mathcal P(T)\) denotes the set of Borel probability measures on \(T\).
By Corollary~\ref{cor:M0-pack}, the subset
\[
M_0:=\{\mu_{X_U}:U\in\mathcal U\}\subset M_1
\]
has packing number larger than \(2e^{\varepsilon n}\) at scale
$
t_{\mathrm{sep}}=\frac{\delta}{2}\,t.
$

\paragraph{A metric on the hard family.}
Recall from Section~\ref{subsec:lb-construction} that each codeword \(U\in\mathcal U\)
produces an \emph{ordered} dataset \(X_U\in S^n\), and that the map
$
U\longmapsto \mu_{X_U}
$
is injective by Lemma~\ref{lem:hard-family}. Therefore every point of \(M_0\) has a unique
representative of the form \(\mu_{X_U}\), and it makes sense to define
\begin{equation}\label{eq:rho0-def}
\rho_0(\mu_{X_U},\mu_{X_V})
:=
\frac1n\,d_H(X_U,X_V),
\qquad U,V\in\mathcal U,
\end{equation}
where \(d_H\) denotes Hamming distance on ordered datasets.

\begin{lemma}[Basic properties of \(\rho_0\)]\label{lem:rho0-metric}
The function \(\rho_0\) defined in \eqref{eq:rho0-def} is a metric on \(M_0\), and
\[
\operatorname{diam}(M_0,\rho_0)\le 1.
\]
\end{lemma}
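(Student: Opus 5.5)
To prove Lemma~\ref{lem:rho0-metric}, I will transfer the metric axioms of the normalized Hamming distance on ordered datasets to \(M_0\) through the codeword indexing. Lemma~\ref{lem:hard-family} says the map \(U\mapsto\mu_{X_U}\) is injective, and the map \(U\mapsto X_U\) is injective as well. Hence every element of \(M_0\) determines a unique codeword \(U\), and therefore a unique ordered dataset \(X_U\in S^n\). This makes \(\rho_0\) in \eqref{eq:rho0-def} well defined: its value depends only on the pair of measures, not on any choice of representative.

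Next I will check the axioms by pulling back along this bijection. Nonnegativity and symmetry of \(\rho_0\) follow immediately from the corresponding properties of \(d_H\). For definiteness, suppose \(\rho_0(\mu_{X_U},\mu_{X_V})=0\). Then \(d_H(X_U,X_V)=0\), so the ordered tuples coincide, \(X_U=X_V\), and therefore \(\mu_{X_U}=\mu_{X_V}\). The converse direction is trivial. For the triangle inequality, take three elements \(\mu_{X_U},\mu_{X_V},\mu_{X_W}\). Their unique representatives satisfy \(d_H(X_U,X_W)\le d_H(X_U,X_V)+d_H(X_V,X_W)\), because Hamming distance counts disagreeing coordinates and is a metric on \(S^n\). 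Dividing this inequality by \(n\) gives the triangle inequality for \(\rho_0\).

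For the diameter bound, I will use that two ordered datasets of length \(n\) can disagree in at most \(n\) coordinates. So \(d_H(X_U,X_V)\le n\), which gives \(\rho_0\le 1\) on \(M_0\).

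The only step requiring care is well-definedness. The metric is written in terms of measures, but it is computed from ordered datasets, and the fixed ordering convention in \eqref{eq:XU-def} together with injectivity is exactly what guarantees that every element of \(M_0\) has a single representative. Once this is noted, the rest is routine.
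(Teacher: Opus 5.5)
Your proposal is correct and follows essentially the same route as the paper. You use the injectivity of \(U\mapsto\mu_{X_U}\) from Lemma~\ref{lem:hard-family} to make \(\rho_0\) well defined, pull back nonnegativity, symmetry, definiteness and the triangle inequality from \(d_H\), and bound the diameter by \(d_H(X_U,X_V)\le n\).
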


\begin{proof}
Since \(d_H\) is a metric on the finite set of ordered datasets \(\{X_U:U\in\mathcal U\}\),
and the map \(U\mapsto \mu_{X_U}\) is injective, the formula \eqref{eq:rho0-def}
defines a metric on \(M_0\). Explicitly:

\begin{itemize}
\item nonnegativity and symmetry are immediate from those of \(d_H\);
\item \(\rho_0(\mu_{X_U},\mu_{X_V})=0\) implies \(d_H(X_U,X_V)=0\), hence \(X_U=X_V\),
so \(\mu_{X_U}=\mu_{X_V}\);
\item the triangle inequality follows from
\[
d_H(X_U,X_W)\le d_H(X_U,X_V)+d_H(X_V,X_W).
\]
\end{itemize}

Finally, because \(X_U,X_V\in S^n\) are ordered datasets of length \(n\), one always has
$
d_H(X_U,X_V)\le n.
$
Dividing by \(n\) yields
$
\rho_0(\mu_{X_U},\mu_{X_V})\le 1
\,\,\text{for all }U,V\in\mathcal U,
$
which proves \(\operatorname{diam}(M_0,\rho_0)\le 1\).
\end{proof}

\paragraph{Restricting the mechanism.}
Let
\[
\mathcal M:S^n\to \bigsqcup_{m\ge 1} T^m
\]
be any \(\varepsilon\)-differentially private mechanism. We post-process its output by replacing
the synthetic dataset with its empirical measure:
$
\bar{\mathcal M}(X):=\mu_{\mathcal M(X)}\in \mathcal P(T).
$
Since post-processing preserves differential privacy, the mechanism \(\bar{\mathcal M}\) is still
\(\varepsilon\)-differentially private.

We now restrict \(\bar{\mathcal M}\) to the hard family \(M_0\) by defining
\begin{equation}\label{eq:A0-def}
A_0(\mu_{X_U}) := \bar{\mathcal M}(X_U)=\mu_{\mathcal M(X_U)},
\qquad U\in\mathcal U.
\end{equation}
Because the representation \(\mu_{X_U}\leftrightarrow U\) is unique, this defines a
well-posed randomized map
\[
A_0:M_0\to M_1.
\]

The next lemma is the key privacy reduction.

\begin{lemma}[Metric privacy on the hard family]\label{lem:A0-metric-private}
The restricted mechanism \(A_0:M_0\to M_1\) is \((\varepsilon n)\)-metrically private
with respect to \(\rho_0\). Equivalently, for every \(U,V\in\mathcal U\) and every measurable
set \(B\subset M_1\),
\[
\Pr\bigl(A_0(\mu_{X_U})\in B\bigr)
\le
\exp\!\bigl((\varepsilon n)\rho_0(\mu_{X_U},\mu_{X_V})\bigr)\,
\Pr\bigl(A_0(\mu_{X_V})\in B\bigr).
\]
\end{lemma}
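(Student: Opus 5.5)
The plan is the standard group-privacy argument, applied along a chain of single-coordinate replacements between the two ordered datasets. Fix $U,V\in\mathcal U$ and put $h:=d_H(X_U,X_V)$, so that $\rho_0(\mu_{X_U},\mu_{X_V})=h/n$. If $h=0$ then $X_U=X_V$, hence $U=V$ by injectivity, and the inequality is trivial. Otherwise, list the coordinates where $X_U$ and $X_V$ differ as $i_1<\dots<i_h$. Define intermediate ordered datasets $Z_0:=X_U$, and let $Z_\ell$ be obtained from $Z_{\ell-1}$ by replacing coordinate $i_\ell$ with the corresponding entry of $X_V$. Then $Z_h=X_V$.

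Three facts about this chain are needed.
\begin{itemize}
\item Every $Z_\ell$ lies in $S^n$, because each entry comes from $X_U$ or from $X_V$, and both are in $S^n$.
\item Consecutive datasets $Z_{\ell-1}$ and $Z_\ell$ differ in exactly one coordinate, so $Z_{\ell-1}\sim Z_\ell$ under replacement adjacency.
\item The $Z_\ell$ need not belong to the hard family. This causes no difficulty, since $\bar{\mathcal M}$ is defined and $\varepsilon$-DP on all of $S^n$.
\end{itemize}

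Since $\bar{\mathcal M}$ is $\varepsilon$-differentially private as a post-processing of $\mathcal M$, each measurable $B\subset M_1$ satisfies $\Pr(\bar{\mathcal M}(Z_{\ell-1})\in B)\le e^{\varepsilon}\Pr(\bar{\mathcal M}(Z_\ell)\in B)$ for every $\ell$. Chaining these $h$ inequalities gives $\Pr(\bar{\mathcal M}(X_U)\in B)\le e^{\varepsilon h}\Pr(\bar{\mathcal M}(X_V)\in B)$. Since $\varepsilon h=(\varepsilon n)\rho_0(\mu_{X_U},\mu_{X_V})$ and $A_0(\mu_{X_U})=\bar{\mathcal M}(X_U)$ by \eqref{eq:A0-def}, this is exactly the claimed inequality.

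The only step that needs care is measurability. The event $B$ is a measurable subset of $\mathcal P(T)$, so $\{\bar{\mathcal M}\in B\}$ must pull back to a measurable event for $\mathcal M$. I would check this by noting that the map $\bigsqcup_m T^m\to(\mathcal P(T),W_1)$, $Y\mapsto\mu_Y$, is continuous on each component $T^m$, and hence Borel. It then follows that post-processing preserves DP, and the remainder of the argument is routine.
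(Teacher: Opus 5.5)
Your proposal is correct and follows essentially the same route as the paper: group privacy along a chain of $d_H(X_U,X_V)$ single-coordinate replacements, followed by the substitution $\varepsilon\, d_H(X_U,X_V)=(\varepsilon n)\rho_0(\mu_{X_U},\mu_{X_V})$. Your extra remarks, that the intermediate datasets stay in $S^n$ and that $Y\mapsto\mu_Y$ is Borel so post-processing is legitimate, make explicit details the paper leaves implicit.
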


\begin{proof}
We begin with the standard group-privacy consequence of pure differential privacy:
if \(X,X'\in S^n\) differ in exactly \(h\) coordinates, then for every measurable event \(E\),
\begin{equation}\label{eq:group-privacy}
\Pr\bigl(\bar{\mathcal M}(X)\in E\bigr)
\le
e^{\varepsilon h}\,
\Pr\bigl(\bar{\mathcal M}(X')\in E\bigr).
\end{equation}
Indeed, one can connect \(X\) to \(X'\) by a chain of \(h=d_H(X,X')\) ordered datasets
\[
X=X^{(0)},X^{(1)},\dots,X^{(h)}=X'
\]
such that each consecutive pair differs in exactly one coordinate. Applying \(\varepsilon\)-DP
successively along this chain gives
\[
\Pr\bigl(\bar{\mathcal M}(X)\in E\bigr)
\le
e^{\varepsilon h}\Pr\bigl(\bar{\mathcal M}(X')\in E\bigr),
\]
which is \eqref{eq:group-privacy}.

Now fix \(U,V\in\mathcal U\) and a measurable set \(B\subset M_1\). Using the definition
\eqref{eq:A0-def} and then \eqref{eq:group-privacy} with \(X=X_U\) and \(X'=X_V\), we get
\[
\Pr\bigl(A_0(\mu_{X_U})\in B\bigr)
=
\Pr\bigl(\bar{\mathcal M}(X_U)\in B\bigr)
\le
e^{\varepsilon d_H(X_U,X_V)}
\Pr\bigl(\bar{\mathcal M}(X_V)\in B\bigr).
\]
Applying \eqref{eq:A0-def} again,
\[
\Pr\bigl(A_0(\mu_{X_U})\in B\bigr)
\le
e^{\varepsilon d_H(X_U,X_V)}
\Pr\bigl(A_0(\mu_{X_V})\in B\bigr).
\]
Finally, by the definition of \(\rho_0\),
$
d_H(X_U,X_V)=n\,\rho_0(\mu_{X_U},\mu_{X_V}),
$
so the preceding inequality becomes
\[
\Pr\bigl(A_0(\mu_{X_U})\in B\bigr)
\le
\exp\!\bigl((\varepsilon n)\rho_0(\mu_{X_U},\mu_{X_V})\bigr)\,
\Pr\bigl(A_0(\mu_{X_V})\in B\bigr),
\]
which is exactly \((\varepsilon n)\)-metric privacy.
\end{proof}

Combining Lemma~\ref{lem:rho0-metric}, Lemma~\ref{lem:A0-metric-private}, and
Corollary~\ref{cor:M0-pack}, we have now verified all assumptions of the master lower bound
(Proposition~\ref{prop:master}) with
\[
\alpha=\varepsilon n,
\qquad
(M_1,\rho_1)=\bigl(\mathcal P(T),W_1\bigr),
\qquad
M_0=\{\mu_{X_U}:U\in\mathcal U\},
\qquad
t=t_{\mathrm{sep}}.
\]
The proof of Theorem~\ref{thm:lb} is therefore reduced to a single application of
Proposition~\ref{prop:master} followed by the simplification of the scale \(t\), which we carry
out in the next subsection.

\subsection{Proof of the lower bound}\label{subsec:lb-proof}

We now combine the hard-family construction, the Wasserstein separation estimate, and the
metric-privacy reduction to prove Theorem~\ref{thm:lb}.

\begin{proof}[Proof of Theorem~\ref{thm:lb}]
Fix \(n\ge n_0\), where \(n_0\) is as in Lemma~\ref{lem:Mle2n}, and let
\[
\mathcal M:S^n\to \bigsqcup_{m\ge 1}T^m
\]
be any \(\varepsilon\)-differentially private mechanism.

By Corollary~\ref{cor:M0-pack}, the hard family
$
M_0=\{\mu_{X_U}:U\in\mathcal U\}\subset \mathcal P(T)
$
satisfies
\[
N_{\mathrm{pack}}(M_0,W_1,t_{\mathrm{sep}})\ >\ 2e^{\varepsilon n},
\qquad
t_{\mathrm{sep}}=\frac{\delta}{2}\,t.
\]
By Lemma~\ref{lem:rho0-metric}, the metric \(\rho_0\) on \(M_0\) has diameter at most \(1\), and by
Lemma~\ref{lem:A0-metric-private}, the restricted post-processed mechanism
\[
A_0:M_0\to \mathcal P(T),
\qquad
A_0(\mu_{X_U})=\mu_{\mathcal M(X_U)},
\]
is \((\varepsilon n)\)-metrically private with respect to \(\rho_0\).

Therefore Proposition~\ref{prop:master} applies with
$
(M_1,\rho_1)=\bigl(\mathcal P(T),W_1\bigr),
\,
\alpha=\varepsilon n,
\,
t=t_{\mathrm{sep}}.
$
We conclude that there exists some \(U\in\mathcal U\) such that
\[
\mathbb E\,W_1\!\bigl(A_0(\mu_{X_U}),\mu_{X_U}\bigr)
\ >\
\frac{t_{\mathrm{sep}}}{4}.
\]
Recalling the definition of \(A_0\), this becomes
\[
\mathbb E\,W_1\!\bigl(\mu_{\mathcal M(X_U)},\mu_{X_U}\bigr)
\ >\
\frac{t_{\mathrm{sep}}}{4}
=
\frac{\delta}{8}\,t.
\]
Since \(X_U\in S^n\), we obtain the lower bound
\begin{equation}\label{eq:pre-final-lb}
\sup_{X\in S^n}\mathbb E\,W_1\!\bigl(\mu_X,\mu_{\mathcal M(X)}\bigr)
\ \ge\
\frac{\delta}{8}\,t.
\end{equation}

It remains to simplify the scale \(t\). By definition,
\[
t=\min\left\{
t_0,\,
\left(\frac{c_0}{2(M+1)}\right)^{1/k}
\right\}.
\]
From \eqref{eq:M-upper},
$
M+1
\le
\max\left\{
M_\star,\,
\frac{\varepsilon n+\ln 2+1}{\beta}
\right\}+3.
$
Since \(\varepsilon n\ge 2\), both terms inside the maximum are bounded by a constant multiple of
\(\varepsilon n\). More precisely,
$
M_\star+3 \le \frac{M_\star+3}{2}\,\varepsilon n,
$
and
$
\frac{\varepsilon n+\ln 2+1}{\beta}+3
\le
\left(
\frac{1}{\beta}
+\frac{\ln 2+1}{2\beta}
+\frac{3}{2}
\right)\varepsilon n.
$
Hence, there exists a constant
\[
C_0:=
\max\left\{
\frac{M_\star+3}{2},\,
\frac{1}{\beta}+\frac{\ln 2+1}{2\beta}+\frac{3}{2}
\right\}
\]
such that
\begin{equation}\label{eq:Mplus1-bound}
M+1\le C_0\,\varepsilon n.
\end{equation}
Here \(C_0\) is universal, since \(\beta\) and \(M_\star\) come from the constant-weight
Lemma~\ref{lem:cw} with the fixed choice \(\delta=1/25\).

Substituting \eqref{eq:Mplus1-bound} into the definition of \(t\) yields
\[
t
\ge
\min\left\{
t_0,\,
\left(\frac{c_0}{2C_0\,\varepsilon n}\right)^{1/k}
\right\}.
\]
Set
$
a:=\left(\frac{c_0}{2C_0}\right)^{1/k},
\,\,
c_1:=\min\{1,a\}.
$
Then for every \(x\ge 0\),
$
\min\{t_0,ax\}\ge c_1\,\min\{t_0,x\},
$
and therefore
$
t
\ge
c_1\,\min\{t_0,(\varepsilon n)^{-1/k}\}.
$
Combining this with \eqref{eq:pre-final-lb}, we obtain
\[
\sup_{X\in S^n}\mathbb E\,W_1\!\bigl(\mu_X,\mu_{\mathcal M(X)}\bigr)
\ \ge\
\frac{\delta c_1}{8}\,
\min\{t_0,(\varepsilon n)^{-1/k}\}.
\]
Thus the theorem holds with
$
c:=\frac{\delta c_1}{8}>0.
$

Finally, the constants \(c\) and \(n_0\) depend only on \(c_0,t_0,k\): the dependence on
\(\beta\), \(M_\star\), and \(\delta\) is universal, while the only non-universal parameter entering the
last estimate is \(c_0\), together with the explicit cutoff \(t_0\) and exponent \(k\).
This completes the proof.
\end{proof}

\subsection{Matching growth on coordinate-cube unions}
\label{subsec:matching-growth}
We prove Proposition~\ref{prop:matching-coordinate-supports}
by bounding occupied-cell counts for the upper bound and
applying Theorem~\ref{thm:packing-lower-bound} for the lower bound.

\begin{proof}[Proof of Proposition~\ref{prop:matching-coordinate-supports}]
\textbf{Occupied cells.}
Use a fixed boundary convention for the binary partition,
assigning each splitting hyperplane to the child with larger
values in the split coordinate.
Write $j=qd+h$ with $0\le h<d$, and set
$a_{\ell,h}=|A_\ell\cap[h]|$, with $a_{\ell,0}=0$.
Among the first $j$ cyclic splits, exactly $qk+a_{\ell,h}$
occur in coordinates belonging to $A_\ell$.
The component $S_\ell$ therefore intersects at most
$2^{qk+a_{\ell,h}}$ depth-$j$ cells.

Since
\[
a_{\ell,h}-\frac{hk}{d}
\le
\min\{h,k\}-\frac{hk}{d}
\le
\frac{k(d-k)}{d},
\]
we obtain, for every $X\in S^n$,
\[
O_j(X)
\le
\min\left\{n,\sum_{\ell=1}^{J}2^{qk+a_{\ell,h}}\right\}
\le
\min\{n,C_{\rm occ}2^{jk/d}\},
\qquad
C_{\rm occ}:=J2^{k(d-k)/d}.
\]
This estimate holds regardless of intersections between
components or the allocation of samples among them.

\paragraph{External packing of the empirical support.}
For every $X\in S^n$, the $t$-neighborhood of $S_X$ is
contained in the union of the $t$-neighborhoods of the
components.
Within $[0,1]^d$, the neighborhood of $S_\ell$ lies in a
rectangle of side length $1$ in its $k$ active coordinates
and at most $2t$ in the remaining coordinates.
A coordinate-bin count with width $t/2$ gives
\[
N_{\rm pack}^{\rm ext}(S_X;t,t/2)
\le
J(1+2/t)^k5^{d-k}
\le
J3^k5^{d-k}t^{-k},
\qquad 0<t\le1.
\]
One may use half-open bins, with singleton terminal bins
when necessary, so that two points in the same bin cannot
be separated by $t/2$.
Thus the external packing-growth condition holds uniformly
over all samples supported on $S$.
We use the sharper occupied-cell estimate above to obtain
the stated dependence on $J$ and $d$.

\paragraph{Public fixed-depth upper bound.}
Set
\[
N=\varepsilon n,\qquad
b=\frac{d}{k-1},\qquad
x=\frac{N}{C_{\rm occ}b^2}.
\]
Choose the public depth
\[
r=
\max\left\{
1,\left\lfloor\frac{d}{k}\log_2(1+x)\right\rfloor
\right\},
\qquad
Q_r=\sum_{j=0}^{r-1}2^{j(k-1)/(2d)},
\]
and noise scales
\[
\sigma_0=\varepsilon^{-1},
\qquad
\sigma_{j+1}
=
\frac{2Q_r}{\varepsilon\,2^{j(k-1)/(2d)}},
\qquad 0\le j<r.
\]
These parameters depend only on $d,k,J,n,\varepsilon$, and
$
2\sum_{i=1}^{r}\frac1{\sigma_i}=\varepsilon.
$
Hence Theorem~\ref{thm:privacy} gives an $\varepsilon$-DP
mechanism on the entire ambient domain.

With the nonempty root convention
$m_\varnothing=\max\{1,n+\lambda_\varnothing\}$, we have
$
|m_\varnothing-n|\le|\lambda_\varnothing|.
$
Corollary~\ref{cor:acc-exp} and the occupied-cell estimate yield
\[
\mathbb E W_1(\mu_X,\mu_Y)
\le
\frac{C_{\rm abs}}{N}
+
\frac{8C_{\rm abs}C_{\rm occ}}{N}Q_r^2
+
2\cdot2^{-r/d}.
\]
For $x\ge1$, the unclamped depth is at least one, and
\[
Q_r
\le
\frac{2^{r(k-1)/(2d)}}{2^{(k-1)/(2d)}-1}
\le
\frac{2b}{\ln2}\,2^{r(k-1)/(2d)}.
\]
Using
\[
\frac{d}{k}\log_2(1+x)-1
\le r\le
\frac{d}{k}\log_2(1+x)
\]
and $1+x\le2x$, we obtain
\[
\frac{C_{\rm occ}}{N}Q_r^2
\le
\frac{4}{(\ln2)^2x}(1+x)^{(k-1)/k}
\le
\frac{8}{(\ln2)^2}x^{-1/k}.
\]
The resolution term satisfies
\[
2\cdot2^{-r/d}
\le
2^{1+1/d}(1+x)^{-1/k}
\le
4x^{-1/k}.
\]
The root term is also absorbed, since $1\le x\le N$.
For $x<1$, the same bound follows from
$W_1\le1\le x^{-1/k}$.
Consequently,
\[
\sup_{X\in S^n}\mathbb E W_1(\mu_X,\mu_Y)
\le
C J^{1/k}2^{(d-k)/d}
\left(\frac{d}{k-1}\right)^{2/k}
N^{-1/k}.
\]
Here $C$ is universal under the discrete-Laplace convention
used in the paper, for which $C_{\rm abs}=1$.
Since $2^{(d-k)/d}\le2$, absorbing this factor into $C$
proves the stated upper bound.

\paragraph{Lower bound.}
For every $0<t\le1$, the first component contains an
isometric copy of the grid
$\{0,t,\ldots,\lfloor1/t\rfloor t\}^k$.
Consequently,
\[
N_{\rm pack}(S,t)
\ge
(\lfloor1/t\rfloor+1)^k
\ge t^{-k}.
\]
Theorem~\ref{thm:packing-lower-bound}, applied with
$c_0=t_0=1$, therefore gives constants $c_k>0$ and
$n_0\in\mathbb N$ depending only on $k$ such that
\[
\sup_{X\in S^n}
\mathbb E W_1(\mu_X,\mu_{\mathcal M(X)})
\ge
c_k(\varepsilon n)^{-1/k}
\]
for every pure $\varepsilon$-DP synthetic-data mechanism
$\mathcal M:S^n\to\bigsqcup_{m\ge1}([0,1]^d)^m$
whenever $n\ge n_0$, $0<\varepsilon\le1$, and
$\varepsilon n\ge4$.
No separation between components is required.
Applying this lower bound to the public Pruned-PMM construction
above establishes the claimed tight rate for fixed $d,k,J$.
\end{proof}

%% file: app_experiment.tex
\section{Additional Experiments}
\label{app:experiments}

This appendix reports the experimental outputs omitted from the main text for
space. Unless stated otherwise, all experiments use pure-DP budget
\(\varepsilon=1\), the \(\ell_\infty\) ground metric, and the same subsampled
Wasserstein evaluation protocol as in the main text.

\paragraph{Adaptive depth selection.}
We first consider \(n=30000\) samples supported on a linear \(k\)-dimensional
coordinate subspace of \([0,1]^d\), for \(k\in\{1,3,5\}\), while the ambient
dimension \(d\) varies. Full PMM uses the ambient-depth rule, whereas Adaptive
Pruned-PMM privately selects a public depth--schedule pair from the standard
candidate set. Figure~\ref{fig:appendix-adaptive-k} shows that adaptation
improves empirical \(W_1\), especially when \(k\ll d\).

\paragraph{Full two-dimensional shape suite.}
We next compare full PMM and Pruned-PMM on the complete two-dimensional shape
suite, using \(n=30000\) and depth \(r=15\). This experiment isolates the effect
of pruning: Pruned-PMM should preserve PMM-like \(W_1\) accuracy while visiting
far fewer nodes and using less runtime. Figures~\ref{fig:appendix-exp1-gallery-a}
and~\ref{fig:appendix-exp1-gallery-b} show that the two outputs are visually
similar across manifold-like, multimodal, grid-like, and piecewise-structured
supports.

\begin{figure}[!htbp]
\centering
\begin{subfigure}[t]{0.31\linewidth}
    \centering
    \includegraphics[width=\linewidth]{figs/em1.png}
    \caption{\(k=1\)}
\end{subfigure}
\hfill
\begin{subfigure}[t]{0.31\linewidth}
    \centering
    \includegraphics[width=\linewidth]{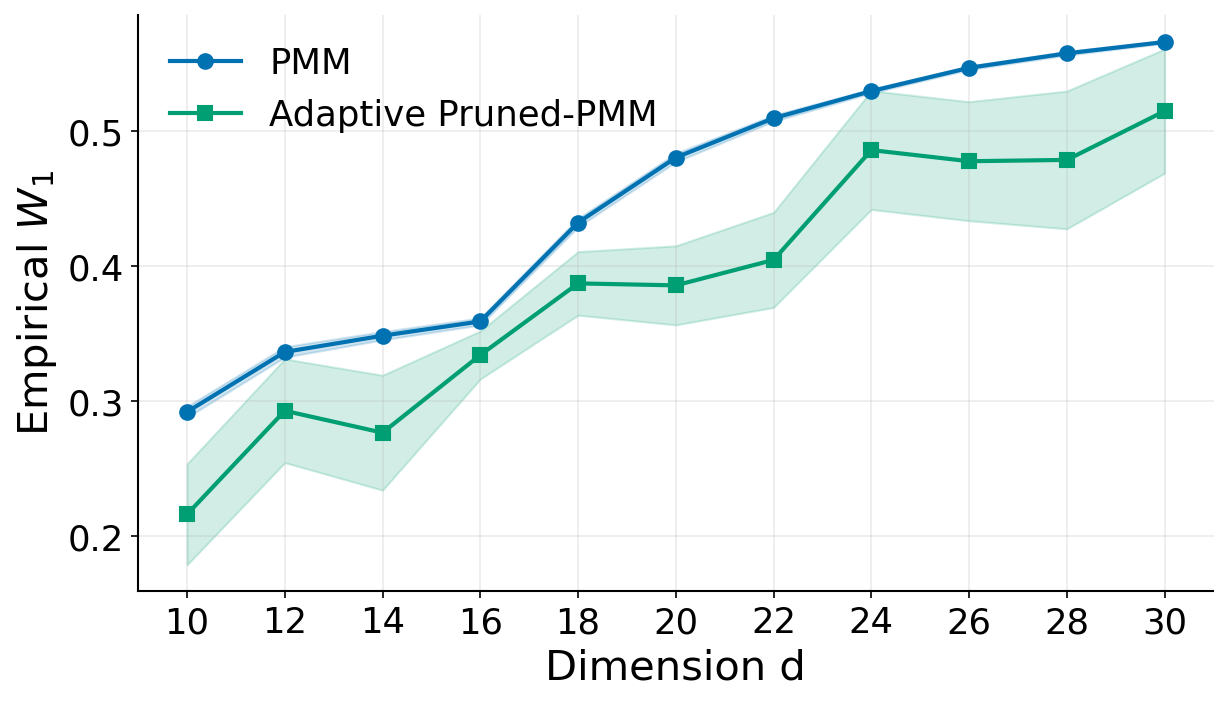}
    \caption{\(k=3\)}
\end{subfigure}
\hfill
\begin{subfigure}[t]{0.31\linewidth}
    \centering
    \includegraphics[width=\linewidth]{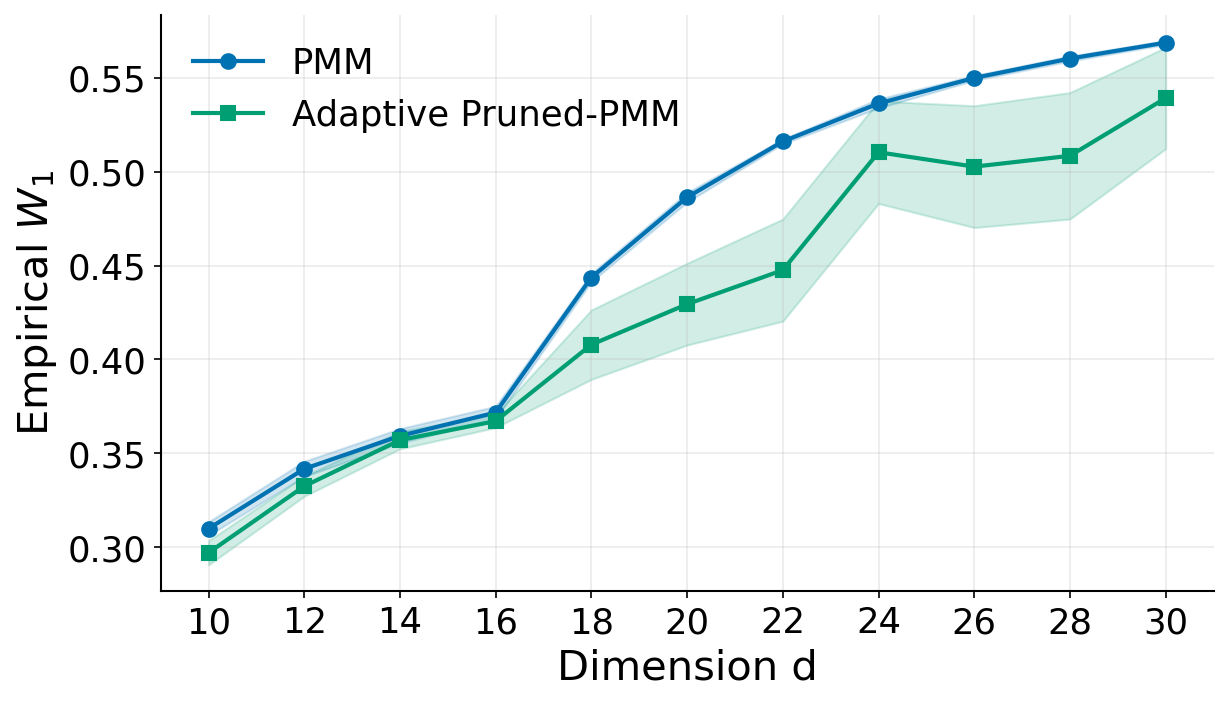}
    \caption{\(k=5\)}
\end{subfigure}
\vspace{-0.4em}
\caption{
Adaptive depth selection on \(n=30000\) samples from linear \(k\)-dimensional
coordinate subspaces of \([0,1]^d\).
}
\label{fig:appendix-adaptive-k}
\end{figure}

\begin{figure}[!htbp]
\centering
\begin{subfigure}[t]{0.31\linewidth}
    \centering
    \includegraphics[width=\linewidth]{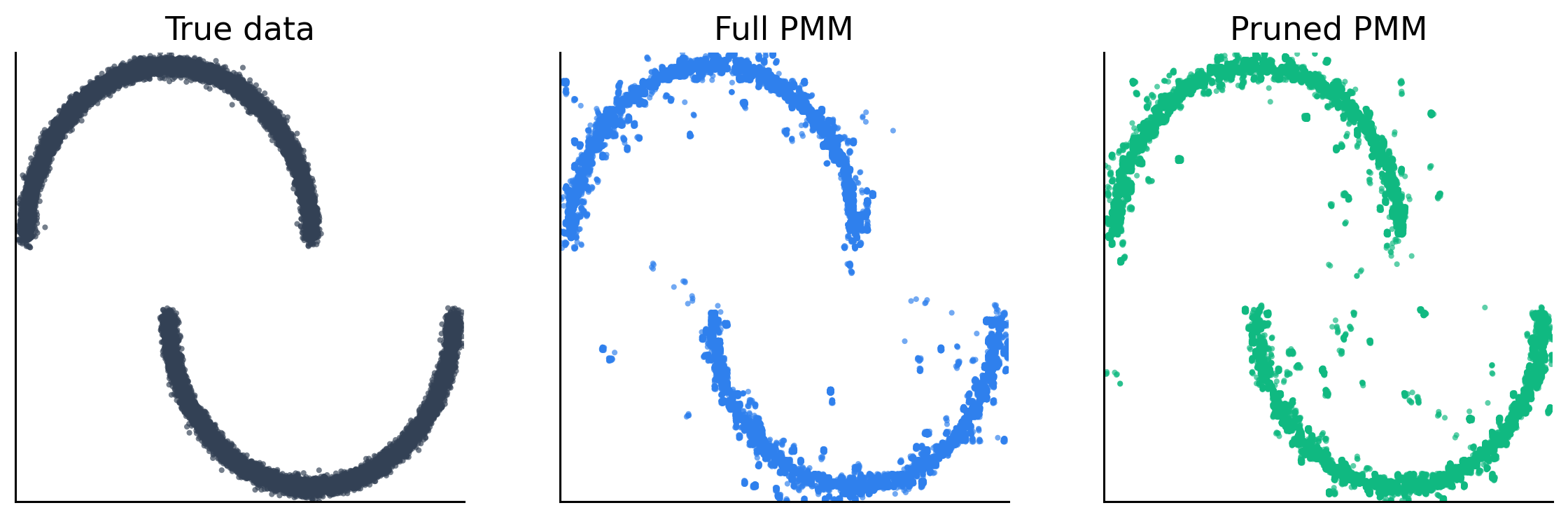}
    \caption{Two moons}
\end{subfigure}
\hfill
\begin{subfigure}[t]{0.31\linewidth}
    \centering
    \includegraphics[width=\linewidth]{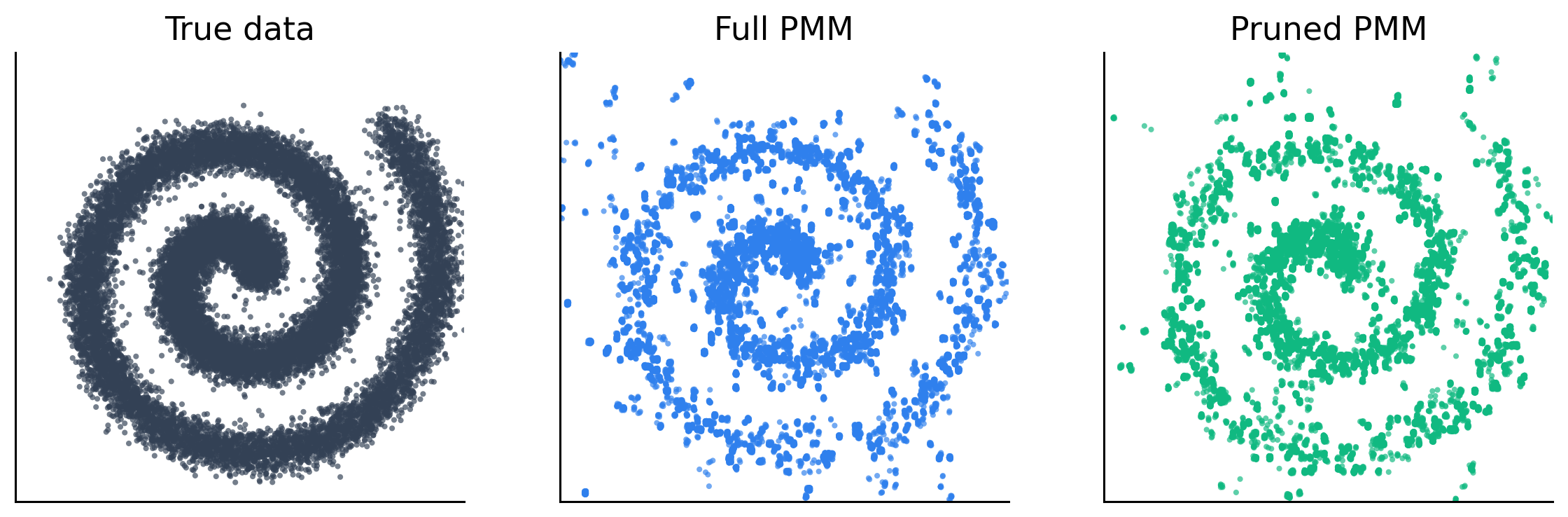}
    \caption{Spiral}
\end{subfigure}
\hfill
\begin{subfigure}[t]{0.31\linewidth}
    \centering
    \includegraphics[width=\linewidth]{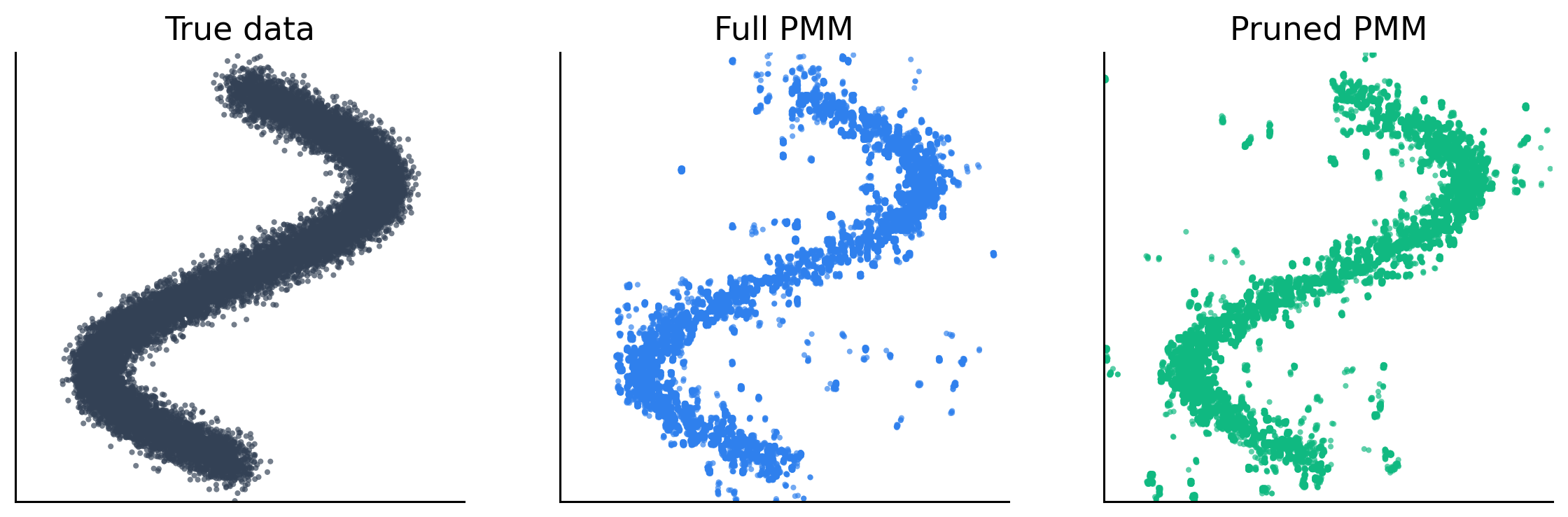}
    \caption{S-curve}
\end{subfigure}

\vspace{0.4em}

\begin{subfigure}[t]{0.31\linewidth}
    \centering
    \includegraphics[width=\linewidth]{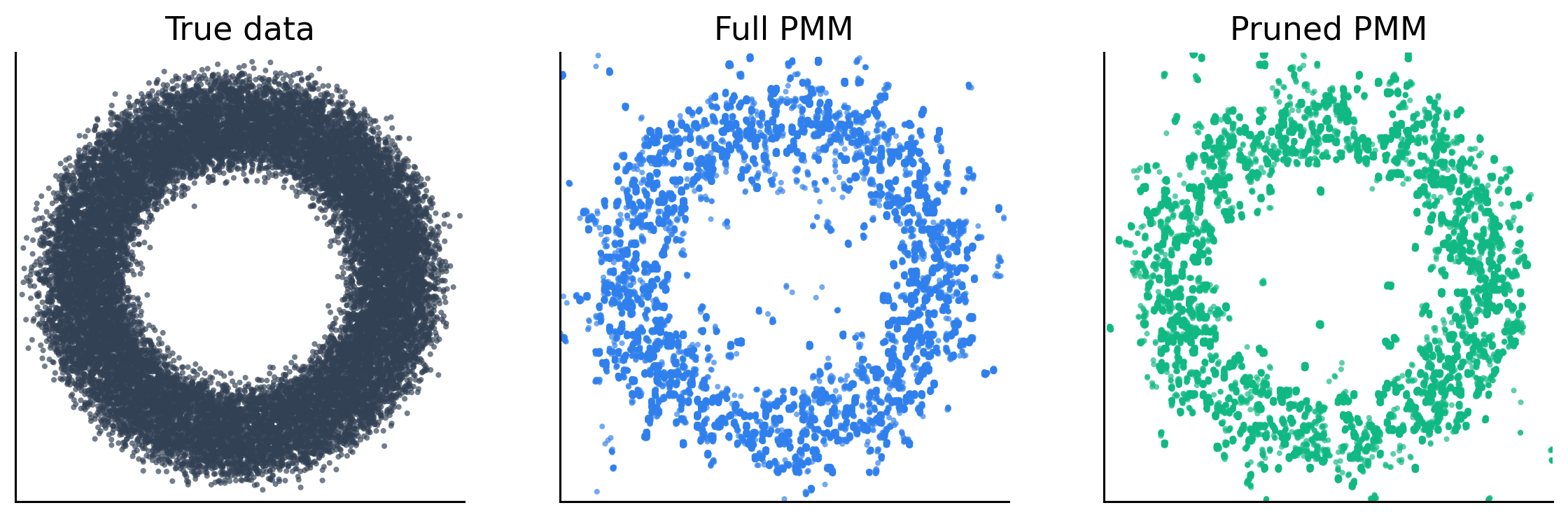}
    \caption{Annulus}
\end{subfigure}
\hfill
\begin{subfigure}[t]{0.31\linewidth}
    \centering
    \includegraphics[width=\linewidth]{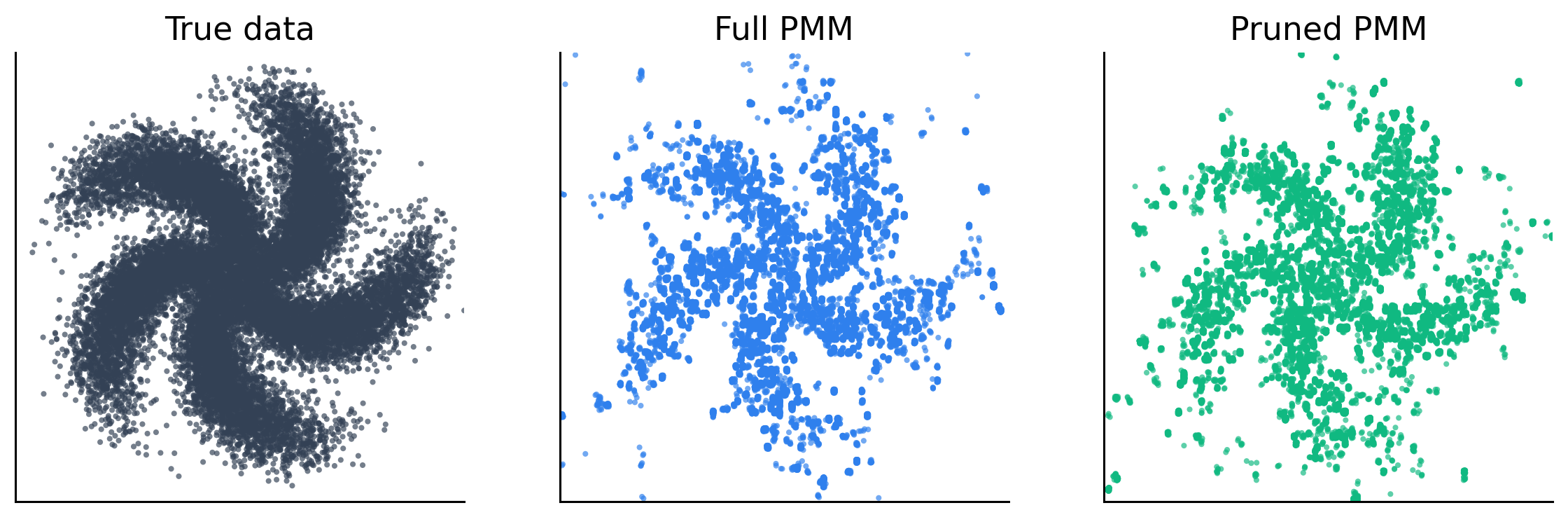}
    \caption{Pinwheel}
\end{subfigure}
\hfill
\begin{subfigure}[t]{0.31\linewidth}
    \centering
    \includegraphics[width=\linewidth]{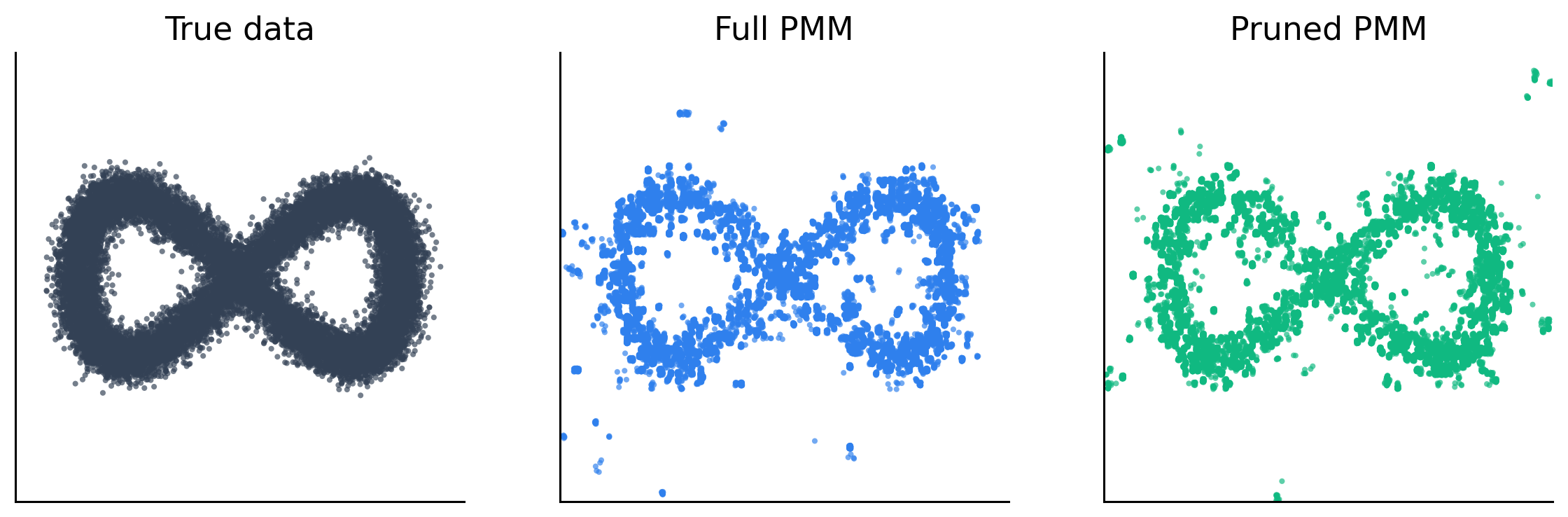}
    \caption{Figure eight}
\end{subfigure}
\vspace{-0.4em}
\caption{
Two-dimensional shape suite, part I. Pruned-PMM remains visually close to full
PMM on curved and manifold-like supports.
}
\label{fig:appendix-exp1-gallery-a}
\end{figure}

\begin{figure}[!htbp]
\centering
\begin{subfigure}[t]{0.31\linewidth}
    \centering
    \includegraphics[width=\linewidth]{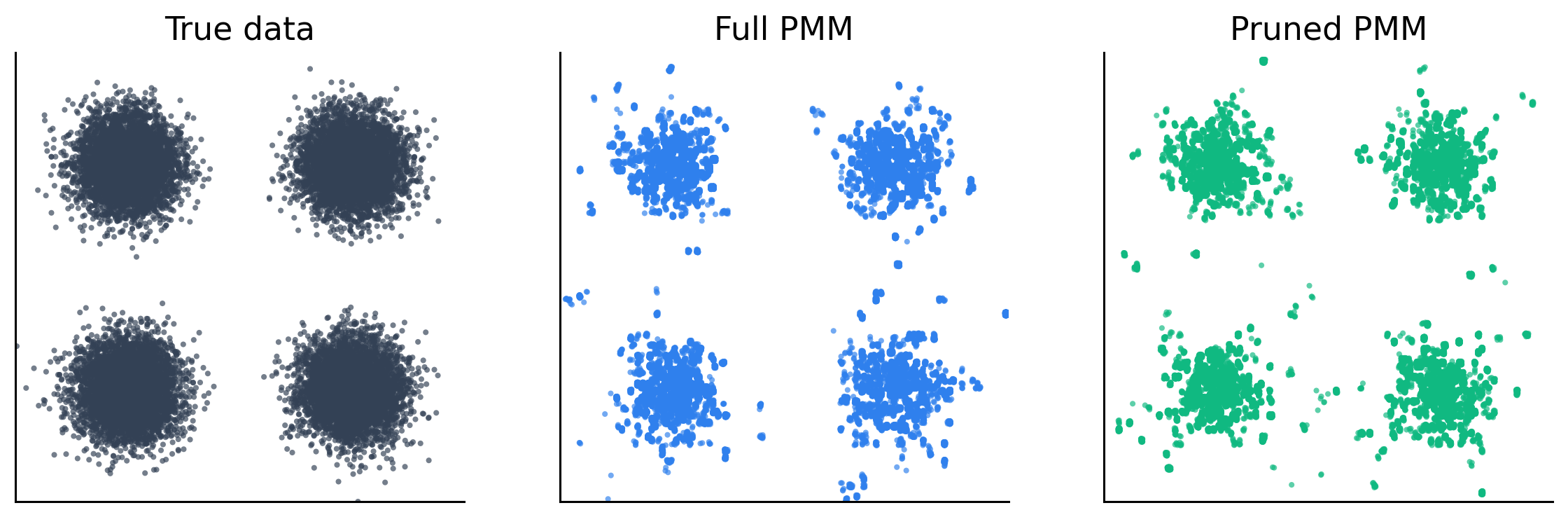}
    \caption{Four blobs}
\end{subfigure}
\hfill
\begin{subfigure}[t]{0.31\linewidth}
    \centering
    \includegraphics[width=\linewidth]{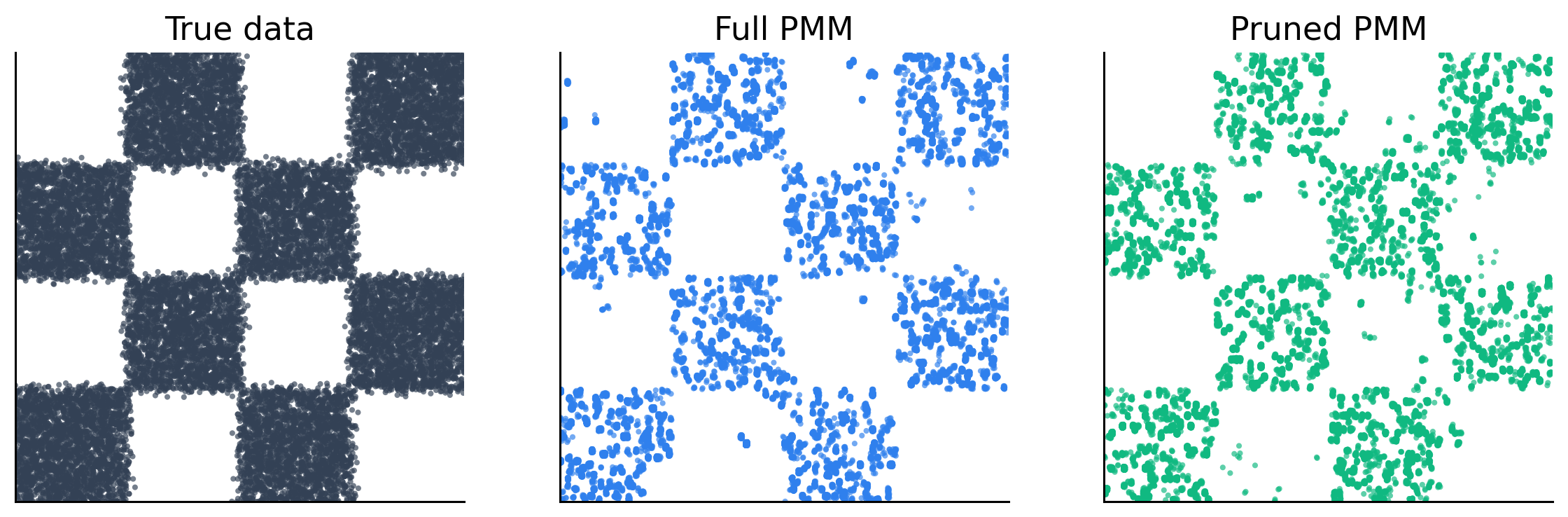}
    \caption{Checkerboard}
\end{subfigure}
\hfill
\begin{subfigure}[t]{0.31\linewidth}
    \centering
    \includegraphics[width=\linewidth]{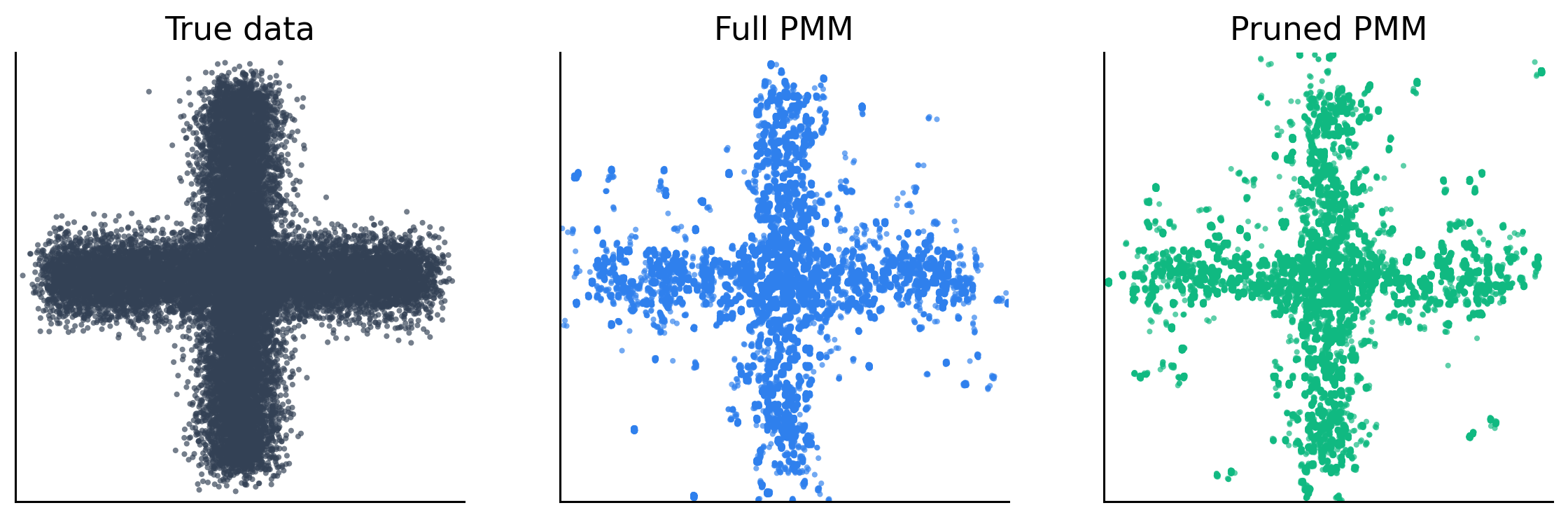}
    \caption{Cross}
\end{subfigure}
\vspace{-0.4em}
\caption{
Two-dimensional shape suite, part II. Pruned-PMM remains close to full PMM on
multimodal, grid-like, and piecewise-structured supports.
}
\label{fig:appendix-exp1-gallery-b}
\end{figure}